\newtheorem{theorem}{Theorem}[section]
\newtheorem*{claim*}{Claim}
\newtheorem{condition}[theorem]{Condition}
\newtheorem{example}[theorem]{Example}
\newtheorem{lemma}[theorem]{Lemma}
\newtheorem{problem}[theorem]{Problem}
\newtheorem{definition}[theorem]{Definition}
\newtheorem{remark}[theorem]{Remark}
\newtheorem*{remark*}{Remark}
\renewcommand{\emptyset}{\varnothing}
\renewcommand{\epsilon}{\varepsilon}
\newcommand{\E}[1]{\mathbb{E}\left[{#1}\right]}
\renewcommand{\E}[2][]{ \ifthenelse{\isempty{#1}}
  {\mathop{\mathbb{E}}\left[#2\right]} {\mathop{\mathbb{E}}_{#1}\left[#2\right]} }
\renewcommand{\Pr}[2][]{ \ifthenelse{\isempty{#1}}
  {\mathop{\mathbf{Pr}}\left[#2\right]} {\mathop{\mathbf{Pr}}_{#1}\left[#2\right]} }
\newcommand{\abs}[1]{\left\vert#1\right\vert}
\newcommand{\set}[1]{\left\{#1\right\}}
\newcommand{\defeq}{:=}
\newcommand{\DTV}[2]{d_{\mathrm{TV}}\left({#1},{#2}\right)}
\newcommand{\bl}{\mathcal{L}}
\newcommand{\slv}{\mathcal{S}}
\newcommand{\tlv}{\mathcal{T}}
\newcommand{\dlv}{\mathcal{D}}
\newcommand{\qlv}{\mathcal{Q}}
\newcommand{\conf}{\omega}
\newcommand{\uniform}{\mathop{\mathrm{Uniform}}}
\algnewcommand{\IfThen}[2]{\textbf{if}~#1~\textbf{then}~#2}
\newcommand{\update}{\textsc{Update}}
\newcommand{\seeding}{\textsc{Seeding}}
\newcommand{\compress}{\textsc{Compress}}
\newcommand{\disjoint}{\textsc{Disjoint}}
\title{Tight Bounds for Sampling $q$-Colorings via Coupling from the Past}
\author{Tianxing Ding \quad Hongyang Liu \quad Yitong Yin \quad Can Zhou}
\thanks{State Key Laboratory for Novel Software Technology, New Cornerstone Science Laboratory, Nanjing University, 163 Xianlin Avenue, Nanjing, Jiangsu Province, China. Emails: \url{652024330006@smail.nju.edu.cn},\url{liuhongyang@smail.nju.edu.cn},\url{yinyt@nju.edu.cn},\url{502024330075@smail.nju.edu.cn}}
\begin{document}

\begin{abstract}
The Coupling from the Past (CFTP) paradigm is a canonical method for perfect Sampling.
For uniform sampling of proper $q$-colorings in graphs with maximum degree $\Delta$, the bounding chains of [Huber, STOC’98] provide a systematic framework for efficiently implementing CFTP algorithms within the classical regime $q\ge (1+o(1))\Delta^2$.
This was subsequently improved to $q > 3\Delta$ by [Bhandari and Chakraborty, STOC’20] and to $q \ge (\tfrac{8}{3} + o(1))\Delta$ by [Jain, Sah, and Sawhney, STOC’21].

In this work, we establish the asymptotically tight threshold for bounding-chain–based CFTP algorithms for graph colorings.
We prove a lower bound showing that all such algorithms satisfying the standard contraction property require $q \ge 2.5\Delta$, 
and we present an efficient CFTP algorithm that achieves this asymptotically optimal threshold $q \ge (2.5 + o(1))\Delta$ via an optimal design of bounding chains.
\end{abstract}

\maketitle
\vspace{-1cm}
\section{Introduction} 
\label{sec:intro}

Uniform sampling of graph colorings is a fundamental problem that has attracted extensive attention in combinatorics, probability theory, and theoretical computer science.
Given a graph $G = (V, E)$ and an integer $q \geq 1$, a \emph{proper $q$-coloring} (or simply a \emph{$q$-coloring}) is a mapping $\conf: V \to [q]$ such that $\conf(u) \neq \conf(v)$ for every edge $(u, v) \in E$. 
The set of all proper $q$-colorings of $G$ is 
$$
\Omega = \set{\conf: V \to [q] \mid \forall (u,v) \in E,\ \conf(u) \neq \conf(v)}.
$$
Let $\Delta = \Delta(G)$ denote the maximum degree of $G$. 
It is well known that the condition $q \geq \Delta + 1$ guarantees the existence of a proper $q$-coloring, i.e., $\Omega\neq\emptyset$.


A canonical approach to sampling from $\Omega$ is the \emph{Markov chain Monte Carlo} (MCMC) method, 
in which a Markov chain $(X_t)$ is designed to mix rapidly to the uniform distribution $\pi$ over $\Omega$.
Starting from an arbitrary initial state $X_0\in\Omega$, simulating the chain for a sufficiently long time $T$ yields a sample $X_T$ approximately distributed as $\pi$.
A well studied chain is the \emph{Glauber dynamics}, which at each step selects a vertex uniformly at random and recolors it with a color not currently used by its neighbors.
When $q \ge \Delta+2$, the chain is ergodic and converges to the uniform distribution $\pi$ over $\Omega$.
Determining the critical threshold on $q$ for rapid mixing is a long-standing open problem.
%
%
Over the past decades, a sequence of works has progressively improved this threshold, 
from the classical bound $q > 2\Delta$~\cite{jerrum1995very,salas1997absence,bubley1997path} 
to the current best bound $q > 1.809\Delta$~\cite{vigoda1999improved,chen2019improved,carlson2025flip}.

\subsection{Coupling from the past and grand coupling}
\label{subsec:cftp-grand-coupling}

The \emph{coupling from the past} (CFTP) technique, introduced by Propp and Wilson~\cite{propp1996exact}, provides a general framework for the \emph{perfect simulation} of the stationary distribution of a Markov chain.
%
The key idea is to consider an idealized Markov chain that has been running from time $-\infty$ up to time $0$, so that its final state $X_0\sim\pi$ is an exact (perfect) sample from the stationary distribution~$\pi$.
To generate such a sample, the CFTP algorithm searches for a (possibly random) time $-T$ in the past such that, 
when the chain is simulated forward from every possible initial state at time $-T$ using the same random bits, all trajectories
\[
X_{-T}\to X_{-T+1}\to \cdots \to X_{-1}\to X_0
\]
coalesce into the same state at time~$0$.
Once such a coalescence time is identified, the final state $X_0\sim\pi$ is completely determined by the random bits used in the simulation, independent of the initial state at time~$-T$. Consequently, $X_0$ is a perfect sample from the stationary distribution $\pi$.


At the heart of CFTP lies the notion of a \emph{grand coupling},
which provides a unified probabilistic construction that simultaneously drives the evolution of the chain from all possible initial states.
Formally, for a Markov chain with transition kernel $P(x,y) = \Pr{X_{t+1}=y \mid X_t=x}$ on state space $\Omega$, a grand coupling is a mapping
$g : \Omega \times [0,1] \to \Omega$,
such that for $U \sim \uniform[0,1]$ and every $x,y \in \Omega$,
\begin{align}\label{eq:grand-coupling-intro}
  \Pr{g(x,U)=y}=P(x,y).   
\end{align}
In other words, $g(\cdot,U)$ simulates a single Markov chain transition under shared randomness $U$, thereby coupling all trajectories in a common probability space.
This notion extends naturally to the $T$-step transitions, represented by a grand coupling  
$F : \Omega \times [0,1] \to \Omega$
for the evolution $X_{-T} \to X_0$.
The CFTP algorithm identifies a time $-T$ in the past and terminates once $F(\cdot, U)$ becomes constant,
that is, when all trajectories coalesce into a single state at time~$0$.


Designing grand couplings that achieve fast coalescence is therefore the central problem in CFTP.
Beyond CFTP, grand couplings have found applications in a variety of contexts, including derandomization of MCMC~\cite{feng2023towards}, analytic stability (absence of complex zeros)~\cite{liu2025phase}, and local algorithms for sampling~\cite{liu2025local}.
Despite being a natural and fundamental concept, their systematic design and analysis remain poorly understood.
While fast-coalescing grand couplings immediately imply rapid mixing of the underlying Markov chain, 
essentially nothing is known about the converse in general.
This raises a fundamental question:
\begin{center}
    \emph{Does fast coalescence of grand couplings require stronger conditions than rapid mixing?}
\end{center}

\subsection{Bounding chains for \texorpdfstring{$q$}{q}-colorings}
\label{subsec:local-grand-coupling}

A key algorithmic challenge in implementing CFTP lies in detecting coalescence,
that is, determining whether the induced random mapping on the state space $\Omega$ has become a constant function.
Formally, let $\Omega_t\subseteq\Omega$ denote the set of all possible states of $X_t$ that can be reached from different initial configurations $X_{-T}\in\Omega$ under the same randomness.
Specifically, 
\[
\Omega_t \defeq \{\, g_t \circ g_{t-1} \circ \cdots \circ g_{-T}(\conf) : \conf \in \Omega \,\},
\]
where each $g_t=g_t(\cdot, U)$ denotes the grand coupling for the transition $X_t\to X_{t+1}$ with  $U\sim \uniform[0,1]$. 
Initially, $\Omega_{-T}=\Omega$, and clearly $\abs{\Omega_{t+1}}\le \abs{\Omega_t}$ since $g_t$ is a function. 
Coalescence occurs once $\abs{\Omega_t}=1$ for some $t\le 0$.
In certain special cases, most notably in monotone systems such as the ferromagnetic Ising model, where a natural partial order on $\Omega$ is preserved by the grand coupling, checking coalescence is straightforward. 
In contrast, for general systems, verifying coalescence is substantially more intricate and can be computationally intractable.

The \emph{bounding chains} method, introduced by Huber~\cite{huber1998exact,huber2004perfect}, provides an efficient way to detect coalescence by tracking a ``bounding box'' of the subset $\Omega_t$ in the space $\Omega\subseteq[q]^V$.
The bounding chain maintains, for each vertex $v \in V$, a \emph{bounding list} $\mathcal{L}_t(v) \subseteq [q]$ of colors, which contains all colors that vertex $v$ may take under any configuration in $\Omega_t$.  
The corresponding bounding box is given by
\[
\mathcal{B}_t \defeq \bigotimes_{v \in V} \mathcal{L}_t(v).
\]
Initially, we set $\mathcal{L}_{-T}(v) = [q]$ for every vertex $v$.  
The bounding list configuration $\mathcal{L}_t=(\mathcal{L}_t(v))_{v\in V}$ evolves as a Markov chain:
at each time step $t\ge -T$, the next bounding list $\mathcal{L}_{t+1}$ is constructed from the current list $\mathcal{L}_{t}$, under the conservative assumption that every configuration $\conf\in\mathcal{B}_t$ may appear at time~$t$.
%
By induction, we have $\Omega_t\subseteq\mathcal{B}_t$ for any $t\ge -T$.
Thus, coalescence at time $t$ occurs if the bounding box has collapsed to the size $\abs{\mathcal{B}_t}=1$, or equivalently, when $\abs{\mathcal{L}_t(v)}=1$ for all $v\in V$.

The fast coalescence of bounding chains relies on the following standard contraction property.

\begin{condition}[Contraction of bounding chains]
\label{cond:bounding-chain-overall}
Let $(\mathcal{L}_t)$ be a bounding chain. At any time $t$, it holds
\begin{align}
\label{eq:potential-function-decay}
\mathbb{E}\!\left[\,\sum_{v \in V} \abs{\mathcal{L}_{t+1}(v)} \,\middle|\, \mathcal{L}_{t}\right]
\leq \sum_{v \in V} \abs{\mathcal{L}_t(v)}.
\end{align}
\end{condition}

Intuitively, this condition requires the total size of the bounding lists to be monotonically non-increasing in expectation at each step, even in the worst case.
All existing CFTP algorithms based on bounding chains satisfy this condition, including~\cite{huber1998exact,bhandari2020improved,jain2021perfectly}.
The seminal work of Huber~\cite{huber1998exact} proposed a bounding chain algorithm for uniform sampling of $q$-coloring that is fast coalescing under the condition $q \ge (1+o(1))\Delta^2$.  
For many years, this remained the best achievable bound for any CFTP algorithm for $q$-coloring, until recent breakthroughs~\cite{bhandari2020improved, jain2021perfectly} improved the condition for fast-coalescing bounding chains first down to $q >3\Delta$ and later to $q \ge  (\frac{8}{3}+o(1))\Delta$.

\subsection{Our results}
We establish an asymptotically optimal threshold of $q = 2.5\Delta$ for CFTP algorithms based on bounding chains for proper $q$-colorings.  
In particular, we design an efficient CFTP algorithm that samples uniform proper $q$-colorings under the improved condition
 $q \ge (2.5 + o(1))\Delta$.



\begin{theorem}[Upper bound]
\label{thm:main}
There exists a CFTP algorithm that, given an undirected graph $G = (V, E)$ with maximum degree $\Delta \ge 3$ and
$q > 2.5\Delta + 2\sqrt{(\log \Delta + 1)\Delta}$,
outputs a uniformly random proper $q$-coloring of $G$ in expected time 
$\tilde{O}(n\Delta^2)$,
where $\tilde{O}(\cdot)$ hides poly-logarithmic factors.
\end{theorem}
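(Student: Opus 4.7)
The plan is to construct an explicit bounding chain $(\mathcal{L}_t)$ whose single-site update achieves a strict form of the contraction condition \cref{cond:bounding-chain-overall} at $q \ge 2.5\Delta + 2\sqrt{(\log\Delta+1)\Delta}$, and then invoke the standard CFTP wrapper to extract a perfect sample. The proof has three parts: (i) the design of an update rule that is simultaneously a valid grand coupling for a single-site (Glauber-type) dynamics on $\Omega$ and a bounding chain for it; (ii) a contraction analysis strengthening \cref{cond:bounding-chain-overall} to a geometric decay
\[
\mathbb{E}\Bigl[\sum_{v \in V}\abs{\mathcal{L}_{t+1}(v)} \,\Big|\, \mathcal{L}_t\Bigr]
\le \Bigl(1-\Omega(1/n)\Bigr)\sum_{v \in V}\abs{\mathcal{L}_t(v)};
\]
and (iii) a coalescence-time bound combined with an $O(\Delta^2)$-per-step implementation.

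For (i), I would use a multi-phase update built from the \seeding, \disjoint, and \compress primitives suggested by the preamble. A \seeding step flips biased coins to propose a candidate color at the vertex $v$ being updated, effectively partitioning $[q]$ into a ``safe'' region disjoint from $\bigcup_{u\sim v}\mathcal{L}_t(u)$ and an ``unsafe'' region covered by the neighbors' lists. A \disjoint step then decides whether the proposal survives against those neighbor lists, and a \compress step sets $\mathcal{L}_{t+1}(v)$ to the smallest set of colors guaranteed to contain the true single-site update of every $\conf \in \mathcal{B}_t$ under the same randomness. Verifying the grand-coupling property then reduces to checking that for each fixed realization of the coins, every new color $\conf'(v)$ reachable from some $\conf \in \mathcal{B}_t$ lies in $\mathcal{L}_{t+1}(v)$.

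The main obstacle is the contraction analysis in (ii). The per-step change of $\abs{\mathcal{L}_{t+1}(v)}$ is a combinatorial functional of the multiset $(\abs{\mathcal{L}_t(u)})_{u\in N(v)}$ together with their overlap pattern; to push the threshold down from the $\tfrac{8}{3}\Delta$ of~\cite{jain2021perfectly} to $2.5\Delta$, one must identify the worst-case neighborhood profile and tune the \seeding biases so that precisely at $q\approx 2.5\Delta$ the expected decrease becomes nonnegative. I expect the optimum to be attained by partitioning the unsafe-color mass into ``singly covered'' and ``multiply covered'' slots, leveraging the fact that a color appearing in more than one neighbor list is inherently easier to eliminate from $\mathcal{L}(v)$; this is exactly the slack that previous analyses failed to exploit, and it is also what the matching $2.5\Delta$ lower bound should show is fully exhausted. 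The additive $2\sqrt{(\log\Delta+1)\Delta}$ correction likely absorbs Hoeffding/Azuma concentration losses when the random overlap profile is replaced by its expectation, together with a union bound over the $\Delta$ neighbors of $v$.

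Given per-step contraction, coalescence follows by a short martingale argument: the potential $\Phi_t = \sum_v \abs{\mathcal{L}_t(v)}$ starts at $qn$ and decays geometrically, and as soon as its expectation drops below $n+1$, integrality forces $\abs{\mathcal{L}_t(v)}=1$ for every $v$, i.e.\ coalescence. By Markov's inequality this happens within $\tilde{O}(n)$ single-site steps in expectation; since each step must read $\Delta$ neighbor lists of size $O(\Delta)$, one step costs $O(\Delta^2)$, yielding the claimed $\tilde{O}(n\Delta^2)$ expected runtime. Wrapping this in the doubling-interval CFTP schedule of Propp and Wilson contributes only an $O(\log)$ factor that is absorbed into the $\tilde{O}(\cdot)$ notation.
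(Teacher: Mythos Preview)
Your plan has a genuine gap: the single global contraction inequality you aim for in step (ii),
\[
\mathbb{E}\Bigl[\sum_{v}|\mathcal{L}_{t+1}(v)| \,\Big|\, \mathcal{L}_t\Bigr]
\le \Bigl(1-\Omega(1/n)\Bigr)\sum_{v}|\mathcal{L}_t(v)|,
\]
cannot hold for \emph{every} bounding-list configuration $\mathcal{L}_t$ near $q\approx 2.5\Delta$. This is precisely what the paper's own lower bound (\Cref{thm:no-bounding-chain}/\Cref{thm:no-coupling-2}) establishes: there is a $\Delta$-regular graph and a configuration with all $|\mathcal{L}(v)|=2$ on which, for \emph{any} local grand coupling, the expected new list size at the updated vertex exceeds $2$ whenever $q<2.5\Delta-1$. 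At $q=2.5\Delta+o(\Delta)$ this worst case gives drift essentially zero, not the strict $-\Omega(1/n)$ you need. So the uniform one-step contraction you posit is exactly the barrier the threshold identifies, and no choice of \seeding{} biases can rescue it.

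The paper does not attempt a single contraction inequality. Instead it first computes, via the algorithmic Lov\'asz Local Lemma, a \emph{seeding set} $S\subseteq V$ such that every vertex has at most $\tfrac12\Delta$ neighbours in $S$ and at most $(\tfrac12+\eta)\Delta$ neighbours outside, where $\eta=2\sqrt{(\log\Delta+1)/\Delta}$ comes from a Chernoff bound in the LLL resampling. This is the true origin of the additive $2\sqrt{(\log\Delta+1)\Delta}$ term, not a Hoeffding slack in the drift analysis as you guessed. The algorithm then runs in two phases with \emph{different} couplings and \emph{different} potentials: Phase~I applies \compress{} (to non-$S$ neighbours) followed by \seeding{} only on vertices of $S$, exploiting the halved effective degree to drive $|\mathcal{L}(v)|\le 2$ on $S$; Phase~II then applies \compress{}+\disjoint{} on $V\setminus S$ and finally \disjoint{} globally, using the $|\mathcal{L}(u)|\le 2$ hypothesis that \disjoint{} requires. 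Each phase has its own drift lemma on a phase-specific potential (number of size-$3$ lists in $S$, then number of size-$2$ lists in $V$), and coalescence is obtained with constant probability per block, not via a single supermartingale on $\sum_v|\mathcal{L}_t(v)|$. Your proposal is missing the vertex partition entirely and with it the mechanism that actually breaks the $2.5\Delta$ worst case.
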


The algorithm in \Cref{thm:main} is based on a new construction of bounding chains for the Glauber dynamics on uniform proper $q$-colorings, which coalesce rapidly using substantially fewer colors.

This result advances the state of the art for CFTP algorithms and, more broadly, for all perfect sampling algorithms for $q$-colorings that achieve exponential convergence.

\begin{remark}[Perfect sampling via deterministic counting or approximate sampling]
\label{remark:discuss-on-GLMP}
Perfect sampling can also be achieved via generic reductions to deterministic approximate counting or approximate sampling.  
In the classical work of Jerrum, Valiant, and Vazirani~\cite{jerrum1986random}, a general reduction from perfect sampling to deterministic approximate counting (FPTAS) was established using a rejection-sampling framework.  
More recently, G\"{o}bel, Liu, Manurangsi, and Pappik~\cite{Gobel2024Perfect} presented a reduction from perfect sampling to rapidly mixing Markov chains: 
once the chain has mixed sufficiently well (so that the total variation error is exponentially small), a JVV-style filtering step (implemented via exponentially exhaustive enumeration) is applied with exponentially small probability, keeping the overall expected runtime polynomial.

Through these reductions, the known thresholds for polynomial-time perfect sampling of $q$-colorings have been improved to $q > 2\Delta$ via deterministic approximate counting~\cite{liu2019deterministic} and to $q > 1.809\,\Delta$ via rapidly mixing Markov chains~\cite{carlson2025flip}.

Despite these advances, the coupling-from-the-past (CFTP) paradigm remains of great theoretical and practical significance. 
It provides a systematic framework for designing Las Vegas samplers with \emph{exponentially convergent} runtime.
Specifically, letting $T$ denote the random runtime, we have
\[
\Pr{T\ge k\cdot \mathbb{E}[T]}\le \exp(-\Omega(k)),\quad\text{for all }k\ge 1.
\]
where in our case the expected runtime satisfies $\mathbb{E}[T] = \tilde{O}(n\Delta^2)$.
This sharp concentration arises from the efficient \emph{certificate} of perfect sampling inherently provided by CFTP through coalescence, a key advantage over the certificates in~\cite{jerrum1986random,Gobel2024Perfect}, which rely on deterministic counting.

Beyond its role in perfect sampling, the grand coupling is a natural and fundamental construct for Markov chains, 
and has found implications in several key aspects, including studies of derandomization~\cite{feng2023towards}, phase transitions~\cite{liu2025phase}, and local algorithms for sampling~\cite{liu2025local}.
It is therefore of central theoretical interest to understand the algorithmic power of grand couplings.
\end{remark}



We show that the upper bound in \Cref{thm:main} is asymptotically tight by establishing a matching lower bound for bounding-chain-based grand couplings satisfying the contraction property in \Cref{cond:bounding-chain-overall}.

\begin{theorem}[Lower bound]\label{thm:no-bounding-chain}
Let $\Delta \ge 3$, $q < 2.5\Delta-1$, and let $n$ be sufficiently large.  
There exists a $\Delta$-regular graph $G = (V, E)$ with $|V| = n$, together with a bounding list configuration $\mathcal{L} = (\mathcal{L}(v))_{v \in V}$, such that for any bounding chain for the Glauber dynamics on proper $q$-colorings, the contraction property~\eqref{eq:potential-function-decay} in \Cref{cond:bounding-chain-overall} is violated when the bounding chain is in state $\mathcal{L}_t = \mathcal{L}$.
%
\end{theorem}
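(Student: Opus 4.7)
My plan is to reduce contraction failure to a local per-vertex condition, then exhibit a $\Delta$-regular graph together with a bounding-list configuration $\mathcal{L}$ on which this local condition is violated for every coupling. Since Glauber dynamics modifies only the chosen vertex, the contraction condition \eqref{eq:potential-function-decay} averaged over the choice of vertex is equivalent to $\sum_v \Phi_v(\mathcal{L}) \leq \sum_v |\mathcal{L}(v)|$, where $\Phi_v(\mathcal{L})$ denotes the minimum of $\mathbb{E}[|\mathcal{L}_{t+1}(v)|]$ over all valid grand couplings realizing the Glauber marginal on every $\omega\in\mathcal{B}_t\cap\Omega$. It therefore suffices to design $\mathcal{L}$ with $\sum_v \Phi_v(\mathcal{L}) > \sum_v |\mathcal{L}(v)|$ whenever $q<2.5\Delta-1$.

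To lower-bound $\Phi_v$, observe that any valid coupling must produce, for each $\omega\in\mathcal{B}_t\cap\Omega$ and each realization $U$ of the randomness, a new color $c(\omega,U)\in \mathcal{L}_{t+1}(v)$ whose marginal is $\mu_\omega(c) = \mathbf{1}[c\notin N(v;\omega)]/(q-|N(v;\omega)|)$, where $N(v;\omega)$ denotes the set of colors of $v$'s neighbors under $\omega$. Consequently $\mathcal{L}_{t+1}(v)\supseteq \{c(\omega,U):\omega\in \mathcal{B}_t\cap\Omega\}$ almost surely, giving
\[
\Phi_v(\mathcal{L}) \;\geq\; \min_{\textnormal{coupling}} \mathbb{E}\bigl[|\{c(\omega,U):\omega\in\mathcal{B}_t\cap\Omega\}|\bigr] \;\geq\; \sum_{c\in[q]}\max_{\omega\in\mathcal{B}_t\cap\Omega} \mu_\omega(c),
\]
with the second bound sharpened by LP-duality in configurations where the pointwise max-couplings cannot be realized jointly across all colors.

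The construction is a $\Delta$-regular graph built from vertices of small list sizes (typically $1$ and $2$) whose neighborhoods are designed so that every vertex contributes favorably to $\sum_v (\Phi_v-|\mathcal{L}(v)|)$. For singleton vertices $v$ surrounded by pair neighbors with disjoint pair-colors avoiding $v$'s color, one gets $|F_v|=2\Delta$ with no forced color and hence $\Phi_v\geq q/(q-\Delta)$, contributing a local gain $\Delta/(q-\Delta)\approx 2/3$ at $q\approx 2.5\Delta$. The pair vertices are positioned so that their own neighborhood is as flexible as possible, keeping $\Phi_u$ large enough that the local loss $|\mathcal{L}(u)|-\Phi_u$ stays around $1/3+o(1)$. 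Combining the gains and losses via a careful density argument then yields net positive change at every $q<2.5\Delta-1$.

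The main obstacle is overcoming the $2\Delta$ ceiling imposed by the simplest bipartite singleton-vs-pair template. In that template, edge counting forces each pair vertex to receive all $\Delta$ of its neighbors from the singleton side, collapsing $\Phi_u$ to the trivial value $1$ and pinning the net per-vertex change at $(1-\alpha)(2\Delta-q)/(q-\Delta)$, positive only for $q<2\Delta$. Breaking this barrier requires either a richer graph structure in which pair vertices predominantly neighbor other pair vertices (so that $\Phi_u$ stays close to $q/(q-\Delta)$ rather than collapsing to $1$) or exploiting the LP-sharpening of the $\Phi_v$ lower bound in regimes where the pointwise max-coupling is not simultaneously realizable across all colors. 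Verifying that such a gadget is $\Delta$-regular, that the lower bound on $\Phi_v$ is matched by some coupling (so that no clever bounding-chain rule can evade it), and that the possibly larger class of improper configurations in $\mathcal{B}_t\setminus\Omega$ does not subvert the argument, is the technical heart of the proof.
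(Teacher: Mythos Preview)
Your high-level reduction to a per-vertex condition is correct, and your instinct that size-2 lists are the relevant obstruction is right. But the proposal has a genuine gap that leaves you stuck at $q<2\Delta$, as you yourself note.

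The concrete problem is twofold. First, the lower bound $\Phi_v \geq \sum_{c}\max_{\omega}\mu_\omega(c)$ is too weak: on any configuration in which every neighbor list has size~2, this bound evaluates to at most $1 + 2\Delta/(q-\Delta+1)$ in the best case, which exceeds~2 only when $q<2\Delta$. The ``LP-sharpening'' you allude to is not specified, and this is precisely where the crucial idea lives. Second, your construction mixes singleton and pair lists, whereas the paper uses \emph{only} size-2 lists at every vertex; the singleton vertices in your template are what force the collapse to the $2\Delta$ threshold.

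What the paper does instead: every neighbor of $v$ has $|\mathcal{L}(u)|=2$, and the neighbors are grouped into $m=\Delta/2$ pairs whose lists overlap in a \emph{triangle} pattern $\{3i-2,3i-1\}$ and $\{3i-1,3i\}$. The additional inequality that breaks the $2\Delta$ barrier is a \emph{covering property}: for every seed $U$, the output set $\mathcal{L}_U(v)$ must either contain some free color or contain \emph{both} of $\{3i-2,3i\}$ for some $i$ (otherwise one can choose $\omega$ so that $\mathcal{L}_U(v)\subseteq\omega$, contradicting validity of the coupling). Summing $\Pr[3i-2,3i\in\mathcal{L}_U(v)]$ over $i$ plus $\Pr[\text{free color in }\mathcal{L}_U(v)]$ is therefore at least~1, and adding this to the marginal bounds $\Pr[3i-1\in\mathcal{L}_U(v)]\geq 1/(q-\Delta)$ and $\Pr[\{3i-2,3i\}\cap\mathcal{L}_U(v)\neq\emptyset]\geq 2/(q-\Delta+1)$ gives $\mathbb{E}|\mathcal{L}_U(v)| \geq 1 + m/(q-\Delta) + 2m/(q-\Delta+1) > 2$ whenever $q<2.5\Delta-1$. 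The global $\Delta$-regular graph is then just $K_{\Delta,\Delta}$ with this triangle pattern on each side, so every vertex sees the worst-case neighborhood simultaneously.

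Your proposal does not identify the overlapping-pair (triangle) structure, nor the covering argument that yields the extra $+1$; without these two ingredients the plan cannot reach $2.5\Delta$.
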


Together, our results establish a tight bound for CFTP for $q$-colorings based on bounding chains.
They show that contractive bounding chains require a significantly stronger condition on the number of colors $q$ than is needed for rapid mixing of the underlying Markov chain.
In particular, the critical threshold $q>2.5\Delta$ identified here is substantially stronger than the classical Dobrushin condition for $q$-colorings, which holds when $q>2\Delta$ and corresponds to the worst-case contraction condition in path coupling~\cite{jerrum1995very,salas1997absence,bubley1997path}, thereby guaranteeing rapid mixing.

\begin{remark}
In~\cite{jain2021perfectly}, Jain, Sah, and Sawhney also identified $q = 2.5\Delta$ as a natural barrier for their bounding chain construction, reflecting a limitation of the algorithmic framework adopted in~\cite{bhandari2020improved,jain2021perfectly}.
In contrast, our lower bound shows that this threshold is inherent to all contractive bounding chains: no grand coupling based on such constructions can surpass it.
Both our general lower bound and the optimal bounding chain design arise from a new characterization of bounding-chain–based grand couplings as an optimization problem, detailed in \Cref{sec:coupling}.
\end{remark}

\subsection{Overview of conceptual and technical contributions}

In the CFTP framework, the random evolution $X_{-T}\to X_{0}$ of the Markov chain $(X_t)$ is captured by a random mapping $F : \Omega \to \Omega$, generated by a grand coupling $F = F(\cdot, U)$ using a random seed $U\in[0,1]$.
Equivalently, $F$ can be expressed as a composition of single-step couplings, $F = g_{-1} \circ g_{-2} \circ \cdots \circ g_{-T}$,
where each $g_t = g_t(\cdot, U_t)$ corresponds to the grand coupling of the one-step transition at time $t$, determined by an independent random seed $U_t\sim\uniform[0,1]$.
Conceptually, the global seed $U$ can thus be viewed as an infinite sequence of independent random variables $(U_{-1}, U_{-2}, \ldots)$, each determining one step of the coupled evolution.

When the Markov chain $(X_t)$ is a single-site dynamics, each transition $g_t$ is specified by a pair $(v_t, f_{v_t})$,
where $v_t \in V$ denotes the vertex selected for update at time $t$,
and $f_{v_t}$ is a local random update function that recolors $v_t$ based on the current coloring of its neighbors.
%
Once the sequence of update vertices $v_{-1},v_{-2},\ldots$ is fixed,
the CFTP process is fully determined by these local updates.

Conceptually, the bounding-chain construction can then be viewed as a \emph{localization} of this global coupling:
rather than tracking the evolution of the entire configuration space $\Omega \subseteq [q]^V$,
it maintains for each vertex $v$ a local bounding list $\mathcal{L}(v) \subseteq [q]$ of possible colors,
which compactly encodes all configurations consistent with the current stage of the coupled evolution.

\begin{remark}
\label{remark:coupling-random-vt}
In the original formulation of bounding chains for single-site dynamics~\cite{huber1998exact},
the update vertex is chosen uniformly at random from all vertices.
Subsequent developments~\cite{bhandari2020improved, jain2021perfectly} introduced more flexible update schedules to accelerate coalescence.
For correctness, however, the update schedule must be specified \emph{a priori},
independent of the evolution of the bounding lists during execution.

In the bounding-chain framework, the sequence of update vertices $(v_{-1}, v_{-2}, \ldots)$
forms part of the shared randomness and is \emph{identically coupled} across all global configurations.
That is, every coupled trajectory uses the same pre-specified sequence of update locations.
This identically coupled choice of the update sequence $(v_t)$ ensures that each update $(v_t, f_{v_t})$ acts purely locally
and that the bounding-chain coupling remains well defined regardless of how the bounding lists evolve.
\end{remark}

\paragraph{\textbf{Formulation of grand coupling via bounding chains}}
Building on the local perspective above, we introduce a unified formulation of the bounding-chain method.
In this framework, a global grand coupling of the Markov chain is specified via a collection of \emph{local grand couplings} defined over the bounding boxes of neighborhood configurations, as formalized in~\Cref{def:grand-coupling-on-bounding-chain}.

This subsumes all prior constructions of bounding chains, including those based on the standard uniform random update schedule as well as more structured, designed schedules, and enables a formal analysis of the fundamental limitations of the bounding-chain approach.



\medskip
\paragraph{\textbf{Lower bound: limitations of bounding chains.}}
\Cref{thm:no-bounding-chain} establishes a lower bound for bounding chains for $q$-coloring Glauber dynamics satisfying the standard contraction condition (\Cref{cond:bounding-chain-overall}).
%
This applies to any schedule, as long as updates follow the Glauber dynamics.
It stems from a fundamental obstruction to the coalescence of bounding chains: the reduction of bounding lists from size two to one.

\begin{theorem}[Lower bound: 2-to-1 contraction]\label{thm:no-coupling-informal}
Let $\Delta \ge 3$, $q < 2.5\Delta-1$, and let $n$ be sufficiently large.  
There exist a $\Delta$-regular graph $G = (V, E)$ with $|V|= n$ and a bounding list configuration $\mathcal{L}=(\mathcal{L}(v))_{v\in V}$ with $|\mathcal{L}(v)| = 2$ for all $v \in V$,
such that for any grand coupling of $q$-coloring Glauber dynamics given~$\mathcal{L}$, 
the updated $\mathcal{L}'$ satisfies: there exists a $v \in V$ with 
$\mathbb{E}\left[|\mathcal{L}'(v)|\right] > 2$,
while for all $u \in V\setminus\{v\}$, $|\mathcal{L}'(u)|=2$.
\end{theorem}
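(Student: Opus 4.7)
The plan is to prove \Cref{thm:no-coupling-informal} by localizing to a single vertex $v$ and exhibiting a two-element list assignment on $N(v)$ for which every grand coupling of the Glauber update at $v$ produces expected range size $>2$. Since a single Glauber step leaves every list but the updated one unchanged, it suffices to find $\mathcal{L}|_{N(v)}$ such that, for every function $c:\mathcal{B}_{N(v)}\times[0,1]\to[q]$ with $c(\sigma,U)\sim\text{Uniform}(A_\sigma)$ (where $A_\sigma:=[q]\setminus\sigma(N(v))$), we have $\mathbb{E}_U\bigl[|\{c(\sigma,U):\sigma\in\mathcal{B}_{N(v)}\}|\bigr]>2$. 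Using the optimization-theoretic formulation of bounding-chain grand couplings promised in \Cref{sec:coupling}, this minimum is the value of a structured linear program on joint distributions with prescribed uniform marginals, so the theorem reduces to identifying a neighborhood whose LP value exceeds $2$.

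The next step is to design the adversarial neighborhood and extract the matching lower bound. Simple patterns are insufficient: $\Delta$ pairwise-disjoint two-element lists spanning $2\Delta$ of the $q$ colors admit a coupling with $\mathbb{E}[|\mathcal{L}'(v)|]=q/(q-\Delta)<2$ whenever $q>2\Delta$, because the remaining ``free'' colors can serve as deterministic collapse targets on a $(q-2\Delta)/(q-\Delta)$-fraction of the randomness. The sought construction must suppress such free-color fallbacks by forcing every color into the bounding-list structure, while simultaneously arranging enough disagreement among the $\{A_\sigma\}$ that no coupling can align outputs into a size-$\leq 2$ range on average. The lower bound is then extracted via LP duality: a chosen family of configurations $\sigma_1,\dots,\sigma_m\in\mathcal{B}_{N(v)}$ together with dual weights whose combined value exceeds $2$ precisely when $q<2.5\Delta-1$, with $q=2.5\Delta$ emerging as the critical balance point at which the dual equals $2$ and at which the primal coupling of \Cref{thm:main} becomes feasible.

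Finally, the local obstruction is embedded into a $\Delta$-regular graph on $n$ vertices: for $n$ sufficiently large one can assemble $G$ from the designed neighborhood around $v$ plus any $\Delta$-regular padding with arbitrary two-element lists, since the padding cannot affect the one-step expected list-size at $v$. The main obstacle is the construction and its matching dual certificate: the naive LP bounds (covering, or pairwise/three-configuration Bonferroni) only yield $q/(q-\Delta)$ or $3\Delta/(q-\Delta)$, both of which drop below $2$ past $q=2\Delta$, so the sharp $5/2$ constant must come from carefully tuning the neighborhood so that higher-order joint-marginal constraints of the coupling LP collectively force $\mathbb{E}[|\mathcal{L}'(v)|]>2$. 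This beyond-LP obstruction is exactly what the optimization-theoretic characterization of \Cref{sec:coupling} is designed to enable.
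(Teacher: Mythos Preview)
Your plan correctly isolates the problem to a single-vertex update and correctly diagnoses why the ``easy'' assignment of pairwise-disjoint two-element lists fails (free colors allow collapse to size $\le 2$ whenever $q>2\Delta$). But the proposal stops precisely at the hard part: you explicitly flag the construction and the dual certificate as the ``main obstacle'' and never supply either. That is the genuine gap.

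Two specific points. First, your intuition that the construction should ``force every color into the bounding-list structure'' is misdirected: with $\Delta$ neighbors of list-size $2$ you can cover at most $2\Delta$ colors, so free colors are unavoidable once $q>2\Delta$. The paper's construction uses only $1.5\Delta$ colors. The key idea you are missing is an \emph{overlapping-pair} (``triangle'') structure: partition the $\Delta$ neighbors into $m=\Delta/2$ pairs and give the $i$-th pair the lists $\{3i-2,3i-1\}$ and $\{3i-1,3i\}$. The shared middle color $3i-1$ is what creates the obstruction.

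Second, the lower bound is not extracted via a generic LP-duality argument of the kind you sketch. It comes from three concrete ingredients: (i) marginal lower bounds $\Pr[\{3i-2,3i\}\cap\mathcal{L}_U\neq\emptyset]\ge 2/(m+r+1)$ and $\Pr[3i-1\in\mathcal{L}_U]\ge 1/(m+r)$, each witnessed by a single carefully chosen neighbor configuration; and (ii) a \emph{covering} argument showing that for every seed $U$, either some free color appears in $\mathcal{L}_U$ or \emph{both} of $\{3i-2,3i\}$ appear for some $i$ --- otherwise $\mathcal{L}_U$ could be entirely blocked by a feasible neighbor configuration, contradicting validity of the coupling. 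Summing these gives $\mathbb{E}[|\mathcal{L}_U|]\ge \frac{2m}{m+r+1}+\frac{m}{m+r}+1$, which exceeds $2$ exactly when $r<2m$, i.e., $q<2.5\Delta$. Your ``higher-order joint-marginal constraints'' language gestures at (ii) but does not identify it; without the triangle construction and this covering step, the approaches you list (pairwise/three-configuration Bonferroni) indeed cannot reach the $2.5$ constant, as you yourself observe.
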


The construction in \Cref{thm:no-coupling-informal} 
results in an expected increase in the total size
$\sum_{v}|\mathcal{L}'(v)|$ relative to $\sum_{v}|\mathcal{L}(v)|$, violating \Cref{cond:bounding-chain-overall}.
Thus, \Cref{thm:no-bounding-chain} follows from \Cref{thm:no-coupling-informal}.  
Moreover, since adding colors to $\bl$ can only increase size of $\bl'$, it blocks coalescence from any bounding list sizes~$>1$.

\medskip

\paragraph{\textbf{Upper bound: optimal design of grand coupling}}

We formalize the design of a grand coupling as an explicit optimization problem (\Cref{prob:optimal-grand-coupling}).
To make the problem tractable, we introduce a simplified variant (\Cref{prob:optimal-grand-coupling-2}) that captures the key design issues underlying prior bounding-chain algorithms.


Solutions to this simplified problem can be characterized via a linear program, which yields a grand coupling that optimally solves \Cref{prob:optimal-grand-coupling-2}.
%
This optimal construction, called \seeding{}, provides a key component: Combined with several existing bounding-chain building blocks, including the \compress{} coupling~\cite{bhandari2020improved} and the \disjoint{} coupling~\cite{jain2021perfectly}, we finally obtain a fast coalescing grand coupling for $q$-coloring when $q > (2.5+o(1))\Delta$, matching, up to lower-order terms, the lower bound established for any bounding-chain–based CFTP.

\subsection{Related work}
\label{sec:ralated-work}
The critical condition for rapid mixing of local Markov chains on proper $q$-colorings, such as Glauber dynamics, has been a central question in the study of sampling algorithms.
A classic result of Jerrum~\cite{jerrum1995very} (and independently, Salas and Sokal~\cite{salas1997absence}) established that when $q > 2\Delta$, the Glauber dynamics mixes in $O(n \log n)$ time.
This foundational work laid the groundwork for coupling-based analyses, notably inspiring the path coupling technique of Bubley and Dyer~\cite{bubley1997path} and the coupling criterion of Hayes~\cite{hayes2006simple}.
Vigoda~\cite{vigoda1999improved} introduced a coupling for the \emph{flip dynamics} and proved an $O(n^2)$ mixing-time bound for the Glauber dynamics when $q > \tfrac{11}{6}\Delta$.
This threshold remained the state of the art for nearly two decades until Chen et al.~\cite{chen2019improved} further refined it to $q > \left(\tfrac{11}{6} - \varepsilon_0\right)\Delta$ for an explicit $\varepsilon_0 = 1/84000$.
Most recently, Carlson and Vigoda~\cite{carlson2025flip} improved the bound to $q > 1.809\Delta$, which remains the best-known rapid-mixing threshold for general graphs.

Under additional structural assumptions, such as a lower bound on the girth, sharper thresholds are known.
For instance, on triangle-free graphs, the Glauber dynamics mixes optimally whenever $q > \alpha^* \Delta$ for some constant $\alpha^* \approx 1.763$; see~\cite{hayes2006coupling,chen2021rapid,feng2022rapid,CLV21,jain2022spectral}.
Even stronger results hold for graphs of larger girth~\cite{hayes2003non,dyer2013randomly,chen2023strong}.
In particular, the recent breakthrough of Chen et al.~\cite{chen2023strong} establishes an almost tight condition $q \ge \Delta + 3$ for an $O(n \log n)$ mixing time of the Glauber dynamics on graphs with constant maximum degree $\Delta = O(1)$ and sufficiently large girth.
%


The conjectured optimal threshold for rapid mixing is $q > \Delta$, corresponding to the uniqueness threshold of the Gibbs measure on the infinite $\Delta$-regular tree~\cite{jonasson2002uniqueness}.
In contrast, approximate sampling from $q$-colorings becomes computationally intractable when $q < \Delta$, unless $\mathrm{NP} = \mathrm{RP}$~\cite{galanis2015inapproximability}.

In a separate line of work focused on deterministic approximate counting, Gamarnik and Katz~\cite{gamarnik2012correlation} gave a polynomial-time approximate counting algorithm for proper $q$-colorings when $q > 2.84\Delta$ for bounded-degree graphs. 
Subsequent improvements were obtained by Lu and Yin~\cite{lu2013improved} ($q > 2.58\Delta$) and by Liu, Sinclair, and Srivastava~\cite{liu2019deterministic} ($q > 2\Delta$).

\medskip
\paragraph{\textbf{Related works on CFTP}}
Coupling from the past (CFTP) is a central technique for perfect sampling and has been widely applied across statistical physics and combinatorial models.
Classical applications include the Ising and Potts models~\cite{haggstrom1998ising,huber2004perfect}, while extensive developments on graph colorings have leveraged the bounding-chain framework~\cite{huber2004perfect,huber2016perfect,bhandari2020improved,jain2021perfectly}.
More recently, CFTP methods have been extended to the sampling Lov\'{a}sz Local Lemma, leading to perfect samplers for solutions to atomic constraint satisfaction problems (CSPs)~\cite{he2021perfectsamplingatomiclovasz,qiu2022perfectsamplerhypergraphindependent}.

\medskip
\paragraph{\textbf{Other perfect sampling paradigms}}
Beyond CFTP, several alternative paradigms for perfect sampling have been developed, including the \emph{random recycler}~\cite{fill2000randomness}, \emph{partial rejection sampling}~\cite{wilson1996generating,cohn2002sink,guo2019uniform,feng2021dynamic}, and approaches based on the \emph{Bayes filter}~\cite{feng2022perfect}.
More recently, \emph{local perfect samplers} have been introduced, which generate samples from marginal distributions at a local cost.
Examples include the \emph{lazy depth-first sampler} (also known as the Anand–Jerrum algorithm)~\cite{AJ22,HWY22a} and the \emph{coupling towards the past} framework~\cite{feng2023towards,liu2025phase,liu2025local}, the latter extending the philosophy of coupling from the past to more general local settings.

\section{Preliminaries} 
\label{sec:preliminaries}

\subsection{Notations for coloring}
Let $G=(V,E)$ be an undirected graph and $[q] = \{1,2,\dots,q\}$ a set of $q\ge 2$ colors.
The configuration space of proper $q$-colorings of $G$ is
\[
\Omega=\{\omega:V\to[q]\mid \omega(u)\neq\omega(v)\text{ for all }\{u,v\}\in E\}.
\]
For each vertex $v$, let $\Gamma(v) \defeq \{u \in V \mid \{u,v\} \in E\}$ denote $v$'s neighborhood.

\subsection{Glauber dynamics}

A \emph{single-site dynamics} on $\Omega$ is defined by a predetermined \emph{update schedule}, which is a (possibly random) sequence of vertices $(v_t)$, together with a collection of \emph{local update distributions}:
$$\left\{p_v^\conf:v\in V,\conf\in[q]^{\Gamma(v)}\right\},$$
where $p_v^\conf$ specifies a distribution over $[q]$ that depends only on the vertex $v$ and the configuration $\conf$ of its neighbors.
The resulting dynamics $(X_t)$ evolves as follows: at each time step $t$, the vertex $v_t$ is selected according to the update schedule, and its color is updated by sampling from $p_{v_t}^{X_t(\Gamma(v_t))}$, where $X_t(\Gamma(v_t))$ denotes the restriction of the current configuration $X_t$ to the neighborhood $\Gamma(v_t)$ of $v_t$.
r
The \emph{Glauber dynamics} for proper $q$-colorings is a single-site dynamics in which, for each vertex $v \in V$ and neighborhood coloring $\conf \in [q]^{\Gamma(v)}$, the local update distribution at $v$ is uniform over the set of colors available to $v$ given $\conf$:
\begin{align}\label{eq:glauber-dynamics-update}
    p_v^{\conf} = \uniform([q] \setminus \conf),
\end{align}
where $\uniform(S)$ denotes the uniform distribution over a set $S$.
Here, by slight abuse of notation, we treat the tuple $\conf$ equivalently as a set, so that $[q] \setminus \conf=[q] \setminus \{\conf(u) : u \in \Gamma(v)\}$.


\subsection{Bounding chains}
\label{sec:preliminary-bounding-chains}
For a single-site dynamics $(X_t)$ on a state space $\Omega \subseteq [q]^V$ with update schedule $(v_t)$ and local update distributions ${p_v^\conf : v \in V,, \conf \in [q]^{\Gamma(v)}}$,
let $g_t : \Omega \times [0,1] \to \Omega$ denote a grand coupling for the transition $X_t \to X_{t+1}$, as defined in~\eqref{eq:grand-coupling-intro}.

Suppose that the randomness in the update schedule $(v_t)$ is \emph{identically coupled} across all trajectories, i.e., at each time $t$, every coupled trajectory updates the same randomly chosen vertex $v_t$.

Under this natural assumption, the coupling $g_t(\cdot, U)$ becomes a local function, such that: 
\begin{itemize}
    \item the image $g_t(\conf, U)$ depends only on $\conf\in\Omega$ restricted to $\Gamma(v_t)$;
    \item the image $g_t(\conf, U)$ differs from $\conf\in\Omega$ only at the update vertex $v_t$.
\end{itemize}
Hence, without loss of generality, we may write 
$$g_t(\conf, U)(v_t)=g_t\left(\conf_{\Gamma(v_t)}, U\right)(v_t),$$ 
since the outcome of $g_t(\conf, U)$ at $v_t$ depends solely on the neighborhood configuration $\conf_{\Gamma(v_t)}$.

This locality property of grand couplings for single-site dynamics naturally motivates the definition of \emph{bounding chains}.
A bounding chain $(\bl_t)$ is naturally constructed from a grand coupling $(g_t)$ of the single-site dynamics.
At each time step $t$, the bounding chain maintains, for every vertex $v \in V$, a \emph{bounding list} (or \emph{bounding set}) $\bl_t(v) \subseteq [q]$ representing the possible colors that $v$ may take at time~$t$ under the coupled evolution, inferred based on the bounding lists.

Formally, given the shared random seed $U_t$ and the update vertex $v_t$, the bounding chain evolves as:
\begin{align}\label{eq:transition-bounding-list}
    \bl_{t+1}(v)
=
\begin{cases}
    \set{g_t(\conf,U_t)(v):\conf\in\bigotimes_{u\in\Gamma(v)}\bl_t(u)}, & \text{if }v=v_t,\\
    \bl_{t}(v) & \text{otherwise}.
\end{cases}
\end{align}
In words, when vertex $v_t$ is updated, the new bounding list $\bl_{t+1}(v_t)$ consists of all possible colors that $v_t$ can take, given the current bounding lists of its neighbors and the shared randomness $U_t$ at time~$t$.
For all other vertices, the bounding lists remain unchanged.
In practice, one may instead maintain a superset of the set defined in~\eqref{eq:transition-bounding-list}, as it can be easier to compute. 
Such relaxations preserve the correctness of sampling, though they may affect the efficiency of coalescence.




In coupling from the past (CFTP), the bounding chain is initialized with full uncertainty: $\bl_{-T}(v) = [q]$ for all $v \in V$.
As the chain $(\bl_t)$ evolves forward from time $-T$ to time~$0$, the bounding lists ${\bl_t(v)}_{v\in V}$ gradually shrink as uncertainty about each vertex’s color decreases.
When all bounding lists collapse to singletons, i.e., $|\bl_t(v)| = 1$ for every $v \in V$, 
the coupled single-site dynamics have coalesced to a single configuration.
At this point, the bounding-chain-based CFTP procedure outputs a perfect sample from the stationary distribution of the Glauber dynamics.

\section{Grand Coupling via Bounding Chains}
\label{sec:coupling}



In this section, we formalize a unified framework for grand couplings arising from bounding chains.
We begin by formally defining the notion of a grand coupling.




\begin{definition}[Grand coupling]
\label{def:grand-coupling}
Let $\Omega$ be a finite sample space, and let $\mathcal{P}$ be a family of probability distributions on $\Omega$.
A measurable function $f: \mathcal{P} \times [0,1] \to \Omega$ is called a \emph{grand coupling} on $\mathcal{P}$ if, 
for a uniform random variable $U \sim \uniform[0,1]$, it holds that for every $p \in \mathcal{P}$ and every $y \in \Omega$,
\[
\Pr{f(p,U) = y} = p(y).
\]
Equivalently, for each fixed $p \in \mathcal{P}$, the random variable $f(p,U)$ has the distribution $p$.
\end{definition}

A grand coupling thus provides a unified deterministic rule that simultaneously generates a family of probability distributions on the same sample space $\Omega$ using shared randomness $U \sim \uniform[0,1]$.
In particular, the grand coupling for a Markov chain transition $P$ defined in~\eqref{eq:grand-coupling-intro} is a special case of \Cref{def:grand-coupling},
where the family $\mathcal{P}$ consists of the update distributions $P(x,\cdot)$ for all $x \in \Omega$.

\subsection{Local grand coupling via bounding chains}
We introduce a general formulation for grand couplings constructed using bounding chains, which applies to single-site dynamics.  
Let $G=(V,E)$ be a graph and $[q]=\{1,2,\ldots,q\}$ a set of $q\ge 2$ colors.  
Recalle that a single-site dynamics on a state space $\Omega \subseteq [q]^V$ is specified via a collection of local update distributions:
$\left\{p_v^\conf:v\in V,\conf\in[q]^{\Gamma(v)}\right\}$.

In the bounding-chain framework, we maintain for each vertex $v \in V$ a \emph{bounding list} $\bl(v) \subseteq [q]$, representing the set of possible colors that $v$ may take across coupled trajectories.  
The grand coupling is then constructed based on these bounding lists.

\begin{definition}[Grand coupling via bounding chain]
\label{def:grand-coupling-on-bounding-chain}
Let $G=(V,E)$ be a graph and $q \ge 2$.  
Consider a single-site dynamics on the state space $\Omega \subseteq [q]^V$ with local update distributions 
$\left\{p_v^\conf:v\in V,\conf\in[q]^{\Gamma(v)}\right\}$.

Fix a vertex $v \in V$ and a bounding list configuration $(\bl(u))_{u \in \Gamma(v)}$, with each $\bl(u) \subseteq [q]$.  
    Define
    \begin{align}\label{eq:bounding-box-local-distribution}
              \mathcal{P}_v \defeq \left\{ p_v^\conf : \conf \in \bigotimes_{u\in\Gamma(v)}\bl(u)\right\},
    \end{align}
the family of all possible local update distributions at $v$ consistent with the bounding lists.
    
    A \emph{local grand coupling} at vertex $v$ given the bounding list $(\bl(u))_{u \in \Gamma(v)}$ is a measurable mapping
    $$f_v : \mathcal{P}_v \times [0,1] \to [q]$$
such that $f_v$ is a valid grand coupling on $\mathcal{P}_v$ in the sense of \Cref{def:grand-coupling}.
\end{definition}

Previous bounding-chain based algorithms
(e.g.,~\cite{bhandari2020improved,jain2021perfectly})
implicitly employ two stages.
These stages can be unified in our framework for grand couplings via bounding chains.



\begin{itemize}
\item \emph{Generating stage:}  
Once a random seed $U \sim \uniform[0,1]$ is generated, the function $f_v$ determines, for every $p \in \mathcal{P}_v$, a possible update outcome of $v$.  
The collection of all outcomes $\{ f_v(p, U) : p \in \mathcal{P}_v \}$ forms the new bounding list $\bl'(v)$ of vertex $v$, representing all colors that could be assigned under the current randomness.

\item \emph{Decoding stage:}  
Once the actual neighbor configuration $\conf \in [q]^{\Gamma(v)}$ (and thus the corresponding update distribution $p_v^\conf \in \mathcal{P}_v$) is revealed, the updated color of $v$ is determined by $f_v(p_v^\conf, U)$.
\end{itemize}

This formulation unifies the generating–decoding paradigm into a single mathematical framework, enabling an analysis of the optimality of grand coupling designs.

\subsubsection{Grand coupling for Glauber dynamics}

Now consider the Glauber dynamics on proper $q$-colorings.  
For each vertex $v$ and neighborhood configuration $\conf \in [q]^{\Gamma(v)}$, the local update distribution $p_v^{\conf}$ at $v$ given $\conf$ is uniform over the set of colors available to $v$ given $\conf$, that is, $p_v^{\conf} = \uniform([q] \setminus \conf)$, where we adopt the abuse of notation as in~\eqref{eq:glauber-dynamics-update}.

Under this specialization, the grand coupling~$f_v$ in \Cref{def:grand-coupling-on-bounding-chain} admits an \emph{interval-based} model:
The preimages of $f_v(p, \cdot)$ induce a measurable partition of $[0,1]$ corresponding to the outcomes of~$p$.  
%
Let $k \defeq |[q]\setminus\conf|$ denote the number of available colors at~$v$.
The mapping $f_v(p_v^{\conf}, \cdot)$ defines a partition of the unit interval $[0,1]$ into $k$ subsets of equal measure~$1/k$, each labeled by a distinct available color.  
A shared random seed $U \sim \uniform[0,1]$ specifies a point in~$[0,1]$,  
and the update color $f_v(p_v^{\conf}, U)$ is the color assigned to that point in the partition corresponding to~$p_v^{\conf}$.

%


\begin{example}
    Suppose $q = 5$, and vertex~$v$ has two neighbors with bounding lists $\{1,2\}$ and $\{2,3\}$, respectively. 
Then the possible sets of colors used by these two neighbors are $\set{2,3}, \set{1,3}, \set{1,2}, \set{2}$;
and the corresponding sets of available colors at~$v$ are their complements $\{1,4,5\},\{2,4,5\},\{3,4,5\},\{1,3,4,5\}$.
In the interval-based model, the grand coupling~$f_v$ associates each case with a colored partition of~$[0,1]$, where the available colors are distributed uniformly over the interval. 
Given the shared randomness $U \in [0,1]$, a new color for~$v$ is determined in each case by the color at position~$U$. 
The updated bounding list~$\bl'(v)$ at vertex~$v$ is then the set of all such colors across all cases.
This is illustrated in \Cref{fig:grand-coupling-intervals}.
\end{example}
%

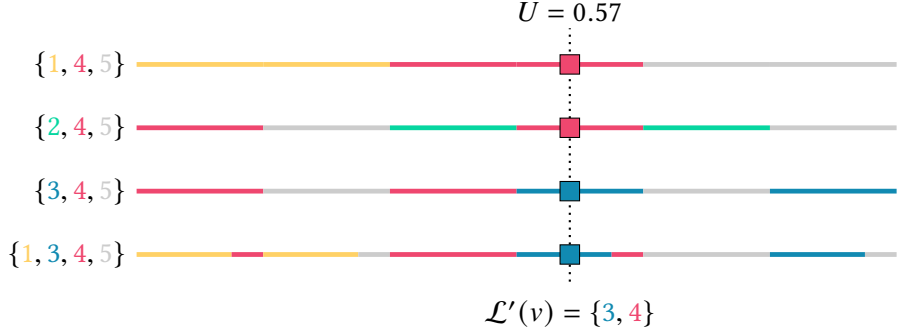
\begin{figure}[ht]
\centering
\begin{tikzpicture}[x=10cm, y=1.2cm]

\definecolor{color1}{HTML}{FFD166} 
\definecolor{color2}{HTML}{06D6A0} 
\definecolor{color3}{HTML}{118AB2}
\definecolor{color4}{HTML}{EF476F}
\definecolor{color5}{HTML}{CCCCCC}

\node[below] at (0.57, 3.2) {$U=0.57$};
\draw[dotted, thick] (0.57,0) -- (0.57,2.8);

\node[left] at (0,2.4) {$\{\textcolor{color1}{1}, \textcolor{color4}{4}, \textcolor{color5}{5}\}$};
\draw[line width=1.8, color1] (0.00,2.4) -- (1/6,2.4);
\draw[line width=1.8, color1] (1/6,2.4) -- (2/6,2.4);
\draw[line width=1.8, color4] (2/6,2.4) -- (3/6,2.4);
\draw[line width=1.8, color4] (3/6,2.4) -- (4/6,2.4);
\draw[line width=1.8, color5] (4/6,2.4) -- (5/6,2.4);
\draw[line width=1.8, color5] (5/6,2.4) -- (1.0,2.4);
\node[draw=black, fill=color4] at (0.57,2.4) {};

\node[left] at (0,1.7) {$\{\textcolor{color2}{2}, \textcolor{color4}{4}, \textcolor{color5}{5}\}$};
\draw[line width=1.8, color4] (0.00,1.7) -- (1/6,1.7);
\draw[line width=1.8, color5] (1/6,1.7) -- (2/6,1.7);
\draw[line width=1.8, color2] (2/6,1.7) -- (3/6,1.7);
\draw[line width=1.8, color4] (3/6,1.7) -- (4/6,1.7);
\draw[line width=1.8, color2] (4/6,1.7) -- (5/6,1.7);
\draw[line width=1.8, color5] (5/6,1.7) -- (1.0,1.7);
\node[draw=black, fill=color4] at (0.57,1.7) {};

\node[left] at (0,1.0) {$\{\textcolor{color3}{3}, \textcolor{color4}{4}, \textcolor{color5}{5}\}$};
\draw[line width=1.8, color4] (0.00,1.0) -- (1/6,1.0);
\draw[line width=1.8, color5] (1/6,1.0) -- (2/6,1.0);
\draw[line width=1.8, color4] (2/6,1.0) -- (3/6,1.0);
\draw[line width=1.8, color3] (3/6,1.0) -- (4/6,1.0);
\draw[line width=1.8, color5] (4/6,1.0) -- (5/6,1.0);
\draw[line width=1.8, color3] (5/6,1.0) -- (1.0,1.0);
\node[draw=black, fill=color3] at (0.57,1.0) {};

\node[left] at (0,0.3) {$\{\textcolor{color1}{1},\textcolor{color3}{3}, \textcolor{color4}{4}, \textcolor{color5}{5}\}$};
\draw[line width=1.8, color1] (0.00,0.3) -- (1/6-1/24,0.3);
\draw[line width=1.8, color4] (1/6-1/24,0.3) -- (1/6,0.3);
\draw[line width=1.8, color1] (1/6,0.3) -- (2/6-1/24,0.3);
\draw[line width=1.8, color5] (2/6-1/24,0.3) -- (2/6,0.3);
\draw[line width=1.8, color4] (2/6,0.3) -- (3/6,0.3);
\draw[line width=1.8, color3] (3/6,0.3) -- (4/6-1/24,0.3);
\draw[line width=1.8, color4] (4/6-1/24,0.3) -- (4/6,0.3);
\draw[line width=1.8, color5] (4/6,0.3) -- (5/6,0.3);
\draw[line width=1.8, color3] (5/6,0.3) -- (1.0-1/24,0.3);
\draw[line width=1.8, color5] (1-1/24,0.3) -- (1,0.3);
\node[draw==black, fill=color3] at (0.57,0.3) {};

\node[below] at (0.57,-0.1) {$\mathcal{L}'(v)=\{\textcolor{color3}{3},\textcolor{color4}{4}\}$};

\end{tikzpicture}
\caption{
Each line illustrates how the unit interval~$[0,1]$ is partitioned and assigned to the available colors, 
thereby defining a uniform distribution over them using the shared source of randomness $U \sim \mathrm{Uniform}[0,1]$. 
A single random position~$U \in [0,1]$ determines the color selected for each case, and the union of these selected colors forms the updated bounding list~$\mathcal{L}'(v)$. 
%
In this example, for any choice of~$U$, the updated bounding list satisfies $|\mathcal{L}'(v)| = 2$.
}
\label{fig:grand-coupling-intervals}
\end{figure}

\subsubsection{Classical construction of grand coupling for \texorpdfstring{$q$}{q}-colorings}
\label{sec:warm-up}
We next illustrate how the formulation in \Cref{def:grand-coupling-on-bounding-chain} specializes to the classical grand coupling of~\cite{huber1998exact}.  
%
In this coupling, the shared randomness is a uniformly random permutation $\pi$ of $[q]$.  
For each vertex $v$ with neighborhood coloring $\conf\in[q]^{\Gamma(v)}$, the local rule $f_v$ outputs the first color in $\pi$ that does not appear in $\conf$.  
It is straightforward to verify that, for any $\conf$, this rule samples uniformly from the set of available colors $[q] \setminus \conf$.


In the interval-based interpretation, the unit interval $[0,1]$ is partitioned uniformly into $q!$ subintervals, each corresponding to a distinct permutation of $[q]$.  
For each neighborhood coloring $\conf$, and for each subinterval, let $\pi$ denote the permutation associated with that subinterval;  
the subinterval is then assigned the first color in $\pi$ that does not appear in $\conf$.
Given a shared random seed $U \sim \uniform[0,1]$, the subinterval containing $U$ determines the color selected for each local update distribution $p \in \mathcal{P}_v$.  
Ideally, the new bounding list at vertex $v$ is
\[
\bl'(v)=\{ f_v(p, U) : p \in \mathcal{P}_v \},
\]
where $\mathcal{P}_v$ contains all local update distributions given the current bounding lists, as defined in~\eqref{eq:bounding-box-local-distribution}.

Although $|\mathcal{P}_v|$ may be exponentially large, it is not necessary to evaluate $f_v(p,U)$ for every $p \in \mathcal{P}_v$.  
A key observation here is that including extra colors into $\bl'(v)$ does not affect correctness of CFTP (though it may slow down coalescence).
Let $\pi$ denote the permutation corresponding to the interval containing $U$.  
Since the number of distinct colors in a neighborhood satisfies $|\{\conf(u): u \in \Gamma(v)\}| \le \Delta$, at least one of the first $\Delta+1$ entries of $\pi$ is available for $v$.  
Hence, it suffices to take
\[
\{ \pi_1, \pi_2, \ldots, \pi_{\Delta+1} \},
\]
as the updated bounding list $\bl'(v)$, since $\{ f_v(p, U) : p \in \mathcal{P}_v \}\subseteq\{ \pi_1, \pi_2, \ldots, \pi_{\Delta+1} \}$.
%
%

For further refinement, observe that if a color $c$ does not appear in the bounding list $\bl(u)$ of any neighbor $u \in \Gamma(v)$, then $c$ cannot appear in any neighborhood coloring $\conf \in \bigotimes_{u \in \Gamma(v)} \bl(u)$.  
Let $C$ denote the set of all such colors.  
If there exists an index $i \le \Delta+1$ such that $\pi_i \in C$, it suffices to take
\[
\{ \pi_1, \pi_2, \ldots, \pi_i \}
\]
as the updated bounding list.  
Indeed, $\pi_i \in C$ guarantees that for every neighborhood coloring $\conf \in \bigotimes_{u \in \Gamma(v)} \bl(u)$, the grand coupling outcome $f_v(p_v^{\conf}, U)$ never appears after position $i$ in $\pi$.  
Hence, $\{ f_v(p, U) : p \in \mathcal{P}_v \}\subseteq \{ \pi_1, \pi_2, \ldots, \pi_i \}$.  

The grand coupling described above, together with this refinement, precisely recovers the bounding chain of~\cite{huber1998exact} for $q$-colorings.
The effectiveness of a bounding chain depends on how well it tracks the bounding list
$
\{ f_v(p, U) : p \in \mathcal{P}_v \} 
$.
Smaller bounding lists may result in a tighter bounding chain and thus faster coalescence in the CFTP process.
Subsequent works~\cite{bhandari2020improved, jain2021perfectly} refined this idea by designing more sophisticated grand couplings.  
Their algorithms partition the process into multiple \emph{phases}, each employing a distinct grand coupling, allowing the bounding lists to shrink gradually across phases.  
Despite differences in implementation, all such couplings admit a natural interpretation within the framework in Definition~\ref{def:grand-coupling-on-bounding-chain}.

Within this formulation, designing an effective grand coupling can be quantitatively phrased as minimizing the size of
$
\{ f_v(p, U) : p \in \mathcal{P}_v \}
$
for $U \sim \uniform[0,1]$.  
In the following sections, we analyze this objective under specific structural assumptions on the neighboring bounding lists $(\bl(u))_{u \in \Gamma(v)}$ and establish both upper and lower bounds for grand couplings within this model.

\subsubsection{An optimization problem on grand coupling}
As discussed above, the performance of a grand coupling is naturally quantified by the size of its updated bounding list:
\[
\bl'(v) = \{\, f_v(p, U) : p \in \mathcal{P}_v \,\}, \qquad U \sim \uniform[0,1].
\]
A smaller expected size of $\bl'(v)$ leads to faster coalescence in the coupling-from-the-past procedure.  
This motivates designing grand couplings that minimize the updated bounding list given the current neighborhood bounding lists.
We formalize this task as the following optimization problem.

\begin{problem}[Local grand coupling optimization]
\label{prob:optimal-grand-coupling}
Fix a vertex $v \in V$ and a bounding list configuration $(\bl(u))_{u \in \Gamma(v)}$, with each $\bl(u) \subseteq [q]$.  
%
Let 
\begin{align}\label{eq:bounding-glauber-dynamics-updates}
    \mathcal{P}_v = \left\{\uniform([q] \setminus \conf): \conf \in \bigotimes_{u\in\Gamma(v)}\bl(u)\right\}
\end{align}
denote the set of all possible local update distributions 
at $v$ in the Glauber dynamics on $q$-colorings. 

The objective is to design a local grand coupling $f_v : \mathcal{P}_v \times [0,1] \to [q]$ at vertex $v$ given the bounding lists $(\bl(u))_{u \in \Gamma(v)}$ (cf.\ \Cref{def:grand-coupling-on-bounding-chain}) that minimizes 
\[
\mathbb{E}_{U \sim \uniform[0,1]}\!\bigl[\,|\{ f_v(p, U) : p \in \mathcal{P}_v \}|\,\bigr].
\]
That is, $f_v$ seeks to minimize the expected size of the updated bounding list at $v$.
\end{problem}

\subsection{Lower bounds for grand coupling}
\label{sec:grand-coupling-lower-bound}
In this subsection we prove Theorem~\ref{thm:no-bounding-chain} by establishing a fundamental lower bound for Problem~\ref{prob:optimal-grand-coupling}.  
Our analysis reveals an intrinsic obstacle that limits the effectiveness of all known bounding-chain constructions~\cite{huber1998exact,bhandari2020improved,jain2021perfectly}: 
\begin{center}
    \emph{The difficulty of reducing every bounding list from size~$2$ to~$1$.}
\end{center}
This step represents the critical bottleneck in coalescence of bounding chains.  
To understand this limitation, we focus on the configuration where each neighbor $u \in \Gamma(v)$  
has a bounding list $\bl(u)$ of size~$2$.  
This setting captures the worst-case local uncertainty at vertex~$v$,  
and hence determines the strongest contraction that any grand coupling can possibly attain.  
Our lower-bound argument is therefore developed under this extremal configuration.



\begin{theorem} \label{thm:no-coupling-2}
Let $q < 2.5\Delta-1$, and let $v$ be a vertex with degree $|\Gamma(v)| = \Delta$.  
There exists a bounding list configuration 
$(\bl(u))_{u \in \Gamma(v)}$ with $|\bl(u)| = 2$ for all $u \in \Gamma(v)$
such that, for every local grand coupling $f_v : \mathcal{P}_v \times [0,1] \to [q]$ of the Glauber dynamics update at  $v$ given $(\bl(u))_{u \in \Gamma(v)}$, where $\mathcal{P}_v$ is as in~\eqref{eq:bounding-glauber-dynamics-updates}, 
\[
\mathbb{E}_{U \sim \uniform[0,1]}\!\bigl[\,|\{ f_v(p, U) : p \in \mathcal{P}_v \}|\,\bigr] > 2.
\]
\end{theorem}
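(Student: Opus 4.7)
My plan is to reduce the statement to a measure-theoretic optimization over grand couplings and exhibit an adversarial bounding-list configuration $\mathcal{L}$ together with a matching dual certificate.

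\emph{Reformulation.} For any local grand coupling $f_v$, for each $\conf \in \bigotimes_{u \in \Gamma(v)} \mathcal{L}(u)$ and each available color $c \in [q] \setminus \conf$, I would introduce the measurable sets
\[
A_{\conf, c} \defeq \set{U \in [0,1] \cmid f_v(p_v^{\conf}, U) = c}, \qquad W_c \defeq \bigcup_{\conf \,:\, c \notin \conf} A_{\conf, c}.
\]
The grand-coupling property forces $\{A_{\conf,c}\}_{c \in [q]\setminus\conf}$ to be a measurable partition of $[0,1]$ for each $\conf$, with Lebesgue measure $\mu(A_{\conf,c}) = 1/|[q]\setminus\conf|$. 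A routine swap of sum and integral yields the identity
\[
\mathbb{E}_{U \sim \uniform[0,1]}\bigl[\,\abs{\set{f_v(p, U) \cmid p \in \mathcal{P}_v}}\,\bigr] = \sum_{c \in [q]} \mu(W_c),
\]
so the task becomes designing $\mathcal{L}$ with $\abs{\mathcal{L}(u)} = 2$ for every $u \in \Gamma(v)$ such that every feasible family $\{A_{\conf, c}\}$ satisfies $\sum_c \mu(W_c) > 2$.

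\emph{Construction and LP duality.} A pure rainbow configuration with $\Delta$ pairwise-disjoint bounding pairs is \emph{not} a hard instance: one can exhibit a coupling that achieves $\sum_c \mu(W_c) = q/(q-\Delta) \le 2$ whenever $q \ge 2\Delta$, by partitioning $[0,1]$ into one ``free bucket'' per free color and one ``swap bucket'' per bounding pair. I would therefore design $\mathcal{L}$ with controlled overlaps between bounding lists, aiming for two effects: (i) the family of size-$\le 2$ subsets $T \subseteq [q]$ that intersect $[q]\setminus\conf$ for every $\conf$ (the valid size-$\le 2$ images) is severely restricted, and (ii) the per-$\conf$ partition constraints across distinct $\conf$'s are incompatible with the maximal alignment $A_{\conf,c} = W_c$ that would drive every $\mu(W_c)$ down to the minimum value $1/(q-\Delta)$. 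I would then prove $\sum_c \mu(W_c) > 2$ by LP duality: write the linear program that minimizes $\sum_c \mu(W_c)$ over feasible $\{A_{\conf,c}\}$, and exhibit a dual feasible solution of value strictly exceeding $2$, constructed by assigning nonnegative weights to (config, color) pairs subject to a per-color packing constraint at each point of $[0,1]$.

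\emph{Main obstacle.} The naive pointwise bound $\mu(W_c) \ge 1/(q-\Delta)$ summed over $c \in [q]$ only gives $\sum_c \mu(W_c) \ge q/(q-\Delta)$, which exceeds $2$ only for $q < 2\Delta$. The hardest step — and the principal obstacle — is to extract the additional $\Omega(\Delta)$-order contribution needed to push past $2\Delta$: I must show that the cross-$\conf$ partition constraints collectively forbid all the $W_c$'s from simultaneously attaining the minimum measure $1/(q-\Delta)$. Tuning the adversarial $\mathcal{L}$ and the dual certificate so that this incompatibility pinpoints the exact threshold $q < 2.5\Delta - 1$ — matching (up to lower-order terms) the upper bound of \Cref{thm:main} — is where the constant $2.5$ must emerge, and is the most delicate aspect of the proof.
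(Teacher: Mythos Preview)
Your reformulation $\mathbb{E}[|\{f_v(p,U):p\in\mathcal{P}_v\}|]=\sum_c\mu(W_c)$ is correct, and your observation that pairwise-disjoint lists are not a hard instance is exactly right. But the proposal stops short of the two ingredients that actually do the work, and your LP-duality outline does not indicate how you would find either of them.

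The paper's construction is concrete: write $\Delta=2m$, pair the neighbors, and give the $i$-th pair bounding lists $\{3i-2,3i-1\}$ and $\{3i-1,3i\}$ (overlapping in the middle color). The colors $1,\dots,3m$ are \emph{constrained}; the remaining $r=q-3m$ are \emph{free}. With this structure, the paper bypasses explicit LP duality and instead proves three elementary facts:
\begin{itemize}
\item By marginal correctness at a single configuration (set both vertices of pair $i$ to $3i-1$, all other pairs to their outer colors), the event $\mathcal{A}_i=\{\mathcal{L}_U(v)\cap\{3i-2,3i\}\neq\emptyset\}$ has probability $\ge 2/(m+r+1)$.
\item By marginal correctness at the all-outer-colors configuration, $\Pr{3i-1\in\mathcal{L}_U(v)}\ge 1/(m+r)$.
\item A \emph{covering argument}: for every seed $U$, either some free color lies in $\mathcal{L}_U(v)$, or for some $i$ both $3i-2$ and $3i$ lie in $\mathcal{L}_U(v)$. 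If not, one can choose one color from each pair so that the resulting $\conf$ contains all of $\mathcal{L}_U(v)$, contradicting $f_v(p_v^\conf,U)\in[q]\setminus\conf$.
\end{itemize}
Writing $\Pr{3i-2\in\mathcal{L}_U}+\Pr{3i\in\mathcal{L}_U}=\Pr{\mathcal{A}_i}+\Pr{\mathcal{C}_i}$ with $\mathcal{C}_i=\{\{3i-2,3i\}\subseteq\mathcal{L}_U\}$, and summing, gives
\[
\sum_c\mu(W_c)\ \ge\ \sum_i\Pr{\mathcal{A}_i}+\sum_i\Pr{\mathcal{B}_i}+\Bigl(\sum_i\Pr{\mathcal{C}_i}+\Pr{\mathcal{D}}\Bigr)\ \ge\ \frac{2m}{m+r+1}+\frac{m}{m+r}+1,
\]
which exceeds $2$ precisely when $r<2m$, i.e.\ $q<2.5\Delta$.

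The covering step is the piece your outline does not anticipate: it is not a per-color packing constraint but a global combinatorial obstruction (for every $U$, the output set cannot be entirely ``blockable'' by some $\conf$), and it is what supplies the extra $+1$ beyond the naive $q/(q-\Delta)$ bound. Your LP-duality language is compatible with this --- the three facts above are a hand-built dual certificate --- but without the triangle construction and the covering observation you have no route to the constant $2.5$.
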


\Cref{thm:no-coupling-2} is established by explicitly constructing a bounding-list configuration exhibiting a structural bottleneck, formalized as follows.

\begin{definition}[Worst-case triangle configuration]
\label{def:worst-case}
We say that two vertices $u, v \in V$ form a \emph{triangle} on a bounding list configuration $\bl$ 
if $|\bl(u)| = |\bl(v)| = 2$ and $|\bl(u) \cup \bl(v)| = 3$.  

For a vertex $v \in V$, we say that $\bl$ is in a \emph{worst-case configuration at~$v$}  
if the neighbors of~$v$ can be partitioned into disjoint pairs, each forming a triangle on~$\bl$.
\end{definition}

\begin{proof}[Proof of \Cref{thm:no-coupling-2}]

We consider the \emph{worst-case configuration at $v$} from Definition~\ref{def:worst-case}, which realizes the scenario where no grand coupling can reduce the expected bounding list size at $v$ to $2$ or below.

Without loss of generality, assume $\Delta$ is even, write $\Delta = 2m$ and express $q = 3m + r$ for some integer $r \ge 0$. 
Partition $v$'s neighbors into $m$ disjoint pairs. Index the pairs by $i=1,\dots,m$.  For the $i$-th pair in each side we assign bounding lists as follows:
the two vertices in the $i$-th pair receive bounding lists $\{3i-2,\,3i-1\}$ and $\{3i-1,\,3i\}$ respectively.  
The colors $\{1,2,...,3m\}$ are called the \emph{constrained} colors, while the remaining $r$ colors are called \emph{free} colors.

We first note that the construction above indeed satisfies Definition~\ref{def:worst-case}: each neighbor pair indexed by $i=1,\dots,m$ has bounding lists
$\{3i-2,3i-1\}$ and $\{3i-1,3i\}$, so the two vertices in the pair form a triangle in the sense of Definition~\ref{def:worst-case}.
Let $f_v$ be an arbitrary grand coupling for the Glauber update at $v$, and write
\[
\mathcal{L}_U(v) \;=\; \{\, f_v(p,U) : p \in \mathcal{P}_v \,\}
\]
for the bounding list produced by $f_v$ when the shared seed equals $U\sim\uniform[0,1]$.    

For $i=1,\dots,m$, define the following events, where all probabilities are over $U\sim\uniform[0,1]$:
\begin{align*}
\mathcal{A}_i &:= \{\, \mathcal{L}_U(v) \cap \{3i-2,3i\} \neq \varnothing \,\},\\
\mathcal{B}_i &:= \{\, 3i-1 \in \mathcal{L}_U(v) \,\},\\
\mathcal{C}_i &:= \{\, \{3i-2,3i\} \subseteq \mathcal{L}_U(v) \,\},
\end{align*}
and let $\mathcal{D} := \{\, \mathcal{L}_U(v) \cap \{3m+1,\dots,3m+r\} \neq \varnothing \,\}$
denote the event that at least one \emph{free} color (a color in $\{3m+1,\dots,3m+r\}$) appears in $\mathcal{L}_U(v)$.
We claim the following two lower bounds:
\begin{align}
    \label{prop:A} \Pr{\mathcal{A}_i} &\ge \dfrac{2}{m+r+1}, &\forall i\in\{1,2,...,m\};\\
    \label{prop:B} \Pr{\mathcal{B}_i} &\ge \dfrac{1}{m+r},   &\forall i\in\{1,2,...,m\};
\end{align}
and a covering property:
\begin{align}
    \label{prop:cover} 
    \Pr{\mathcal{C}_1 \,\vee\, \cdots \,\vee\, \mathcal{C}_m \,\vee\, \mathcal{D}} &= 1, \quad\text{ and consequently, }\sum_{i=1}^m \Pr{\mathcal{C}_i} \;+\; \Pr{\mathcal{D}} \;\ge\; 1. 
\end{align}
Assuming these inequalities, we obtain a lower bound on the expected bounding list size. 
Observe
\[
\Pr{3i-2 \in \mathcal{L}_U(v)} + \Pr{3i \in \mathcal{L}_U(v)}
  \;=\; \Pr{\mathcal{A}_i} + \Pr{\mathcal{C}_i}, \qquad \forall i\in\{1,2,...,m\};
\]
since the left-hand side counts the indicators of the two events and the right-hand side decomposes that sum into the indicator of ``at least one" plus the indicator of ``both". Hence
\[
\begin{aligned}
\mathbb{E}\bigl[\,|\mathcal{L}_U(v)|\,\bigr]
&= \sum_{c\in[q]} \Pr{c \in \mathcal{L}_U(v)} \\
&\ge \left(\sum_{c\in\{1,...,3m\}} \Pr{c \in \mathcal{L}_U(v)}\right) + \Pr{\mathcal{L}_U(v) \cap \{3m+1,\dots,3m+r\} \neq \varnothing}\\
&= \sum_{i=1}^m \bigl(\Pr{3i-2 \in \mathcal{L}_U(v)} + \Pr{3i-1 \in \mathcal{L}_U(v)} + \Pr{3i \in \mathcal{L}_U(v)}\bigr) + \Pr{\mathcal{D}}\\
&= \left(\sum_{i=1}^m \Pr{\mathcal{A}_i} \right) + \left(\sum_{i=1}^m \Pr{\mathcal{B}_i} \right) + \left(\sum_{i=1}^m \Pr{\mathcal{C}_i} \right) + \Pr{\mathcal{D}}.
\end{aligned}
\]
Applying the bounds from \eqref{prop:A}--\eqref{prop:cover} yields
\[
\mathbb{E}\bigl[\,|\mathcal{L}_U(v)|\,\bigr]
 \;\ge\; \frac{2m}{m+r+1} + \frac{m}{m+r} + 1.
\]
Since $q < 2.5\Delta$ implies $r < 2m$, it follows that
\[
\frac{2m}{m+r+1} + \frac{m}{m+r} + 1 \;>\; \frac{2m}{3m} + \frac{m}{3m} + 1 \;=\; 2.
\]
Hence for every $q < 2.5\Delta$ we have
$\mathbb{E}\bigl[\,|\mathcal{L}_U(v)|\,\bigr] \;>\; 2$
as claimed.

It remains to prove \eqref{prop:A}--\eqref{prop:cover}.  
These are simple consequences of the marginal correctness of the grand coupling and the explicit worst-case assignment of neighbor lists:

\begin{itemize}
\item \textbf{Proof of \eqref{prop:A}.} 
Fix $i\in\{1,\dots,m\}$.  For every pair $j\neq i$, assigns the two neighbors the colors $3j-2$ and $3j$, and for the $i$-th pair assigns both neighbors the color $3i-1$. Let $\conf$ be the neighborhood coloring constructed above.
Then $|\conf| = \Delta-1 = 2m-1$,
so the update distribution is $\uniform([q]\setminus \conf)$, which places mass
$1/(q-(2m-1))=1/(m+r+1)$ on each of the two colors $3i-2$ and $3i$. 

By marginal correctness for this fixed distribution, each of the events ``$3i-2\in\mathcal{L}_U(v)$'' and ``$3i\in\mathcal{L}_U(v)$'' has measure at least $1/(m+r+1)$, hence
\[
\Pr{\mathcal{A}_i} \;\ge\; \Pr{3i-2 \in \mathcal{L}_U(v)} + \Pr{3i \in \mathcal{L}_U(v)}
\;\ge\; \frac{2}{m+r+1}.
\]

\item \textbf{Proof of \eqref{prop:B}.}
Fix $i\in\{1,\dots,m\}$.  Let $\conf$ be the neighbor configuration that, for every pair $j$, assigns the two neighbors the colors $3j-2$ and $3j$.  Then $|\conf|=\Delta$, so the update distribution is $\uniform([q]\setminus \conf)$, which places mass
$1/(q-\Delta)=1/(m+r)$ on each middle color $3i-1$.  

By marginal correctness for this fixed distribution we obtain
\[
  \Pr{\mathcal{B}_i} \;\ge\; \Pr{3i-1 \in \mathcal{L}_U(v)} \;\ge\; \frac{1}{m+r}.
\]

\item \textbf{Proof of \eqref{prop:cover}.}
Suppose for contradiction that for some $U$ none of $\mathcal{C}_1,\dots,\mathcal{C}_m,\mathcal{D}$ occurs.  Then for every $i$ at most one of the colors $3i-2,3i$ belongs to $\mathcal{L}_U(v)$, and no free color appears in $\mathcal{L}_U(v)$.  For each pair $i$ choose neighbor colors from the lists $\{3i-2,3i-1\}$ and $\{3i-1,3i\}$ so as to ensure every element of $\mathcal{L}_U(v)\cap\{3i-2,3i-1,3i\}$ is included in the resulting neighborhood set; this is always possible when at most one of $\{3i-2,3i\}$ is present.  Doing this for all $i$ yields a neighborhood coloring $\conf$ with
$\mathcal{L}_U(v) \subseteq \omega.$
But the coupling output $f_v([q]\setminus \omega,U)$ must lie in $[q]\setminus \omega$, contradicting the definition of $\mathcal{L}_U(v)$ as the set of possible outputs at seed $U$.  Hence the assumption is false, so for every $U$ at least one of $\mathcal{C}_1,\dots,\mathcal{C}_m,\mathcal{D}$ occurs, thus
\[
\Pr{\mathcal{C}_1 \,\vee\, \cdots \,\vee\, \mathcal{C}_m \,\vee\, \mathcal{D}} \;=\; 1.
\]
\end{itemize}

Combining the three items completes the proof of Theorem~\ref{thm:no-coupling-2}.
\end{proof}

We now present a global version of the above lower bound, which extends the local obstruction in Theorem~\ref{thm:no-coupling-2} to the entire graph, thereby proving Theorem~\ref{thm:no-bounding-chain}.

\begin{theorem}[Worst-case lower bound]
\label{thm:global-no-coupling}
Let $\Delta \ge 3$, $q < 2.5\Delta-1$, and let $n$ be sufficiently large.  
There exists a $\Delta$-regular graph $G = (V, E)$ with $|V| = n$, together with a bounding list configuration $\bl=(\bl(u))_{u \in V}$ satisfying $|\bl(u)| = 2$ for all $u \in V$, such that
for every vertex $v \in V$ and every local grand coupling
$f_v : \mathcal{P}_v \times [0,1] \to [q]$  of the Glauber dynamics update at~$v$ given $\bl$, where $\mathcal{P}_v$ is as in~\eqref{eq:bounding-glauber-dynamics-updates}, 
\[
\mathbb{E}_{U \sim \uniform[0,1]}\!\bigl[\,|\{ f_v(p, U) : p \in \mathcal{P}_v \}|\,\bigr] > 2.
\]
Moreover, the worst-case configuration at each vertex (defined in \Cref{def:worst-case}) can be simultaneously realized across all vertices of~$G$.
\end{theorem}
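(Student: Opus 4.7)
The plan is to derive \Cref{thm:global-no-coupling} as a uniform vertex-by-vertex application of \Cref{thm:no-coupling-2}. Specifically, I will construct a single $\Delta$-regular pair $(G,\mathcal{L})$ in which, for every $v\in V$, the restriction of $\mathcal{L}$ to $\Gamma(v)$ matches the worst-case neighborhood configuration used inside the proof of \Cref{thm:no-coupling-2}. Since the set $\mathcal{P}_v$ defined in~\eqref{eq:bounding-glauber-dynamics-updates} and the notion of a local grand coupling $f_v$ from \Cref{def:grand-coupling-on-bounding-chain} depend only on $(\mathcal{L}(u))_{u\in\Gamma(v)}$, invoking that proof at each $v$ then yields $\mathbb{E}_U[\abs{\{f_v(p,U):p\in\mathcal{P}_v\}}]>2$ simultaneously for every $v$, which is exactly the statement of \Cref{thm:global-no-coupling}.

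For the construction, assume first that $\Delta=2m$ is even and write $q=3m+r$ with $r\ge 0$, matching the parametrization in the proof of \Cref{thm:no-coupling-2}. Take $n$ sufficiently large and divisible by $2\Delta$, and set $N=n/(2m)$, which is then even. Partition $V$ into $2m$ \emph{type classes} $T_{i,\epsilon}$ indexed by $(i,\epsilon)\in[m]\times\{0,1\}$, each of size $N$, and assign bounding lists by type: $\mathcal{L}(v)=\{3i-2,3i-1\}$ for $v\in T_{i,0}$ and $\mathcal{L}(v)=\{3i-1,3i\}$ for $v\in T_{i,1}$. Identify the vertices of each type class with $[N]$ and wire the edges by (i)~the identity perfect matching $\{(i,\epsilon,k)\sim(j,\epsilon',k):k\in[N]\}$ between any two distinct type classes, and (ii)~an arbitrary fixed perfect matching within each type class.

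Each vertex then has exactly one neighbor in each of the $2m$ type classes, giving total degree $2m=\Delta$; the graph is simple because the across-type edges stay between distinct classes while the within-type edges stay inside a single class. To verify the worst-case configuration at any $v\in T_{i,\epsilon}$, pair, for each $j\in[m]$, the unique neighbor of $v$ in $T_{j,0}$ with the unique neighbor of $v$ in $T_{j,1}$: the two lists in such a pair are $\{3j-2,3j-1\}$ and $\{3j-1,3j\}$, whose union has size three, so the pair forms a triangle in the sense of \Cref{def:worst-case}. These $m$ triangle pairs partition $\Gamma(v)$, so $v$ realizes the worst-case configuration; this is literally the configuration fed into the proof of \Cref{thm:no-coupling-2}, so that proof then yields the desired strict lower bound at every vertex in the even and divisible case.

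The main technical obstacle is pure parity: the construction requires $\Delta$ even (so $\Gamma(v)$ can be paired at all) and $n$ divisible by $2\Delta$ (so within-type perfect matchings exist). For $n$ sufficiently large but not divisible by $2\Delta$, the theorem's hypothesis is satisfied either by restricting attention to the arithmetic progression of such $n$, or by disjointly attaching a small auxiliary copy of the construction to absorb the remainder. For odd $\Delta$, \Cref{def:worst-case} is adapted to permit one unpaired neighbor per vertex, and a direct rerun of the probability estimates in the proof of \Cref{thm:no-coupling-2} confirms that one such lone extra neighbor preserves the strict inequality $\mathbb{E}_U[\abs{\{f_v(p,U):p\in\mathcal{P}_v\}}]>2$ throughout $q<2.5\Delta-1$; the global construction is patched by threading one extra dummy matching through the type structure. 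None of these adjustments alters the structural core of the argument above.
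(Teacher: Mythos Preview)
Your approach is correct and essentially identical to the paper's: both construct a $\Delta$-regular graph in which every vertex's neighborhood realizes the worst-case triangle configuration of \Cref{def:worst-case}, then invoke \Cref{thm:no-coupling-2} vertex by vertex. The only difference is the choice of graph---the paper simply takes disjoint copies of the complete bipartite graph $K_{\Delta,\Delta}$ (with the same paired bounding-list assignment on each side) rather than your type-class blowup with matchings---and both treatments handle the odd-$\Delta$ and divisibility issues with comparable informality.
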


\begin{proof}
Without loss of generality assume $\Delta$ is even and write $\Delta = 2m$.
We first construct a $\Delta$-regular graph on $2\Delta$ vertices that realizes the worst-case
bounding-list configuration at every vertex, and then obtain larger graphs by taking disjoint copies.

Let $G = (V_1 \cup V_2, E)$ be the complete bipartite graph with bipartition sets
$V_1$ and $V_2$, where $|V_1| = |V_2| = \Delta$.  Then $G$ is $\Delta$-regular.
Partition the vertices of $V_1$ into $m$ disjoint pairs and likewise partition $V_2$ into $m$ disjoint pairs.
Index the pairs by $i=1,\dots,m$.  For the $i$-th pair in each side we assign bounding lists as follows:
the two vertices in the $i$-th pair receive bounding lists $\{3i-2,\,3i-1\}$ and $\{3i-1,\,3i\}$ respectively.
Do this assignment independently for $V_1$ and $V_2$ (using the same indexing convention).

By construction each bounding list has size~$2$.  Moreover, for any fixed vertex $v\in V_1$ its neighborhood $\Gamma(v)=V_2$ is partitioned into the $m$ disjoint pairs on $V_2$, and each such pair forms a triangle in the sense of Definition~\ref{def:worst-case}: the two lists in the pair each have size~$2$ and their union has size~$3$.
Hence the collection of bounding lists on $G$ is in the worst-case configuration at every $v\in V_1$.
The same argument applies to every $v\in V_2$ because the construction on $V_1$ is symmetric.
Therefore every vertex of $G$ is in the worst-case configuration of Definition~\ref{def:worst-case}.

By Theorem~\ref{thm:no-coupling-2}, for any vertex $v$ in such a worst-case configuration
and for any grand coupling $f_v$ for the single-site Glauber update at $v$ we have
\[
\mathbb{E}_{U \sim \uniform[0,1]}\!\bigl[\,|\{ f_v(p, U) : p \in \mathcal{P}_v \}|\,\bigr] > 2.
\]
Thus the $2\Delta$-vertex graph $G$ constructed above satisfies the claimed property for every vertex.

Finally, for any sufficiently large $n$ that is a multiple of $2\Delta$ we obtain an $n$-vertex
$\Delta$-regular graph with the same property by taking the disjoint union of $n/(2\Delta)$
independent copies of $G$, and by carrying the same bounding-list assignment on each copy.
Every vertex in this disjoint union is in the same local worst-case configuration, hence the
conclusion of the theorem holds simultaneously for every vertex of the $n$-vertex graph.
This completes the proof.
\end{proof}

As noted in \Cref{sec:preliminary-bounding-chains}, any bounding-chain–based grand coupling selects a vertex $v$ at random in each step and then applies a local grand coupling at $v$.  
Thus \Cref{thm:global-no-coupling} is in fact a formal restatement of \Cref{thm:no-coupling-informal}, 
and the latter can be derived as a straightforward corollary of the former.

\section{Design of Grand Couplings}
\label{sec:coupling-design}

In this section, we explore the design of grand couplings with desirable contraction properties, a key ingredient for CFTP algorithms that succeed near the critical threshold.

\subsection{A simplified formulation for grand coupling design}
\label{sec:grand-coupling-seeding}
Based on the bounding-chain framework, we formulated in \Cref{prob:optimal-grand-coupling} an optimization problem for designing grand couplings.  

To obtain a more tractable and intuitive formulation, we approximate $\mathcal{P}_v$ by considering only the set of colors 
$\slv = \bigcup_{u \in \Gamma(v)} \bl(u)$ appearing in the neighborhood bounding list of $v$, which coarsely tracks the local uncertainty around $v$.  
This motivates the following simplified grand coupling design problem, which serves as an approximation of \Cref{prob:optimal-grand-coupling}.

\begin{problem}[Grand coupling design problem]
\label{prob:optimal-grand-coupling-2}
Fix a subset of colors $\slv \subseteq [q]$ and an integer $\Delta$.  
Let $\mathcal{S}_\Delta \defeq \{\, C \subseteq \slv : |C| \le \Delta \,\}$ denotes the collection of all subsets of $\slv$ of size at most~$\Delta$,
and define
\[
\mathcal{P}=\mathcal{P}(\slv_{\Delta}) \defeq \set{\, \uniform([q]\setminus C) : C \in \mathcal{S}_\Delta \,}.
\]

The objective is to design a grand coupling 
 $f : \mathcal{P} \times [0,1] \to [q]$
(cf.\ Definition~\ref{def:grand-coupling}) that minimizes
\[
\mathbb{E}_{U \sim \uniform[0,1]}\!\bigl[\,|\{\, f(p, U) : p \in \mathcal{P} \,\}|\,\bigr].
\]
\end{problem}

It is straightforward to verify that 
any grand coupling $f : \mathcal{P} \times [0,1] \to [q]$ that is a feasible solution to \Cref{prob:optimal-grand-coupling-2} on an instance $(\slv, \Delta)$
is also a feasible solution to \Cref{prob:optimal-grand-coupling} for any vertex $v \in V$ with neighborhood bounding list configuration $(\bl(u))_{u \in \Gamma(v)}$ satisfy $|\Gamma(v)| = \Delta$ and $\slv = \bigcup_{u \in \Gamma(v)} \bl(u)$.
%
Indeed, for any $\conf \in \bigotimes_{u \in \Gamma(v)} \bl(u)$, we have
\[
\{\, \conf_u : u \in \Gamma(v) \,\}
\subseteq \slv
\quad \text{and} \quad
|\{\, \conf_u : u \in \Gamma(v) \,\}| \le |\Gamma(v)| = \Delta.
\]
Hence $\mathcal{P}_v \subseteq \mathcal{P}(\slv_\Delta)$, 
where $\mathcal{P}_v$ denotes the family of local update distributions at $v$ defined in \eqref{eq:bounding-glauber-dynamics-updates}.
Consequently, any grand coupling $f$ constructed on $\mathcal{P}(\slv_\Delta)$ in \Cref{prob:optimal-grand-coupling-2} 
remains a valid grand coupling on $\mathcal{P}_v$ in \Cref{prob:optimal-grand-coupling}.

The simplified formulation in \Cref{prob:optimal-grand-coupling-2} is not only conceptually cleaner, but also captures the key structure underlying previous designs.  
In particular, the core grand couplings employed in the algorithms of Bhandari et al.~\cite{bhandari2020improved} (Algorithm~3, \textsc{Contract}) and Jain et al.~\cite{jain2021perfectly} (Algorithm~3, \textsc{Seeding}) can both be viewed within the framework of \Cref{prob:optimal-grand-coupling-2}.  
Building on this connection, we present a new construction that optimally solves \Cref{prob:optimal-grand-coupling-2}.


\subsection{Optimal design of grand coupling}
\label{sec:optimal-seeding}
In this subsection, we present an optimal construction of the grand coupling for \Cref{prob:optimal-grand-coupling-2}, focusing on the nontrivial case where $\Delta < |\slv| < q$.


For a grand coupling $f:\mathcal{P}\times [0,1] \to [q]$ that is feasible for \Cref{prob:optimal-grand-coupling-2},  
define, for each $U\in[0,1]$,  the bounding set of possible outcomes
$$\mathcal{L}_U \defeq \{\, f(p, U) : p \in \mathcal{P} \,\}.$$
The grand coupling $f$ naturally induces a probability distribution $\{r_k\}_{k\ge 1}$ over the sizes of $\bl_U$, where 
$$r_k = \Pr[{U\sim\uniform[0,1]}]{\,|\mathcal{L}_U| = k\,}.$$

We next present a system of linear constraints on the probabilities $\{r_k\}$ which correspond to the grand couplings that are feasible for \Cref{prob:optimal-grand-coupling-2}.

\begin{lemma}[Linear system for grand couplings]
\label{lem:rk-constraints}
Fix an instance $(\slv,\Delta)$ for \Cref{prob:optimal-grand-coupling-2}, and let $\mathcal{P}=\mathcal{P}(\slv_\Delta)$.  
Let $\{r_k\}_{k=1}^\Delta$ be a probability distribution on $\{1,2,\dots,\Delta\}$ (so that $\sum_{k=1}^\Delta r_k = 1$).  

Consider the system of linear constraints 
\begin{align}
\label{eq:linear-constraint-rk-j}
  \sum_{k=1}^{\Delta} r_k \cdot 
  \frac{\binom{j}{k-1}}{\binom{|\slv|}{k-1}}
   \le \frac{q - |\slv|}{q - j},\qquad \forall 1\le j\le \Delta.
\end{align}
Then the following statements hold:
\begin{enumerate}[label=(\arabic*),wide]
\item \textbf{(Necessity)}
\label{LP-Necessity}
Let $f : \mathcal{P}\times[0,1]\to[q]$ be a grand coupling that is feasible for Problem~\ref{prob:optimal-grand-coupling-2} on
$(\slv,\Delta)$.
The sequence of probabilities $r_k=\Pr[{U\sim \uniform[0,1]}]{\,|\mathcal{L}_U|=k\,}$, for $k=1,\dots,\Delta$, satisfies \eqref{eq:linear-constraint-rk-j}. 
\item \textbf{(Sufficiency)}
\label{LP-Sufficiency}
Conversely, if a probability distribution $\{r_k\}_{k=1}^\Delta$ satisfies \eqref{eq:linear-constraint-rk-j}, then there exists a grand coupling $f : \mathcal{P}\times[0,1]\to[q]$ that is feasible for Problem~\ref{prob:optimal-grand-coupling-2} on
$(\slv,\Delta)$ and realizes the size law
\[
    \Pr[{U\sim \uniform[0,1]}]{\,|\mathcal{L}_U| = k\,} = r_k,    \quad k=1,2,\dots,\Delta.
\]
\end{enumerate}
\end{lemma}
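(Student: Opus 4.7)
The plan is to characterize the achievable size-laws $\{r_k\}$ as the feasibility region of~\eqref{eq:linear-constraint-rk-j} by exploiting marginal correctness of the grand coupling against the test distributions $p_S=\uniform([q]\setminus S)$ for $S\subseteq\slv$ of varying size, together with a symmetrization/averaging argument over such $S$.

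For necessity (Part~\ref{LP-Necessity}), I would fix $j\in\{1,\dots,\Delta\}$ and $S\in\binom{\slv}{j}$. Marginal correctness of $p_S$ gives $\Pr[U]{f(p_S,U)\notin\slv}=(q-|\slv|)/(q-j)$. Since $f(p_S,U)\in\mathcal{L}_U\cap([q]\setminus S)$, the event $\{\mathcal{L}_U\cap\slv\subseteq S\}$ forces the output into $[q]\setminus\slv$, yielding $\Pr[U]{\mathcal{L}_U\cap\slv\subseteq S}\le(q-|\slv|)/(q-j)$. Averaging over $S$ uniform in $\binom{\slv}{j}$ and swapping expectations via Fubini produces
\[
\mathbb{E}_U\!\left[\binom{j}{m_U}\!\big/\!\binom{|\slv|}{m_U}\right]\le\frac{q-|\slv|}{q-j},\qquad m_U:=|\mathcal{L}_U\cap\slv|.
\]
Since $|\mathcal{L}_U|\le\Delta<|\slv|$, the same argument applied to any $S'\in\binom{\slv}{\Delta}$ containing $\mathcal{L}_U\cap\slv$ forbids $\mathcal{L}_U\subseteq\slv$, so $m_U\le|\mathcal{L}_U|-1$. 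Monotone decrease of $\binom{j}{m}/\binom{|\slv|}{m}$ in $m$ then lower-bounds the left side by $\sum_k r_k\binom{j}{k-1}/\binom{|\slv|}{k-1}$, yielding~\eqref{eq:linear-constraint-rk-j}.

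For sufficiency (Part~\ref{LP-Sufficiency}), I would construct $f$ explicitly. Given $U$, sample independently $k\sim\{r_k\}$, a set $T\subseteq\slv$ uniformly from $\binom{\slv}{k-1}$, and $c^*\in[q]\setminus\slv$ uniformly, then set $\mathcal{L}_U=T\cup\{c^*\}$. For each $C\subseteq\slv$ with $|C|=j\le\Delta$: if $T\subseteq C$ output $c^*$ (forced); otherwise output $c^*$ with probability $\alpha_j$ and a uniformly random element of $T\setminus C$ with probability $1-\alpha_j$, where $\alpha_j$ depends only on $j$. Choosing $\alpha_j$ so that $\Pr[U]{f(p_C,U)\in[q]\setminus\slv}=(q-|\slv|)/(q-j)$ yields the closed form
\[
\alpha_j=\frac{(q-|\slv|)/(q-j)-\sum_k r_k\binom{j}{k-1}/\binom{|\slv|}{k-1}}{1-\sum_k r_k\binom{j}{k-1}/\binom{|\slv|}{k-1}},
\]
which lies in $[0,1]$ exactly when~\eqref{eq:linear-constraint-rk-j} holds (the numerator is nonnegative by~\eqref{eq:linear-constraint-rk-j}, while the denominator dominates the numerator since $|\slv|\ge j$). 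Marginal uniformity over $c\in[q]\setminus\slv$ follows from the uniform choice of $c^*$; uniformity over $c\in\slv\setminus C$ follows from permutation-symmetry of $T$ inside $\slv$ together with the residual-mass identity $\Pr[U]{f(p_C,U)\in\slv\setminus C}=(|\slv|-j)/(q-j)$; and by construction $|\mathcal{L}_U|=k\sim\{r_k\}$.

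The main obstacle is the sufficiency half: exhibiting a single scalar tuning $\alpha_j$ (independent of $k$ and $T$) that simultaneously realizes marginal correctness for every $C\in\binom{\slv}{j}$. The structural idea that enables this is taking $\mathcal{L}_U\cap\slv$ to be a uniformly random $(k-1)$-subset and $\mathcal{L}_U\setminus\slv$ a single uniform color: this choice aligns the hypergeometric identity $\binom{|\slv|-(k-1)}{j-(k-1)}/\binom{|\slv|}{j}=\binom{j}{k-1}/\binom{|\slv|}{k-1}$ with~\eqref{eq:linear-constraint-rk-j}, so that feasibility of $\alpha_j\in[0,1]$ becomes exactly the linear inequality indexed by $j$—precisely the sharpness that closes the gap between Parts~\ref{LP-Necessity} and~\ref{LP-Sufficiency}.
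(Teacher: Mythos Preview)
Your proposal is correct and takes essentially the same approach as the paper: necessity via averaging the marginal-correctness identity over uniformly random $S\in\binom{\slv}{j}$ combined with the observation that $\mathcal{L}_U$ must meet $[q]\setminus\slv$ (so $m_U\le|\mathcal{L}_U|-1$), and sufficiency via the construction $\mathcal{L}_U=T\cup\{c^*\}$ with $T$ a uniform $(k-1)$-subset of $\slv$, $c^*$ uniform in $[q]\setminus\slv$, and a scalar bias $\alpha_j$ tuned so that the output lands in $[q]\setminus\slv$ with probability exactly $(q-|\slv|)/(q-j)$. The only cosmetic difference is that the paper realizes $T$ as the length-$(k-1)$ prefix of a random permutation of $\slv$ and outputs the \emph{first} prefix element not in $C$ (rather than a fresh uniform draw from $T\setminus C$); your $\alpha_j$ and the paper's $\alpha_C$ are complementary ($\alpha_j^{\text{you}}=1-\alpha_C^{\text{paper}}$), but the two constructions are distributionally equivalent.
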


\begin{remark}
\label{remark:Delta-constraint}
In Lemma~\ref{lem:rk-constraints} we restrict attention to grand couplings for which the random bounding list $\mathcal{L}_U$ always has size at most $\Delta$.  This restriction is without loss of generality: the permutation-based ``unconstrained'' coupling from Section~\ref{sec:warm-up} already produces a bounding list of size $\Delta+1$ for every seed. 
Thus, any coupling that sometimes outputs size $\Delta+1$ can be treated separately as the straightforward (unrefined) case. Consequently, when searching for non-straightforward, size-reducing couplings, it suffices to consider laws supported on $\{1,2,\dots,\Delta\}$.
\end{remark}

Since the objective of \Cref{prob:optimal-grand-coupling-2} is to minimize $\mathbb{E}[|\mathcal{L}_U|] = \sum_{k} k \cdot r_k$ where $r_k = \Pr{|\mathcal{L}_U| = k}$,
we can equivalently reformulate \Cref{prob:optimal-grand-coupling-2} as a linear program over $\{r_k\}$ subject to~\eqref{eq:linear-constraint-rk-j}.

\begin{theorem}
\label{thm:lp-characterization}
Fix integers $\Delta \ge 1$ and $q > \Delta$, and let $\slv \subseteq [q]$ satisfy $\Delta < |\slv| < q$.
For each $j,k \in \{1,2,\dots,\Delta\}$, define
\[
z_j(k) \;\defeq\; \frac{\binom{j}{k-1}}{\binom{|\slv|}{k-1}}.
\]
Consider the linear program
\begin{align}
\label{eq:rk-lp-full}
\text{minimize}\quad & \sum_{k=1}^\Delta k\cdot r_k \\
\text{subject to}\quad & \sum_{k=1}^\Delta r_k = 1, \nonumber\\
                      & \sum_{k=1}^\Delta r_k \cdot z_j(k) \le \dfrac{q-|\slv|}{q-j}
                        &&j=1,2,\dots,\Delta, \nonumber\\
                      & r_k \ge 0 &&k=1,2,\dots,\Delta. \nonumber
\end{align}
Let $\mathrm{OPT}_{\mathrm{LP}}$ denote the optimal objective value of \eqref{eq:rk-lp-full}.

Then $\mathrm{OPT}_{\mathrm{LP}}$ equals the minimum expected number of coupling outcomes over all grand couplings $f : \mathcal{P}(\slv_\Delta)\times [0,1] \to [q]$ that are feasible for
Problem~\ref{prob:optimal-grand-coupling-2} on the instance $(\slv,\Delta)$, i.e.,
\[
\mathrm{OPT}_{\mathrm{LP}}=\min_{f:\text{ a grand coupling on }\mathcal{P}(\slv_\Delta)}\mathbb{E}_{U \sim \uniform[0,1]}\!\bigl[|\{f(p, U) : p \in \mathcal{P}\}|\bigr].
\]
\end{theorem}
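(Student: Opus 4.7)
The plan is to reduce the combinatorial minimization in \Cref{prob:optimal-grand-coupling-2} directly to the linear program \eqref{eq:rk-lp-full} by invoking \Cref{lem:rk-constraints}, using the fact that the objective $\mathbb{E}_{U}[|\mathcal{L}_U|]$ depends on the coupling $f$ only through the law $r_k \defeq \Pr[{U\sim\uniform[0,1]}]{|\mathcal{L}_U|=k}$ of the random size $|\mathcal{L}_U|$. By \Cref{remark:Delta-constraint}, we may restrict attention throughout to couplings with $|\mathcal{L}_U|\le\Delta$ almost surely, which matches the range of $k$ appearing in the LP; any coupling that sometimes produces size $\Delta+1$ is handled separately as the unrefined permutation-based construction of \Cref{sec:warm-up}.

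For the direction $\mathrm{OPT}_{\mathrm{LP}}\le\min_f\mathbb{E}_{U}[|\mathcal{L}_U|]$, I would fix an arbitrary grand coupling $f:\mathcal{P}(\slv_\Delta)\times[0,1]\to[q]$ feasible for \Cref{prob:optimal-grand-coupling-2} on $(\slv,\Delta)$ and set $r_k=\Pr[{U\sim\uniform[0,1]}]{|\mathcal{L}_U|=k}$ for $k=1,\dots,\Delta$. Then $\{r_k\}$ is a probability distribution on $\{1,\dots,\Delta\}$, and by the necessity part of \Cref{lem:rk-constraints} it satisfies \eqref{eq:linear-constraint-rk-j}, hence is feasible for \eqref{eq:rk-lp-full}. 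The LP objective evaluated at $\{r_k\}$ equals $\sum_{k=1}^{\Delta} k\cdot r_k=\mathbb{E}_{U}[|\mathcal{L}_U|]$, which is therefore at least $\mathrm{OPT}_{\mathrm{LP}}$.

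For the reverse inequality, let $\{r_k^{\star}\}$ be an optimal solution of \eqref{eq:rk-lp-full}, so that $\sum_{k=1}^{\Delta}k\cdot r_k^{\star}=\mathrm{OPT}_{\mathrm{LP}}$. By the sufficiency part of \Cref{lem:rk-constraints}, there exists a grand coupling $f^{\star}$ feasible for \Cref{prob:optimal-grand-coupling-2} on $(\slv,\Delta)$ whose induced size law is exactly $\{r_k^{\star}\}$. Its expected bounding set size is then $\sum_{k=1}^{\Delta}k\cdot r_k^{\star}=\mathrm{OPT}_{\mathrm{LP}}$, which shows $\min_f\mathbb{E}_{U}[|\mathcal{L}_U|]\le\mathrm{OPT}_{\mathrm{LP}}$. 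Combining the two bounds yields the theorem.

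The main difficulty is not in this theorem, which is a clean LP reformulation once \Cref{lem:rk-constraints} is in hand, but in establishing the lemma itself. The necessity half should follow by averaging the indicator $\mathbf{1}[c\in\mathcal{L}_U]$ for a well-chosen color $c\in\slv$ against the uniform marginal of $f(\uniform([q]\setminus C),\cdot)$ for every $C\in\mathcal{S}_\Delta$ containing $c$, with the binomial ratios $z_j(k)=\binom{j}{k-1}/\binom{|\slv|}{k-1}$ arising as the conditional fraction of seeds that pull $c$ into $\mathcal{L}_U$ given $|\mathcal{L}_U|=k$. The harder sufficiency half will require an explicit interval-packing construction in the spirit of \Cref{fig:grand-coupling-intervals}: one must simultaneously realize the prescribed size law $\{r_k\}$ and the uniform marginal on $[q]\setminus C$ for every $C\in\mathcal{S}_\Delta$, and the inequalities $\sum_k r_k\cdot z_j(k)\le (q-|\slv|)/(q-j)$ are precisely the capacity constraints that guarantee no color interval gets oversubscribed when the packing is performed.
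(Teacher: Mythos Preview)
Your proposal is correct and matches the paper's approach exactly: the paper's entire proof of \Cref{thm:lp-characterization} is the one-line observation that it follows directly from the two halves of \Cref{lem:rk-constraints}, which is precisely the two-direction argument you spell out (necessity gives $\mathrm{OPT}_{\mathrm{LP}}\le\min_f\mathbb{E}[|\mathcal{L}_U|]$, sufficiency gives the reverse). Your final paragraph speculating on how \Cref{lem:rk-constraints} itself is proved is not needed for this theorem, and in fact diverges somewhat from the paper's actual arguments---necessity there goes through the containment event $\mathcal{L}_U^{\slv}\subseteq C$ for a uniformly random $C$ of size $j$ rather than color-by-color indicators, and sufficiency is an explicit randomized decoder built from a random tuple $(K,\pi,c_0,U')$ rather than a direct interval packing---but for \Cref{thm:lp-characterization} this is immaterial.
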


\Cref{thm:lp-characterization} follows directly from \Cref{lem:rk-constraints}, since \Cref{lem:rk-constraints} provides both necessary and sufficient conditions for any sequence $\{r_k\}_{k=1}^\Delta$ that correspond to a feasible solution to \Cref{prob:optimal-grand-coupling-2}.

We next provide a proof of Lemma~\ref{lem:rk-constraints}, addressing both the necessity and sufficiency of the linear constraints.

\begin{proof}[Proof of \Cref{lem:rk-constraints}]
We split the proof into two parts corresponding to the two statements of the lemma.

\paragraph{\textbf{Proof of \cref{LP-Necessity}}}

Let $f : \mathcal{P}\times[0,1]\to[q]$ be a grand coupling that is feasible for Problem~\ref{prob:optimal-grand-coupling-2} on
$(\slv,\Delta)$.
For each $U \in [0,1]$, recall that
$\mathcal{L}_U \defeq \{\, f(p, U) : p \in \mathcal{P}\,\}$
denote the bounding set of possible outcomes of the grand coupling $f$ for the random seed $U$.  

For convenience, define $\tlv \defeq [q]\setminus \slv.$
Intuitively, $\tlv$ consists of colors that are guaranteed to be unused by neighbors, as they do not appear in the neighborhood bounding list.

Given any bounding list $\mathcal{L}_U$, we can always decompose it into the part contained in $\slv$ and the part contained in $\tlv$, that is,
\[
\mathcal{L}_U = \mathcal{L}^\slv_U \cup \mathcal{L}^\tlv_U, 
\qquad \mathcal{L}^\slv_U \subseteq \slv,\;\; \mathcal{L}^\tlv_U \subseteq \tlv.
\]

We first show that the component $\mathcal{L}^\tlv_U$ must be non-empty for any $U\in[0,1]$.  
By contradiction, suppose $\mathcal{L}^\tlv_U = \emptyset$ for some $U\in[0,1]$.  
Then we have $\mathcal{L}_U = \mathcal{L}^\slv_U \subseteq \slv$ and $|\mathcal{L}_U| \le \Delta$.  
Taking $C = \mathcal{L}_U$, the distribution $\uniform([q]\setminus C) \in \mathcal{P}$ would be a valid input of \Cref{prob:optimal-grand-coupling-2}, but all colors in its support would lie outside the bounding list $\mathcal{L}_U$, violating the correctness of the grand coupling.

Recall that the probability sequence $\{r_k\}_{k=1}^\Delta$ is defined by $r_k=\Pr[{U\sim \uniform[0,1]}]{\,|\mathcal{L}_U|=k\,}$, for $k=1,\dots,\Delta$.
Fix an integer $j \in \{1,2,\dots,\Delta\}$. Let $\mathcal{S}^*_j \defeq \{\, C \subseteq \slv : |C| \le \Delta \,\}$ denotes the collection of all subsets of $\slv$ of size exactly~$j$.

Now consider sampling a neighbor configuration $C \in \mathcal{S}^*_j$ uniformly at random.
Then for any fixed $\mathcal{L}_U$ of size $k$, note that the subset $\mathcal{L}^\slv_U \subseteq \slv$ has size at most $k-1$, since we have already established that $\mathcal{L}^\tlv_U$ is non-empty. 
The probability that a uniformly sampled configuration $C$ contains all colors in $\mathcal{L}^\slv_U$ is therefore lower bounded by
\begin{align}
\label{eq:LP-nec-1}
\Pr[C \sim \uniform \mathcal{S}^*_j]{\mathcal{L}^\slv_U \subseteq C}
= \frac{\binom{j}{|\mathcal{L}^\slv_U|}}{\binom{|\slv|}{|\mathcal{L}^\slv_U|}}
\ge \frac{\binom{j}{k-1}}{\binom{|\slv|}{k-1}}.
\end{align}

On the other hand, 
by marginal correctness of grand coupling we therefore have, for every fixed $C$, let $p_c \defeq \uniform([q]\setminus C)$,
\[
\Pr[{U\sim\uniform[0,1]}]{ f(p_C,U)\in\tlv } \;=\; \frac{q-|\slv|}{q-j}.
\]
Averaging over $C\sim\uniform(\mathcal{S}_j)$ yields the identity
\begin{align}
\label{eq:LP-nec-2}
\Pr[{C\sim\uniform(\mathcal{S}_j),\;U\sim\uniform[0,1]}]{ f(p_C,U)\in\tlv } \;=\; \frac{q-|\slv|}{q-j}.
\end{align}

Finally observe that the event $\mathcal{L}_U^\slv \subseteq C$ implies $f(p_C,U)\in\tlv$: if every color from $\mathcal{L}_U^\slv$ is banned by $C$, then no color in $\slv$ can be produced by $f(p_C,U)$.
Combining \eqref{eq:LP-nec-1}, \eqref{eq:LP-nec-2} and the above observation, using the law of total probability yields
\begin{align*}
\frac{q-|\slv|}{q-j}
&= \Pr[{C,U}]{ f(p_C,U)\in\tlv } \\
&= \sum_{k=1}^\Delta \Pr[{U}]{\,|\mathcal{L}_U|=k\,}
      \cdot \Pr[{C,U}]{ f(p_C,U)\in\tlv \;\big|\; |\mathcal{L}_U|=k } \\
&\ge \sum_{k=1}^\Delta \Pr[{U}]{\,|\mathcal{L}_U|=k\,}
      \cdot \Pr[{C,U}]{ \mathcal{L}_U^\slv \subseteq C \;\big|\; |\mathcal{L}_U|=k } \\
&\ge \sum_{k=1}^\Delta r_k \cdot \frac{\binom{j}{k-1}}{\binom{|\slv|}{k-1}}.
\end{align*}
Since the above inequality holds for each fixed $j\in\{1,2,\dots,\Delta\}$,  
we conclude that the collection $\{r_k\}_{k=1}^\Delta$ satisfies exactly the family of inequalities specified in~\eqref{eq:linear-constraint-rk-j}.


\paragraph{\textbf{Proof of \cref{LP-Sufficiency}}}

Recall that $\mathcal{S}_\Delta \defeq \{\, C \subseteq \slv : |C| \le \Delta \,\}$ and $\tlv \defeq [q]\setminus\slv$.  
Let $\{r_k\}_{k=1}^\Delta$ be a probability distribution satisfies \eqref{eq:linear-constraint-rk-j}, and for each $C\in \mathcal{S}_\Delta$, define
\[
P_C \;=\; \sum_{k=1}^{\Delta} r_k \cdot \frac{\binom{|C|}{k-1}}{\binom{|\slv|}{k-1}},
\qquad
Q_C \;=\; \frac{q-|\slv|}{\,q-|C|\,}.
\]
By \eqref{eq:linear-constraint-rk-j} we have $P_C \le Q_C$ for every such $C$. (Note that $|\slv|>\Delta$ in our setting, so $Q_C<1$.)

We now construct a grand coupling $f:\mathcal{P}\times[0,1]\to [q]$ that realizes the prescribed distribution $\{r_k\}$.  
Concretely, for each seed $U\in[0,1]$ we deterministically extract a   where
\begin{itemize}
  \item $K\in\{1,\dots,\Delta\}$ is drawn with $\Pr{K=k}=r_k$;
  \item $\pi$ is a uniformly random permutation of $\slv$;
  \item $c_0$ is drawn uniformly from $\tlv$;
  \item $U'\sim\uniform[0,1]$ is independent of the previous choices.
\end{itemize}
As usual, such independent randomness can be encoded measurably into a single uniform seed $U$.

Given an input distribution $p=\uniform([q]\setminus C)\in\mathcal{P}$ (where $C\in \mathcal{S}_\Delta$), and given the internal randomness $(K,\pi,c_0,U')$ obtained from $U$, let $A=\{\pi_1,\dots,\pi_{K-1}\}$ denote the prefix of $\pi$ of length $K-1$. Define
\[
\alpha_C \;=\; \frac{1-Q_C}{1-P_C} \in (0,1].
\]
We then set
\begin{align}
\label{eq:bounding-chain-opt}
f\bigl(\uniform([q]\setminus C),U\bigr)=
\begin{cases}
\text{the first element of $A$ that lies in $\slv\setminus C$,} & \text{if } A \not\subseteq C \text{ and } U' < \alpha_C,\\[4pt]
c_0, & \text{otherwise.}
\end{cases}
\end{align}

We verify that $f$ is a valid grand coupling and that it realizes the law $\Pr{|\mathcal{L}_U|=k}=r_k$.

First we check marginal correctness. 
Let $X$ denote the random variable corresponding to the output of $f$ above.
Condition on $C$. By construction,
\[
\Pr{A\subseteq C} \;=\; \sum_{k=1}^\Delta r_k \cdot \frac{\binom{|C|}{k-1}}{\binom{|\slv|}{k-1}} \;=\; P_C,
\]
hence $\Pr{A\not\subseteq C}=1-P_C$. The algorithm outputs a color in $\tlv$ exactly when either $A\subseteq C$ (in which case we always output $c_0$) or when $A\not\subseteq C$ but $U'\ge\alpha_C$. Therefore
\begin{align*}
\Pr{X\in \tlv}
&= \Pr{A\subseteq C} + \Pr{A\not\subseteq C}\cdot(1-\alpha_C) \\
&= P_C + (1-P_C)\Bigl(1 - \frac{1-Q_C}{1-P_C}\Bigr) \;=\; Q_C.
\end{align*}
When the output falls in $\tlv$ it equals the independent uniform color $c_0$, so conditional on this event the distribution over $\tlv$ is uniform; consequently each color $c\in\tlv$ is output with marginal probability
\[
\Pr{X=c} \;=\; \frac{Q_C}{|\tlv|} \;=\; \frac{1}{q-|C|}.
\]

It remains to check the colors in $\slv\setminus C$. The event that the output lies in $\slv\setminus C$ occurs exactly when $A\not\subseteq C$ and $U'<\alpha_C$. Conditioned on $A\not\subseteq C$ and $K=k$, the decoder returns the first element of $A$ that lies in $\slv\setminus C$. By symmetry of $\pi$, for fixed $k$ this "first-not-in-$C$" element is uniformly distributed over $\slv\setminus C$ (whenever $A\not\subseteq C$). Therefore each color $c\in\slv\setminus C$ has conditional probability $1/(|\slv|-|C|)$ given that the output is in $\slv\setminus C$. Since $\Pr{X\in\slv\setminus C} = 1-Q_C$,
we deduce that for every $c\in\slv\setminus C$
\[
\Pr{X=c} = (1-Q_C)\cdot\frac{1}{|\slv|-|C|} = \frac{1}{q-|C|}.
\]
Combining with the uniformity on $\tlv$, we conclude that the distribution of $X$ equals $\uniform([q]\setminus C)$, as required. Hence $f$ is a valid grand coupling.

Finally we verify that the coupling indeed realizes the prescribed size distribution.
Recall that the quadruple $(K,\pi,c_0,U')$ is deterministically decoded from the seed $U\sim\uniform[0,1]$,
and $A=\{\pi_1,\dots,\pi_{K-1}\}$ denote the prefix of $\pi$ of length $K-1$.
By the construction of $f$ (see~\eqref{eq:bounding-chain-opt}),
for every choice of set $C\subseteq\slv$ with $|C|\le\Delta$ the output of the decoder is either $c_0$ or the first element of $A$ that does not lie in $C$.
Consequently, the image of $\mathcal{P}$ under $f(\cdot,U)$ is contained in $A\cup\{c_0\}$.
Moreover, it can be easily verified that for each color in $A\cup\{c_0\}$ there exists some set $C$ for which the decoder outputs that color, so every element of $A\cup\{c_0\}$ is actually attainable.
Therefore for this fixed seed $U$ we have
\[
\mathcal{L}_U=\{\, f(p,U) : p\in\mathcal{P} \,\} = A\cup\{c_0\},
\]
and hence $|\mathcal{L}_U| = |A|+1 = K$.

Since $K$ was drawn according to $\Pr{K=k}=r_k$, it follows that
$\Pr{\,|\mathcal{L}_U|=k\,} = \Pr{K=k} = r_k$
for every $k=1,2,\dots,\Delta$, as required.

This completes the construction and the proof of \Cref{LP-Sufficiency}.
\end{proof}

\paragraph{\textbf{Solving the linear program in \eqref{eq:rk-lp-full}}}
We now discuss how to approach the solution of the linear program \eqref{eq:rk-lp-full}, 
which characterizes the optimal grand coupling via the distribution $\{r_k\}_{k=1}^\Delta$.  

We begin by examining a relaxed version of the program that retains only the constraint corresponding to $j=\Delta$ in \eqref{eq:rk-lp-full}.
Set
\[
z(k)\;=\; z_\Delta(k) \;=\; \frac{\binom{\Delta}{k-1}}{\binom{|\slv|}{k-1}},
\qquad
w \;=\; \frac{q-|\slv|}{q-\Delta},
\]
and consider the reduced linear program
\begin{align}
\label{eq:rk-lp}
\text{minimize}\quad & \sum_{k=1}^\Delta k\cdot r_k \nonumber\\
\text{subject to}\quad & \sum_{k=1}^\Delta r_k = 1, \\ 
                      & \sum_{k=1}^\Delta r_k \cdot z(k) \le w, \nonumber\\
                      & r_k \ge 0 \quad (k=1,\dots,\Delta). \nonumber
\end{align}
The function $z(k)$ defined above is convex in $k$ for fixed $\Delta$ and $|\slv| > \Delta$. By standard extremal properties of linear programs with a single convex moment constraint (see, e.g., \cite{hardy1952inequalities} and related moment–LP arguments), any optimal solution of \eqref{eq:rk-lp} is supported on at most two consecutive values of $k$. In other words, there exists an index $1<i\le \Delta$ such that the optimal distribution $\{r_k\}$ has $r_k=0$ for all $k\not\in\{i-1,i\}$. Consequently the optimal $\{r_k\}$ (when it exists) can be constructed as follows. Let $i\in\{2,\dots,\Delta\}$ be the smallest index such that $z(i)\le w$. If such an $i$ exists, set
\begin{equation}
\label{eq:rk-opt}
r_{i-1} \;=\; \frac{w - z(i)}{\,z(i-1) - z(i)\,}, \qquad
r_{i}   \;=\; \frac{z(i-1) - w}{\,z(i-1) - z(i)\,},
\end{equation}
with all other $r_k=0$.  By construction $r_{i-1},r_i\ge0$, $r_{i-1}+r_i=1$,
and this choice satisfies $\sum_k r_k \cdot z(k)=w$, hence is optimal for
\eqref{eq:rk-lp}.

\begin{remark}
\label{remark:lp}
The program \eqref{eq:rk-lp} is a relaxation of the original LP
\eqref{eq:rk-lp-full}. In our subsequent analysis we work with the relaxed program \eqref{eq:rk-lp} because it admits the explicit solution \eqref{eq:rk-opt}. In typical parameter regimes of interest
(e.g.\ when $q = 2\Delta$), one can verify that satisfying the $j=\Delta$ constraint (i.e.\ the single constraint used in the relaxed program) generally implies the remaining constraints for $j=1,2,\dots,\Delta-1$.
Thus the relaxation is non-pathological for the regimes we care about, and the explicit solution \eqref{eq:rk-opt} yields the optimal (or near-optimal) distribution of $|\mathcal{L}_U|$ in those settings.
\end{remark}

Combining the explicit probabilities in~\eqref{eq:rk-opt} with the constructive proof in Lemma~\ref{lem:rk-constraints} thus yields an explicit optimal grand coupling for Problem~\ref{prob:optimal-grand-coupling-2} under the stated assumptions.
\color{black}

\subsection{Constructions of grand couplings}
\label{sec:grand-coupling-applications}

In this subsection, we present several concrete constructions of grand couplings that serve as building blocks for the CFTP algorithm stated in \Cref{thm:main}.

\subsubsection{Grand coupling: \seeding}
The grand coupling \emph{\seeding} refers to the optimal construction introduced in \Cref{sec:optimal-seeding}.  
This coupling is the key to our improvement of the coloring threshold, reducing from the previous bound $q > (8/3 + o(1))\,\Delta$~\cite{jain2021perfectly} to the nearly optimal bound $q > (2.5 + o(1))\,\Delta$.

\begin{lemma}[Seeding coupling]
\label{lem:seeding}
Fix any vertex $v \in V$ and neighborhood bounding lists $(\bl(u))_{u \in \Gamma(v)}$. 
Let $\mathcal{P}_v$ denote the set of all possible local update distributions at $v$ in the Glauber dynamics on $q$-colorings, given the bounding lists $(\bl(u))_{u \in \Gamma(v)}$ on $v$’s neighbors. Formally,
\begin{align}\label{eq:bounding-glauber-dynamics-updates-2}
     \mathcal{P}_v = \left\{\uniform([q] \setminus \conf): \conf \in \bigotimes_{u\in\Gamma(v)}\bl(u)\right\}
 \end{align}
%
Assume 
\[
    q \ge \tfrac{7}{3}\,\Delta
    \quad\text{and}\quad
    |\slv| \le 2\Delta,
    \quad\text{where }\slv \defeq \bigcup_{u \in \Gamma(v)} \bl(u).
\]
Then there exists a grand coupling $\seeding_v : \mathcal{P}_v \times [0,1] \to [q]$ satisfying:
\begin{enumerate}[label=(\alph*), ref=\alph*]
    \item \label{seeding-size} (\textbf{Bounding list reduction }) 
    For random seed $U \in [0,1]$, let $\bl_U(v) = \{ \seeding_v(p, U) : p \in \mathcal{P}_v \}$ denote the set of possible coupling outcomes. Then $|\bl_U(v)| \in \{2,3\}$, and
    $$
        \mathbb{E}_{U \sim \uniform[0,1]}\!\bigl[\,|\bl_U(v)|\,\bigr] \le 2 + \frac{(|\slv| + \Delta - q)(|\slv|-1)}{(q-\Delta)\,\Delta}.
    $$
    \item \label{seeding-efficiency} (\textbf{Efficiency}) 
    The grand coupling $\seeding_v$ can be implemented so that each update of the bounding list can be performed in expected time $O(\Delta \log \Delta)$.
\end{enumerate}
\end{lemma}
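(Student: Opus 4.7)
The plan is to reduce the construction of $\seeding_v$ to the simplified grand-coupling problem \Cref{prob:optimal-grand-coupling-2} on the instance $(\slv,\Delta)$, and then read off an explicit optimal coupling from the relaxed LP \eqref{eq:rk-lp} via the closed-form solution \eqref{eq:rk-opt}. The inclusion $\mathcal{P}_v \subseteq \mathcal{P}(\slv_\Delta)$ discussed after \Cref{prob:optimal-grand-coupling-2} guarantees that any grand coupling on $\mathcal{P}(\slv_\Delta)$ restricts to a valid grand coupling on $\mathcal{P}_v$, so it suffices to carry out the construction for the simplified problem.

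First I would show that, under the hypotheses $q \ge 7\Delta/3$ and $|\slv| \le 2\Delta$, the smallest index $i \ge 2$ with $z(i) \le w$ is exactly $i=3$, where $z(k)=\binom{\Delta}{k-1}/\binom{|\slv|}{k-1}$ and $w=(q-|\slv|)/(q-\Delta)$. The inequality $z(2)=\Delta/|\slv|>w$ simplifies (after clearing denominators and factoring out $|\slv|-\Delta$) to $q<|\slv|+\Delta$, which characterizes the nontrivial regime where seeding provides contraction; the inequality $z(3)\le w$ rearranges to $\Delta(\Delta-1)(q-\Delta)\le(q-|\slv|)|\slv|(|\slv|-1)$, which at the extremal $|\slv|=2\Delta$ asymptotically becomes $q\ge 7\Delta/3$. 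With $i=3$ valid, the closed form \eqref{eq:rk-opt} applies, and using $z(2)-z(3)=\Delta(|\slv|-\Delta)/(|\slv|(|\slv|-1))$ together with $z(2)-w=(|\slv|-\Delta)(|\slv|+\Delta-q)/(|\slv|(q-\Delta))$, a direct simplification gives
\[
r_3 \;=\; \frac{z(2)-w}{z(2)-z(3)} \;=\; \frac{(|\slv|+\Delta-q)(|\slv|-1)}{(q-\Delta)\Delta},
\]
so that the size law is supported on $\{2,3\}$ and $\mathbb{E}[|\bl_U(v)|]=2+r_3$, matching the bound in part~(\ref{seeding-size}).

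Next I would instantiate the constructive sufficiency direction of \Cref{lem:rk-constraints}: from the seed $U$, decode a triple $(K,\pi,c_0)$ where $K\in\{2,3\}$ has law $\Pr{K=k}=r_k$, the prefix $\pi_1,\dots,\pi_{K-1}$ comes from a uniform random permutation of $\slv$, and $c_0$ is uniform on $\tlv=[q]\setminus\slv$. Setting $\bl_U(v)=\{\pi_1,\dots,\pi_{K-1},c_0\}$ yields a bounding list of size exactly $K\in\{2,3\}$, since $\tlv$ and $\slv$ are disjoint. For the runtime in part~(\ref{seeding-efficiency}), assembling $\slv$ from the neighbor bounding lists costs $O(\Delta)$ time with hash-based sets; sampling $K$, the at most two prefix elements $\pi_1,\pi_2$, and $c_0$ by rejection sampling from $[q]$ (which accepts with probability $|\tlv|/q\ge 1/7$ under our hypotheses) takes expected $O(1)$. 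The logarithmic factor in the claim absorbs routine set-lookup overhead.

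The principal technical obstacle I anticipate is verifying that the two-point distribution $(r_2,r_3)$ produced by the relaxed LP is also feasible for the full system \eqref{eq:rk-lp-full}: one must check that $r_2\, z_j(2) + r_3\, z_j(3) \le (q-|\slv|)/(q-j)$ holds for every $1\le j\le \Delta-1$, not merely at $j=\Delta$. \Cref{remark:lp} asserts that this is typically automatic, but a clean verification under the specific hypotheses $q\ge 7\Delta/3$ and $|\slv|\le 2\Delta$ requires an explicit convexity/monotonicity argument on $z_j(k)$ as a function of $j$. Once this feasibility check is carried out, the remaining ingredients are routine algebraic manipulations and the sampling primitives already packaged by the sufficiency direction of \Cref{lem:rk-constraints}.
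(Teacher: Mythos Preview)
Your proposal is correct and takes essentially the same approach as the paper: reduce to \Cref{prob:optimal-grand-coupling-2} on $(\slv,\Delta)$, take the two-point law with $i=3$ from \eqref{eq:rk-opt} yielding the displayed $r_3$, and invoke the sufficiency direction of \Cref{lem:rk-constraints} for the construction and the size law. The paper likewise defers the full-system feasibility check (all $j\le\Delta$, not just $j=\Delta$) to the assertion in \Cref{remark:lp}, so you have correctly identified the one step that is not spelled out in detail.
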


\begin{proof}
Recall that solving Problem~\ref{prob:optimal-grand-coupling-2} for the input $(\slv,\Delta)$ yields a grand coupling that is valid for any neighborhood bounding lists with union equal to $\slv$.  Hence it suffices to produce an admissible probability law $\{r_k\}_{k=1}^\Delta$ satisfying the feasibility constraints of Lemma~\ref{lem:rk-constraints}. The lemma then guarantees existence of a corresponding grand coupling realizing
$\Pr{\,|\mathcal{L}_U(v)| = k\,}=r_k$ for every $k$.

We take $\{r_k\}$ to be the two-point law obtained from the explicit
formula \eqref{eq:rk-opt} with index $i=3$.  Concretely define
\[
r_3 \;=\; \frac{(|\slv| + \Delta - q)(|\slv|-1)}{(q-\Delta)\,\Delta},
\qquad
r_2 \;=\; 1-r_3,
\qquad
r_k \;=\; 0 \quad (k\notin\{2,3\}).
\]

Under the parameter restrictions of the lemma ($q\ge 7/3\,\Delta$ and $|\slv|\le 2\Delta$) one checks routinely that
$r_2,r_3\in[0,1]$.  
It can also be easily verified that the above choice of
$\{r_k\}$ satisfies the linear constraint~\eqref{eq:linear-constraint-rk-j} throughout this regime, as earlier mentioned in Remark ~\ref{remark:lp}.

By Lemma~\ref{lem:rk-constraints}, there exists a valid grand coupling $f$ whose explicit construction was given in the proof of that lemma.  
For $U \sim \uniform[0,1]$, this coupling satisfies $\Pr{\,|\mathcal{L}_U(v)| = k\,} = r_k$ for every $k$.  
We take precisely this constructed coupling as our $\seeding_v$.

\medskip\noindent\textbf{Proof of \cref{seeding-size}.}
Because $\Pr{\,|\mathcal{L}_U(v)|=k\,}=r_k$ and only $r_2,r_3$ are nonzero,
the size $|\mathcal{L}_U(v)|$ takes values in $\{2,3\}$ and its
expectation equals
\[
\mathbb{E}\bigl[\,|\mathcal{L}_U(v)|\,\bigr]
= 2\cdot r_2 + 3\cdot r_3
= 2 + r_3
= 2 + \frac{(|\slv| + \Delta - q)(|\slv|-1)}{(q-\Delta)\,\Delta},
\]
which establishes the claimed bound.

\medskip\noindent\textbf{Proof of \cref{seeding-efficiency}.}
Since $|\slv| \le 2\Delta$, constructing $\slv = \bigcup_{u \in \Gamma(v)} \bl(u)$ requires $O(\Delta \log \Delta)$ time.  
Within $\seeding_v$, apart from generating a random permutation of $\slv$ in $O(\Delta)$ time, all encoding and decoding operations take constant time.  
Hence, the overall computational complexity is $O(\Delta \log \Delta)$.
\end{proof}

\subsubsection{Grand Coupling: $\compress$ and $\disjoint$}

\label{sec:grand-coupling-compress}

Next, we present two canonical constructions of grand couplings, referred to as \emph{\compress} and \emph{\disjoint}, which serve as fundamental components in the CFTP algorithm discussed later.  

The \compress{} and \disjoint{} algorithms were originally introduced in~\cite{bhandari2020improved} and~\cite{jain2021perfectly}, respectively. Both algorithms can be naturally interpreted as specific instances of grand couplings via the bounding chains within our framework.

In particular, the \compress{} coupling can be viewed as a refinement of Huber's permutation-based grand coupling~\cite{huber1998exact} discussed in~\Cref{sec:warm-up}, optimized for the bounding-chain setting.

\begin{lemma}[Compress coupling]
\label{lem:compress}
Fix any vertex $v \in V$ and neighborhood bounding lists $(\bl(u))_{u \in \Gamma(v)}$. 
Let $\mathcal{P}_v$ be defined as \eqref{eq:bounding-glauber-dynamics-updates-2}.
Assume $q \ge \Delta + 1$. 
For any fixed subset $A \subseteq [q]$ with $|A| = \Delta$, called a \emph{reference color set}, there exists a grand coupling $\compress^A_v:\mathcal{P}_v\times[0,1]\to[q]$ satisfying:
\begin{enumerate}[label=(\alph*), ref=\alph*]
    \item \label{compress-size} (\textbf{Bounding property}) 
    For random seed $U \sim\uniform[0,1]$, the set of possible outcomes is
    \[
        \set{\compress^A_v(p, U) : p \in \mathcal{P}_v} = A \cup \{ c \},
    \]
    where $c$ is distributed uniformly in $[q]\setminus A$ (with the randomness induced by $U$).
    \item \label{compress-efficiency} (\textbf{Efficiency}) 
    The grand coupling $\compress^A_v$ can be implemented so that each update of the bounding list can be performed in expected time $O(\Delta \log \Delta)$.
\end{enumerate}
\end{lemma}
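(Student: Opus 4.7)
My plan is to build $\compress^A_v$ explicitly and then verify both properties by direct calculation. The seed $U$ will encode three independent pieces of randomness: a color $c$ drawn uniformly from $[q]\setminus A$, a uniform random permutation $\pi_A$ of $A$, and an auxiliary uniform $U' \in [0,1]$. By construction every possible output will lie in $A \cup \{c\}$, so I declare $A \cup \{c\}$ as the bounding list; part~\ref{compress-size} is then immediate, with $c$ uniform on $[q]\setminus A$ by design.

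The decoding rule, given $p = \uniform([q]\setminus \conf)$ with $|\conf|\le \Delta$, proceeds in two cases. If $c \in \conf$, return the first element of $\pi_A$ outside $\conf$ — well-defined, since $c \in \conf$ together with $c \notin A$ forces $|A\cap\conf|\le |\conf|-1\le\Delta-1$ and thus $A \setminus \conf \ne \emptyset$. If $c \notin \conf$, set the threshold $t \defeq (\Delta - |\conf|)/(q - |\conf|) \in [0,1]$ (using $q\ge\Delta+1$); return the first element of $\pi_A$ outside $\conf$ when $U' \le t$, and return $c$ otherwise. The boundary cases $|\conf|=\Delta$ (where $t=0$, so $c$ is returned deterministically) and $\conf=\emptyset$ (where the rule reproduces Huber's permutation coupling on $A\cup\{c\}$) serve as sanity checks.

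The technical heart is verifying marginal correctness: $\Pr[U]{\compress^A_v(p,U) = y} = 1/(q - |\conf|)$ for every $y \in [q]\setminus\conf$. I write $a \defeq |A \cap \conf|$ and $b \defeq |\conf \setminus A|$, and split into two cases. For $y \in [q]\setminus A\setminus\conf$, the output equals $y$ exactly when $c = y$ (probability $1/(q-\Delta)$) and $U' > t$, yielding $(1-t)/(q-\Delta) = 1/(q-|\conf|)$ directly from the definition of $t$. For $y \in A \setminus \conf$, averaging over whether $c \in \conf$ (with $\Pr{c\in\conf}=b/(q-\Delta)$) and using the uniformity of $\pi_A$ on $A \setminus \conf$, the claim reduces to the identity
\[
b(q-a-b) + (q-\Delta-b)(\Delta-a-b) = (q-\Delta)(\Delta-a),
\]
a direct expansion. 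This algebraic collapse — that the single threshold $t$, essentially forced by the $y=c$ case, simultaneously calibrates the $y\in A$ case — is the main conceptual point and, I expect, the only non-routine step. Efficiency (part~\ref{compress-efficiency}) then follows by inspection: sampling $c$, $\pi_A$ (by Fisher–Yates), and $U'$ takes $O(\Delta)$ time, and emitting the bounding list $A\cup\{c\}$ is $O(\Delta)$. With $\conf$ placed into a hash set, the membership test $c\in\conf$ and the scan of $\pi_A$ for the first element outside $\conf$ each take expected $O(\Delta)$ time, so the stated $O(\Delta\log\Delta)$ bound per bounding-list update absorbs standard preprocessing overhead.
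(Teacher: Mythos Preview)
Your construction is essentially identical to the paper's: both encode $U$ as a uniform permutation of $A$, a uniform color in $[q]\setminus A$, and an auxiliary uniform real, with the same two-branch decoding rule (your threshold $t=(\Delta-|\conf|)/(q-|\conf|)$ is simply $1$ minus the paper's threshold $(q-\Delta)/(q-|\Gamma|)$, so the rules coincide up to flipping the $U'$ inequality). Your verification is in fact more explicit than the paper's---you write out and check the identity $b(q-a-b)+(q-\Delta-b)(\Delta-a-b)=(q-\Delta)(\Delta-a)$ for the $y\in A\setminus\conf$ case, whereas the paper asserts this case by a symmetry remark without the algebra.
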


\begin{proof}
We encode the internal randomness of the $\compress$ procedure into a single random seed $U\in[0,1]$.  
Concretely, for each seed~$U$ we deterministically extract a triple $(\pi,x',U')$ where
\begin{itemize}
    \item $\pi$ is a uniformly random permutation of $A$,
    \item $x'$ is a uniformly random element of $[q]\setminus A$,
    \item $U'$ is an independent uniform draw from $[0,1]$.
\end{itemize}

For each feasible neighbor configuration $\conf\in\mathsf{Conf}(v)$—with induced forbidden set $\Gamma_\conf(v)$ and corresponding distribution $p_\conf = \uniform([q]\setminus\Gamma_\conf(v))\in \mathcal{P}_v$—define
\[
\compress^A_v(p_\conf,U) =
\begin{cases}
x', & \text{if } x' \notin \Gamma_\conf(v)\ \text{and}\ U' \le \dfrac{q-\Delta}{\,q-|\Gamma_\conf(v)|\,},\\[6pt]
\text{the first } y \text{ in } \pi \text{ with } y \notin \Gamma_\conf(v), & \text{otherwise.}
\end{cases}
\]

\noindent\textbf{Grand coupling property (correctness).}  
Fix a configuration~$\conf$ and let $\Gamma=\Gamma_\conf(v)$.  
Conditional on $(\pi,x')$ (or marginalizing over them), the above rule returns~$x'$ with probability
\[
  \Pr{x'\notin\Gamma} \cdot \Pr{U' \le \tfrac{q-\Delta}{q-|\Gamma|}}
  = \frac{1}{q-\Delta}\cdot \frac{q-\Delta}{q-|\Gamma|} = \frac{1}{q-|\Gamma|}.
\]
If the first branch is not taken, the decoder returns the first element of~$\pi$ outside~$\Gamma$; since $\pi$ is a uniform permutation of~$A$, each color in $[q]\setminus\Gamma$ (that lies within~$A$) is equally likely to be the first available, and its probability is again $1/(q-|\Gamma|)$.  
Hence marginally $\compress^A_v(p_\conf,U)\sim \uniform([q]\setminus\Gamma)$, verifying that $\compress^A_v$ is a valid grand coupling in the sense of Definition~\ref{def:grand-coupling-on-bounding-chain}.

\medskip
\noindent\textbf{Proof of \cref{compress-size}.}  
For any fixed seed (equivalently fixed $(\pi,x',U')$), the image of $\mathcal{P}_v$ under $\compress^A_v(\cdot,U)$ is contained in $A\cup\{x'\}$.  
Conversely, $x'$ itself is attainable (for those configurations with $x'\notin\Gamma$ and satisfying the $U'$-condition), and every element of~$A$ is attainable (for configurations that force the decoder to use~$\pi$).  
Thus for each seed the image equals $A\cup\{c\}$ for some $c\in[q]\setminus A$, and the distribution of $c$ over seeds is uniform on $[q]\setminus A$.

\medskip
\noindent\textbf{Proof of \cref{compress-efficiency}.}  
Decoding the seed (recovering $(\pi,x',U')$) and forming the predicted set $A\cup\{x'\}$ both take $O(\Delta)$ time.  
Given a concrete configuration~$\conf$, evaluating $\compress^A_v(p_\conf,U)$ requires scanning~$\pi$ until the first element not in~$\Gamma_\conf(v)$, which can be implemented in expected $O(\Delta \log \Delta)$ time using standard hashing or balanced lookup structures.  
This yields the claimed efficiency bound.

This completes the proof of Lemma~\ref{lem:compress}.
\end{proof}

The \disjoint{} coupling corresponds to a key step in~\cite{jain2021perfectly}, enabling the reduction of all bounding list sizes from 2 to 1 in the regime where $q>2.5\Delta$.

\begin{lemma}(Disjoint coupling \cite[Lemma 4.3]{jain2021perfectly})
\label{lem:disjoint}
Fix any vertex $v \in V$ and neighborhood bounding lists $(\bl(u))_{u \in \Gamma(v)}$. 
Let $\mathcal{P}_v$ be defined as \eqref{eq:bounding-glauber-dynamics-updates-2}.
Define 
\begin{align*}
    \slv &\defeq \bigcup_{u \in \Gamma(v)} \bl(u),\\
    \mathcal{Q} &\defeq \bigcup_{\substack{u \in \Gamma(v):|\mathcal{L}(u)|=1}} \mathcal{L}(u),\\
    \mathcal{D} &\defeq \bigcup_{\substack{u \in \Gamma(v):\\\forall w\in\Gamma(v), \mathcal{L}(u)\cup\mathcal{L}(w)=\varnothing}} \mathcal{L}(u).
\end{align*}
%
Assume $q \ge 2.5 \Delta$ and $|\mathcal{L}(u)| \leq 2$ for all $u\in\Gamma(v)$.
Then, there exists a grand coupling $\disjoint:\mathcal{P}_v\times[0,1]\to[q]$ satisfying:
    \begin{enumerate}[label=(\alph*), ref=\alph*]
    \item \label{disjoint-size} (\textbf{Bounding list reduction}) 
    For random seed $x \in [0,1]$, let $\bl_U(v) = \{ \disjoint_v(p, x) : p \in \mathcal{P}_v \}$ denote the set of possible coupling outcomes. Then $|\bl_U(v)|\in\{1,2\}$, and 
    $$
        \Pr[U \sim \uniform\mathrm{[}0,1\mathrm{]}]{|\bl_U(v)|=1} \ge 1-\frac{|\slv|-|\qlv|}{q-\Delta}+\frac{|\dlv|/2}{q-|\qlv|-|\dlv|/2}.
    $$
    \item \label{disjoint-efficiency} (\textbf{Efficiency}) 
    The grand coupling  $\disjoint_v$ can be implemented so that each update of the bounding list can be performed in expected time $O(\Delta \log \Delta)$.
\end{enumerate}
\end{lemma}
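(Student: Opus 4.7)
The plan is to construct $\disjoint_v$ by augmenting a standard permutation-based grand coupling with independent fair-coin flips that pre-commit one color from each ``isolated'' neighbor pair. I would closely mirror the JSS construction~\cite{jain2021perfectly}, but recast it inside the unified framework of \Cref{def:grand-coupling-on-bounding-chain} so that correctness and efficiency both reduce to generic lemma-level arguments.

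First I would classify the neighbors of $v$ into three groups: singleton neighbors (contributing $\qlv$, which is always forbidden in every $\conf$); size-$2$ neighbors that share a color with some other neighbor's list (``entangled'' pairs); and size-$2$ neighbors disjoint from every other neighbor's list (``isolated'' pairs, whose lists together form $\dlv$, giving exactly $|\dlv|/2$ isolated pairs). The shared seed $U$ is then decoded into a uniformly random permutation $\pi$ of $[q]\setminus\qlv$ together with an independent fair coin $b_P$ for each isolated pair $P\subseteq\dlv$; the coin selects a committed representative $c_P\in P$ and treats the other element of $P$ as though it were forbidden. Given the input distribution $p_\conf=\uniform([q]\setminus\conf(\Gamma(v)))$, the decoder scans $\pi$ in order and returns the first color $c$ with $c\notin\conf(\Gamma(v))$ and, if $c$ lies in some isolated pair $P$, also $c=c_P$. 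Marginal correctness reduces to the usual symmetry argument on $\pi$: conditional on the coin outcomes, every color in $[q]\setminus\conf(\Gamma(v))$ that survives the coin filter is equally likely to appear first, so the output is distributed as $\uniform([q]\setminus\conf(\Gamma(v)))$.

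Next I would analyze $|\bl_U(v)|$ by walking $\pi$ for a fixed seed. Colors in $[q]\setminus\slv$ are available under every consistent $\conf$ and hence contribute identical outputs; colors from an isolated pair have already been resolved by $b_P$; so only colors in $\slv\setminus(\qlv\cup\dlv)$ can produce genuine ambiguity between two possible outputs. A short case analysis yields $|\bl_U(v)|\in\{1,2\}$. The baseline term $1-(|\slv|-|\qlv|)/(q-\Delta)$ is simply the probability that the first element of $\pi$ avoids those entangled-pair colors. The additive bonus $(|\dlv|/2)/(q-|\qlv|-|\dlv|/2)$ captures the extra probability that the first available color is itself some committed representative $c_P$: by conditioning on the rank of $P$'s two colors in $\pi$ and on $b_P$, a reindexing argument in which each isolated pair is replaced by its committed color produces an effective pool of size $q-|\qlv|-|\dlv|/2$, from which the claimed expression follows exactly.

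The main obstacle will be the probability accounting for the $\dlv$-bonus, specifically verifying that the coin-commit step neither disturbs the marginal uniformity nor double-counts the isolated pairs, yet still delivers the full $|\dlv|/2$ factor in both numerator and denominator. Efficiency (part~(\ref{disjoint-efficiency})) is routine: only colors in $\slv$ together with a single auxiliary draw from $[q]\setminus\slv$ need be materialized explicitly, and generating the relevant permutation together with $O(\Delta)$ scan and coin operations can be implemented in expected $O(\Delta\log\Delta)$ time using standard hashing.
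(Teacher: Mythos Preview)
The paper does not give its own proof of this lemma; it cites \cite[Lemma~4.3]{jain2021perfectly} directly. Your reconstruction has a genuine gap in the marginal-correctness step. You argue that ``conditional on the coin outcomes, every color in $[q]\setminus\conf(\Gamma(v))$ that survives the coin filter is equally likely to appear first, so the output is distributed as $\uniform([q]\setminus\conf(\Gamma(v)))$.'' The premise is correct but the conclusion does not follow: conditional on the coins, the output is uniform only over the colors that \emph{survive} the filter, and this is a strict subset of $[q]\setminus\conf(\Gamma(v))$ whenever the coin on an isolated pair $P=\{a,b\}$ commits to the element that happens to be used by $\conf$. If $\conf(u)=a$, then $b$ is available but passes your filter only when $c_P=b$; when $c_P=a$, neither $a$ (forbidden) nor $b$ (not committed) is accepted, so $b$ can never be output under those coins. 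Averaging over the fair coin, $b$ receives only half the probability that uniformity demands. A concrete check: one neighbor with $\bl(u)=\{1,2\}$, $q=3$, $\conf(u)=1$; your rule yields $\Pr{\text{output}=2}=\tfrac14$ instead of~$\tfrac12$.

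This also undermines your bound analysis. Your ``baseline'' computation identifies $1-(|\slv|-|\qlv|)/(q-\Delta)$ with the probability that $\pi_1$ avoids $\slv\setminus\qlv$, but that probability is $(q-|\slv|)/(q-|\qlv|)$ since $\pi$ ranges over $[q]\setminus\qlv$; the appearance of $q-\Delta$ in the denominator comes from a separate worst-case estimate in the JSS argument, not from the law of $\pi_1$. In the correct coupling the decoder, upon reaching the committed token of an isolated pair, outputs whichever element of the pair is actually available (this is what ``perfectly couple the pair of colors'' in the paper's remark refers to), and the two terms in the stated bound then arise from two distinct events rather than from a single first-hit calculation. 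Both the construction and the probability accounting need to be redone along those lines.
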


\begin{remark}
The disjoint coupling builds on Jerrum’s pairing coupling analysis of Glauber dynamics~\cite{jerrum1995very}, refined under the bounding-list framework.
Its key idea is to \emph{perfectly couple} the pair of colors within the same independent bounding set, making them to be regarded as single color, effectively reducing the neighborhood size to $E = |\slv| - \tfrac{1}{2}|\dlv|$.
While~\cite{bhandari2020improved} treated $E = |\slv|$ and thus required $q > 3\Delta$, the refinement of~\cite{jain2021perfectly} showed that when $|\mathcal{L}(u)| \le 2$, one has $E \le 1.5\Delta$, so $q > 2.5\Delta$ ensures contraction.
Our \cref{thm:no-coupling-2} show that this $2.5\Delta$ threshold is tight, making the disjoint coupling optimal under these assumptions.
\end{remark}
\section{CFTP for \texorpdfstring{$q$}{q}-Colorings with \texorpdfstring{$q>2.5\Delta$}{q > 2.5Δ}} 
\label{sec:algorithm}

In this section, we present the construction of a perfect sampler for proper $q$-colorings within the classical
\emph{Coupling From The Past (CFTP)} framework, implemented using the grand couplings developed in \Cref{sec:coupling-design}. 
We show that this construction gives an efficient CFTP algorithm for exact sampling from the uniform distribution over proper $q$-colorings, 
under the condition $\Delta>(2.5+\eta)\Delta$ where $\eta=2\sqrt{(\log\Delta+1)/\Delta}=o(1)$, thus completing the proof of \Cref{thm:main}.

\subsection{Algorithm overview} 
\label{subsec:algorithm-overview}
Let $\Omega$ be a finite configuration space endowed with the uniform measure $\mu_\Omega$,
and let $F : \Omega\times[0,1]\to\Omega$ be a randomized \emph{update operator} driven by a random seed $U\in[0,1]$.
Associated with $F$ is a \emph{coalescence predicate} $\Phi_F : [0,1]\to\Omega\cup\{\bot\}$, 
which detects whether the update $F(\cdot,U)$ maps all initial configurations in $\Omega$ to a common state.  
Specifically, $\Phi_F(U)=\conf^*\neq\bot$ indicates that all trajectories under $F(\cdot,U)$ coalesce at the same configuration $\conf^*\in\Omega$.

\begin{algorithm}[H]
\caption{CFTP Sampler} 
\label{alg:blocked-cftp-sampler}
\SetKwInput{KwAssertion}{Assertion}
\KwIn{Random seed $U\sim\uniform[0,1]$, interpreted as i.i.d.\ $U_{-1},U_{-2},\ldots\sim\uniform[0,1]$; update operator $F:\Omega\times[0,1]\to\Omega$; coalescence predicate $\Phi_F:[0,1]\to\Omega\cup\{\bot\}$.}
\KwOut{A configuration $\conf^{*}$ from $\Omega$.}
    \For{$t=1,2,\cdots$}{
        Generate an independent $U_{-t}\sim\uniform[0,1]$ using the random seed $U$\;
        \If{$\Phi_F(U_{-t}) \neq \bot$}{
            \Return $\conf^* = 
              F(\cdot,U_{-1}) \circ F(\cdot,U_{-2}) \circ \cdots \circ 
              F(\cdot,U_{-t+1})\circ \Phi_F(U_{-t})$
        }
    }
\end{algorithm}

\begin{lemma} 
\label{lemma:blocked-cftp-sampler} 
Let $F : \Omega\times[0,1] \to \Omega$ and $\Phi_F:[0,1]\to \Omega\cup\{\bot\}$ satisfy:
\begin{enumerate}[label=(\alph*), ref=\alph*]
    \item \label{grand-a} For independent $\conf \sim \mu_\Omega$ and $U\sim\uniform[0,1]$, the random variable $F(\conf, U)$ is distributed according to $\mu_\Omega$.
    \item \label{grand-b} If $\Phi_F(U) \neq \bot$, then $F(\cdot, U)$ is constant over $\Omega$, and $\Phi_F(U)$ equals its unique image.
    \item \label{grand-c} $\Pr[U\sim\uniform{[}0,1{]}]{\Phi_F(U) \neq \bot} \ge 1/2$.
    \item \label{grand-d} 
        The predicate $\Phi_F(U)$ can be evaluated in expected time $T_1$, and $F(\omega,U)$ in expected time $T_2$.
\end{enumerate}
Then \Cref{alg:blocked-cftp-sampler} terminates in expected time $O(T_1 + T_2)$ 
and outputs an exact sample from $\mu_\Omega$, with exponentially decaying tail bounds on its runtime.
\end{lemma}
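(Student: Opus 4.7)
The plan is to prove the three conclusions separately: exponential tails on the termination time, the expected runtime bound, and the correctness of the output distribution. These correspond directly to the three assumptions about $F$ and $\Phi_F$ given in the statement.

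First I would analyze the termination time. Because the seed $U \sim \uniform[0,1]$ is decomposed into i.i.d.\ blocks $U_{-1}, U_{-2}, \ldots \sim \uniform[0,1]$, the events $E_t \defeq \{\Phi_F(U_{-t}) \neq \bot\}$ are mutually independent, each of probability $\ge 1/2$ by assumption~\ref{grand-c}. Let $T \defeq \min\{t \ge 1 : \Phi_F(U_{-t}) \neq \bot\}$ denote the iteration at which \Cref{alg:blocked-cftp-sampler} terminates. Then $T$ is stochastically dominated by a geometric random variable of parameter $1/2$, so $\mathbb{E}[T] \le 2$ and $\Pr{T \ge k} \le 2^{-(k-1)}$ for every $k \ge 1$, giving the claimed exponential tail.

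Next I would bound the runtime. Each of the $T$ iterations of the loop invokes $\Phi_F$ once, contributing expected cost $T_1$ per call, and the terminal step applies $F$ exactly $T-1$ times in the backward composition, each call of expected cost $T_2$. By Wald's identity (applicable because $T$ is a stopping time with respect to the filtration generated by $U_{-1}, U_{-2}, \ldots$), the expected total cost is at most $\mathbb{E}[T] \cdot T_1 + (\mathbb{E}[T]-1) \cdot T_2 = O(T_1 + T_2)$. The exponential tail on $T$ transfers directly to an exponential tail on the runtime.

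The main substance of the proof is the correctness step; this is the classical CFTP argument. I would introduce an \emph{idealized} chain $(\tilde X_t)_{t \le 0}$ on the same probability space, driven by the same random bits $U_{-1}, U_{-2}, \ldots$ and initialized far in the past in stationarity. Formally, for any $N \ge T$, set $\tilde X_{-N} \sim \mu_\Omega$ (drawn independently of the $U_{-t}$), and define $\tilde X_{t+1} \defeq F(\tilde X_t, U_{-(|t|)})$ going forward. Assumption~\ref{grand-a} ensures $\tilde X_t \sim \mu_\Omega$ for every $t \ge -N$. By assumption~\ref{grand-b}, once we reach the iteration $-T$ where $\Phi_F(U_{-T}) \neq \bot$, the map $F(\cdot, U_{-T})$ collapses $\Omega$ to the single state $\Phi_F(U_{-T})$, so $\tilde X_{-T+1} = \Phi_F(U_{-T})$ irrespective of $\tilde X_{-T}$. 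Propagating forward,
\[
\tilde X_0 \;=\; F(\cdot, U_{-1}) \circ \cdots \circ F(\cdot, U_{-T+1}) \circ \Phi_F(U_{-T}) \;=\; \conf^{*},
\]
which is exactly the algorithm's output. Since $\tilde X_0 \sim \mu_\Omega$, we conclude $\conf^* \sim \mu_\Omega$.

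The main obstacle I anticipate is the technical justification of the coupling between the algorithm's output and the idealized stationary chain, since $T$ is a random time that depends on the shared random bits. The clean way to handle this is the observation that $\conf^*$ is a measurable function of $(U_{-1}, \ldots, U_{-T})$ alone, independent of the hypothetical state $\tilde X_{-N}$: for any $N \ge T$ the coalescence at step $-T$ erases all dependence on the initial state. Thus the distributional equality $\conf^* = \tilde X_0 \stackrel{d}{=} \mu_\Omega$ holds without further subtlety, and letting $N \to \infty$ (or equivalently conditioning on the event $\{T \le N\}$, whose probability tends to $1$) removes any dependence on the truncation point.
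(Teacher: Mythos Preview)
Your proposal is correct and follows essentially the same classical CFTP argument as the paper. The paper phrases the correctness step slightly differently---comparing the algorithm's truncated output $\conf_t^*$ with an idealized stationary chain $\conf_t$ via a total-variation bound $\DTV{\mu_{\conf_t}}{\mu_{\conf_t^*}}\le 2^{-t}$ and taking $t\to\infty$---whereas you couple directly to an idealized chain started at time $-N$ and let $N\to\infty$; these are equivalent formulations of the same argument, and your handling of the random stopping time $T$ is sound.
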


\begin{proof}
Let $\conf_0 \sim \mu_\Omega$ be a initial configuration drawn independently to $U$, and let
\[
    \conf_t = F(\cdot, U_{-1}) \circ F(\cdot, U_{-2}) \circ \cdots \circ F(\cdot, U_{-t})(\conf_0)
\]
denote the configuration obtained after applying $t$ grand couplings from the past. By condition~(\ref{grand-a}), the distribution of $\conf_t$ remains uniform over $\Omega$ for all $t$.

Define $\conf_t^*$ to be the output of Algorithm~\ref{alg:blocked-cftp-sampler} truncated at step $t$ (and $\omega^*_t=\bot$ if algorithm does not halt).
Let $\mu_{\conf_t}$ and $\mu_{\conf_t^*}$ denote the distributions of $\conf_t$ and $\conf_t^*$, respectively. 
Under conditions~(\ref{grand-b}) and~(\ref{grand-c}), 
$$
    \DTV{\mu_{\conf_t}}{\mu_{\conf_t^*}}
    = \Pr[U]{\conf_t \neq \conf_t^*}
    = \Pr[U]{\conf_t^*=\bot}
    = \Pr[U]{\forall i \le t : \Phi_F(U_{-i}) = \bot}
    \le 2^{-t},
$$
by the independence of the seeds $U_{-1}, U_{-2}, \ldots, U_{-t}$.

Taking the limit as $t \to \infty$ yields
$$
    \DTV{\mu_\Omega}{\mu_{\conf^*}}
    = \lim_{t \to \infty} \DTV{\mu_{\conf_t}}{\mu_{\conf_t^*}} = 0,
$$
so the output $\conf^*$ is distributed exactly according to $\mu_\Omega$.

Since each iteration halts independently with probability at least $p\ge 1/2$,
the number of iterations follows a geometric distribution with mean $O(1)$.
Therefore, the total expected runtime is $O(T_1 + T_2)$,
and the probability that the algorithm exceeds $t$ iterations decays exponentially.
\end{proof}

The above lemma provides a general CFTP framework applicable to any update operator satisfying conditions~(\ref{grand-a})–(\ref{grand-d}).
We next instantiate this framework for the Glauber dynamics on $q$-colorings.
\begin{remark}
\label{remark:blocked-grand-coupling-glauber}
In our setting, the randomized update operator $F$ in Lemma~\ref{lemma:blocked-cftp-sampler} 
is realized as a finite composition of single-site grand couplings
for the Glauber dynamics (cf.~\Cref{def:grand-coupling-on-bounding-chain}).
Because the uniform measure $\mu_\Omega$ is stationary for the Glauber dynamics,
condition~(\ref{grand-a}) is automatically satisfied.
Moreover, with the bounding chains, the corresponding coalescence predicate $\Phi_F$
inherently satisfies condition~(\ref{grand-b}). 
Consequently, to prove \Cref{thm:main},
it suffices to construct an explicit family of such grand couplings
that ensure the coalescence probability in~(\ref{grand-c}) and the efficiency guarantees in~(\ref{grand-d}).
\end{remark}

\subsection{Global grand couplings from local updates} 
\label{subsec:construct-grand-coupling}

We now explicitly construct the global grand coupling 
$F^* = F(\cdot, U)$ on the configuration space $\Omega$
and its associated coalescence predicate $\Phi_F(U)$,
by composing local grand couplings corresponding to single-site updates.

\subsubsection{Representing global coupling via composition of local couplings.}

Recall from \Cref{sec:grand-coupling-applications} and \Cref{def:grand-coupling-on-bounding-chain} that for each vertex $v\in V$, we have defined several types of local grand couplings $f_v:\mathcal{P}_v\times[0,1]\to[q]$ corresponding to single-site updates in Glauber dynamics, 
including \seeding{}, \compress{}, and \disjoint{}.
We now describe how these local updates are composed into a global coupling according to a carefully designed schedule.

Throughout this construction, the algorithm has access to a global random seed $U\sim\uniform[0,1]$, 
where each bit of $U$ is sampled i.i.d.\ uniformly so that all randomness is derived from this single source.  

In addition, the construction maintains two global data structures:
\begin{itemize}
    \item A \textbf{bounding list} $\mathcal{L}=(\mathcal{L}(v))_{v\in V}$, 
    where each $\mathcal{L}(v)\subseteq[q]$ records the current set of possible color at vertex~$v$;
    \item A \textbf{composition list} $F^*$, which stores the sequence of local updates used to construct the overall mapping $F(\cdot,U)$.
    Each entry of $F^*$ takes the form $(v,f_v,U_v)$, where $v$ is the vertex being updated, $f_v$ is the local grand coupling, and $U_v\in[0,1]$ is the independent random seed used in that update (generated from the global seed $U$).
\end{itemize}
The following subroutine \textsc{Update}$(v,f)$ modifies the bounding list $\mathcal{L}$ at $v$ and appends a new entry to the composition list $F^*$, given a vertex $v\in V$ and a local grand coupling rule $f=(f_u)_{u\in V}$ specifying at each vertex $u$ a local update function $f_u:\mathcal{P}_u\times[0,1]\to[q]$, as in \Cref{def:grand-coupling-on-bounding-chain}.

\begin{algorithm}[H]
\caption{\textsc{Update}$(v, f)$}
\label{alg:update}
\SetKwInput{KwGlobal}{Global Variable}
\KwIn{Vertex $v\in V$; local grand coupling rule $f=(f_u)_{u\in V}$.}
\KwGlobal{Bounding list $\mathcal{L}$; composition list $F^*$; global random seed $U\sim\uniform[0,1]$.}

Generate an independent $U' \sim \uniform[0,1]$ using the random seed $U$\;
$\mathcal{L}(v) \gets \{f_v(p,U'): p\in\mathcal{P}_v\}$\;
Append tuple $(v,f,U')$ to the end of $F^*$\;
\end{algorithm}

Each call to \textsc{Update} modifies only the bounding list $\mathcal{L}(v)$ at vertex~$v$,
and appends one new record $(v,f_v,U_v)$ representing the local update $f_v(\cdot, U_v)$ to $F^*$.  
Intuitively, each such tuple in $F^*$ specifies a single atomic update applied to the current bounding chain.

The resulting global coupling $F^*:\Omega\to\Omega$ is the composition of all local updates in the order they appear in the list.
Given an initial configuration $\conf\in\Omega$, its image under $F^*$ can be evaluated as follows.

\begin{algorithm}[H]
a\caption{Evaluation of $F^*(\conf)$ for initial configuration $\conf$}
\label{alg:evaluate-F}
\KwIn{Configuration $\conf \in \Omega$; composition list $F^* = [(v_1,f_1,U_1),\dots,(v_T,f_T,U_T)]$.}
\KwOut{Updated configuration $F^*(\conf)$.}

\For{$i=1$ \KwTo $T$}{
    $\conf_{v_i} \gets f_{v_i}(p_{v_i}^\conf, U_i)$, where $p_{v_i}^\conf=\uniform([q]\setminus\conf_{\Gamma(v_i)})$ is the Glauber dynamics update\;
}
\Return $\conf$\;
\end{algorithm}




\subsubsection{Neighborhood maintenance via \textsc{CleanUp}.}
We define an auxiliary routine \textsc{CleanUp}$(v,P)$ that applies a sequence of
$\compress$-type updates around $v$ relative to a reference color set $A\subseteq[q]$ built
from the bounding lists of a preserved set $P\subseteq V$. Intuitively, this
aligns the neighborhoods not currently being updated so that subsequent
\textsc{Update} calls meet the required invariants.

\begin{algorithm}[H]
\caption{\textsc{CleanUp}$(v,P)$}
\label{alg:cleanup}
\SetKwInput{KwGlobal}{Global Variable}
\KwIn{Vertex $v$ to be cleaned; subset $P \subseteq V$ of vertices to preserve.}
\KwGlobal{Bounding list $\mathcal{L}$; composition list $F^*$; global random seed $U\sim\uniform[0,1]$.}

$A\gets \textsc{Greedy}(\bl_P=(\bl(w))_{w\in P})$\;
\ForEach{$w \in \Gamma(v)\setminus P$}{
    \textsc{Update}$(w, \compress^{A})$\;
}
\end{algorithm}

\begin{remark}[The \textsc{Greedy} subroutine in \Cref{alg:cleanup}]
The reference set $A$ used in \textsc{CleanUp} is chosen greedily 
based on the current bounding information $\bl_P$:
the idea is to include as many colors as possible from the target color sets 
until reaching size $\Delta$; and if the total size of the targets is less than $\Delta$, 
we include all of them and fill the remaining slots with arbitrary colors.  
This ensures $|A|=\Delta$ while maximizing coverage over the currently active neighborhood.  

The specific priority rule of this greedy selection may vary across different
stages of the algorithm, as the construction of $A$ will be adapted to the
particular coupling objective of each stage.
The detailed forms of these greedy strategies will be provided later in the
proofs corresponding to each phase of the algorithm.
\end{remark}

\begin{remark}
\label{remark:adaptive-A}
It is important to note that while the vertex $v$ in each 
\textsc{update} or \textsc{CleanUp} call 
must be selected independently of the current configuration, 
the auxiliary set $A$ \emph{may} depend adaptively on the current state 
of the bounding chain. 
The reason is that choosing $A$ only determines 
which specific member of the family of grand couplings 
$\{\compress^{A}:A\subseteq [q],|A|=\Delta\}$ is applied; 
it does not modify the underlying Glauber dynamics itself. 
In other words, the Glauber update rule remains unchanged, 
and adaptively choosing $A$ merely specifies 
which valid grand coupling realization to use at that step. 
Since grand couplings are by definition allowed to be adaptively chosen, this adaptivity does not introduce bias nor affect correctness.
\end{remark}

\subsubsection{Construction of grand coupling and coalescence predicate.}
We now define the global coupling $F(\cdot, U)$ by composing the atomic local updates according to a two-phase schedule.
The resulting operator $F:\Omega \to \Omega$ is represented by the composition list $F^*$, which records the ordered sequence of local updates together with their corresponding random seeds, as described in \Cref{alg:evaluate-F}.
For convenience, we write $F^* \equiv F(\cdot, U)$ to emphasize that the composition list fully specifies the coupling induced by the global seed~$U$.

At the end, if every bounding list $\mathcal{L}(v)$ for $v\in V$ reduces to a singleton, a coalescence is detected and we set $\Phi_F(U)$ to be the corresponding unique configuration.
Otherwise, we set $\Phi_F(U) = \bot$.

\begin{algorithm}[H] 
\caption{Construction of $F(\cdot,U)$ and evaluation of $\Phi_F(U)$} 
\label{alg:construct-F} 

    \KwIn{Graph $G(V,E)$ with maximum degree $\Delta$; $q\ge 2$; global random seed $U\sim\uniform[0,1]$.} 
    \KwOut{Grand coupling $F(\cdot,U)$, represented as a composition list $F^*\equiv F(\cdot,U)$; outcome of coalescence predicate $\Phi_F(U)$.} 

\let\oldnl\nl
\newcommand{\nonl}{\renewcommand{\nl}{\let\nl\oldnl}}
    
    \BlankLine 

    \nonl
    \textbf{Initialization:}
    
        Initialize the global variables $\mathcal{L}(v) \gets [q]$ for all $v \in V$, and $F^*\gets()$ to be an empty list\; 
        $S \gets \textsc{VertexPartition}(G,\Delta)$ \tcp*[r]{Provided by Lemma~\ref{lemma:lll}} 
    
    \BlankLine 
    \nonl
    \textbf{Phase I (Seeding Phase):}
    
    \Indp 
    \label{line:seeding-init-start}\ForEach{$v_i \in S$ in fixed order}{ 
        $\textsc{CleanUp}(v_i,\{v_1,\ldots,v_{i-1}\})$\; 
        $\update(v,\seeding)$\; \label{line:seeding-init-end}
    } 
     \label{line:seeding-drift-start}\For{$t = 1$ \KwTo $T_1 = 5|S|\log |S|$}{ 
        Choose $v \in S$ uniformly at random\; 
        $\textsc{CleanUp}(v,S)$\; 
        $\update(v,\seeding)$ }  \label{line:seeding-drift-end}
    \Indm 
    \BlankLine 

    \nonl
    \textbf{Phase II (Converting and Drifting Phase):}
    
    \Indp 
    \label{line:convert-start}\ForEach{$v_i \in V \setminus S$ in fixed order}{ 
        $\textsc{CleanUp}(v_i,S\cup \{v_1,\ldots,v_{i-1}\})$\; 
        $\update(v,\disjoint)$\;  \label{line:convert-end}
    } 
    \label{line:drift-start}\For{$t = 1$ \KwTo $T_2 = \frac{2(q - \Delta)n \log n}{q - 2.5\Delta}$}{ 
        Choose $v \in V$ uniformly at random\; 
        $\update(v,\disjoint)$\; \label{line:drift-end}
    } 
    \Indm 
    \BlankLine 

    \nonl
    \textbf{Check coalescence:}
    
    \eIf{$|\mathcal{L}(v)| = 1$ for all $v \in V$}{ 
        $\Phi_F(U) \gets$ unique configuration $\conf\in\bigotimes_{v\in V}\mathcal{L}(v)$\; 
    }{ 
        $\Phi_F(U) \gets \bot$\; 
    } 
    \Return $(F^*, \Phi_F(U))$\;
\end{algorithm}

\begin{lemma}[Efficiency and coalescence]
\label{lemma:construct-F}
Suppose $q>(2.5+\eta)\Delta$ with $\eta=2\sqrt{(\log\Delta+1)/\Delta}$. Then \Cref{alg:construct-F} returns a $(F^*,\Phi_F(U))$ in time $\tilde{O}(n\Delta^2)$,
and it holds $\Pr{\Phi_F(U)\neq\bot}\ge 1/2$.
\end{lemma}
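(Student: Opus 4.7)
My plan is to prove the runtime bound and the coalescence probability bound separately, analyzing the algorithm phase by phase using the properties of \seeding, \compress, and \disjoint established in \Cref{lem:seeding,lem:compress,lem:disjoint}.

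\emph{Runtime.} First I would count operations. Phase I performs $|S|$ initial and $T_1 = O(|S|\log|S|)$ drift iterations, each consisting of one \textsc{CleanUp} followed by one $\update(\cdot,\seeding)$; Phase II performs $|V\setminus S|$ converting iterations (cleanup plus $\update(\cdot,\disjoint)$) and $T_2$ drift iterations of $\update(\cdot,\disjoint)$ only. By the efficiency clauses of \Cref{lem:seeding,lem:compress,lem:disjoint}, each \update{} runs in expected $O(\Delta\log\Delta)$ time; a single \textsc{CleanUp}$(v,P)$ triggers at most $\Delta$ \compress{} updates and thus costs $O(\Delta^2\log\Delta)$. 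Summing, the cleanup-bearing iterations cost $O((n+|S|\log|S|)\Delta^2\log\Delta) = \tilde{O}(n\Delta^2)$, while the Phase II drift costs $T_2\cdot O(\Delta\log\Delta) = \tilde{O}(n\Delta/\eta) = \tilde{O}(n\Delta^{1.5})$ since $1/\eta = \Theta(\sqrt{\Delta/\log\Delta})$. The overall runtime is $\tilde{O}(n\Delta^2)$ as required.

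\emph{Phase I analysis.} The goal is to establish that, with probability at least $1-o(1)$, every $v\in S$ has $|\mathcal{L}(v)|=2$ at the end of Phase~I. I would choose the greedy rule in \textsc{CleanUp} so that the reference set $A$ covers the (small) bounding lists of the already-seeded preserved neighbors in $P$; this forces the union $\slv(v_i)=\bigcup_{u\in\Gamma(v_i)}\mathcal{L}(u)$ at the moment of seeding to satisfy $|\slv(v_i)|\le 2\Delta$, activating the precondition of \Cref{lem:seeding}. That lemma then gives $|\mathcal{L}(v_i)|\in\{2,3\}$ with $\mathbb{E}[|\mathcal{L}(v_i)|]\le 2+\xi$ where $\xi\le(|\slv|+\Delta-q)(|\slv|-1)/((q-\Delta)\Delta)=O(1/\eta\Delta)$ under $q>(2.5+\eta)\Delta$. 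Letting $N_t$ denote the number of $v\in S$ with $|\mathcal{L}(v)|=3$ after the $t$-th drift step, a standard coupon-collector / super-martingale argument using the per-step contraction $\mathbb{E}[N_{t+1}\mid N_t]\le(1-c/|S|)N_t+O(\xi)$ shows that $T_1=5|S|\log|S|$ steps suffice to drive $N_{T_1}=0$ with probability $\ge 1-1/n$.

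\emph{Phase II analysis.} Given the high-probability invariant $|\mathcal{L}(v)|=2$ for all $v\in S$, I would process the converting loop in the prescribed order, at each step again choosing $A$ greedily so that $|\slv(v_i)|\le 2\Delta$ and $|\mathcal{L}(u)|\le 2$ for all $u\in\Gamma(v_i)$, thereby satisfying the hypotheses of \Cref{lem:disjoint}. The lemma outputs $|\mathcal{L}(v_i)|\in\{1,2\}$, so the size-$\le 2$ invariant extends to all of $V$ at the end of the converting subphase. For the drift subphase I would introduce the potential $\Psi_t=\sum_{v\in V}(|\mathcal{L}_t(v)|-1)$; using \Cref{lem:disjoint} (with $|\slv|\le 2\Delta$ on a size-$\le 2$ configuration), a uniform random disjoint update produces expected decrement
\[
\mathbb{E}[\Psi_{t+1}\mid\Psi_t]\le \Bigl(1-\tfrac{\gamma}{n}\Bigr)\Psi_t,\qquad \gamma=\Omega\bigl((q-2.5\Delta)/(q-\Delta)\bigr).
\]
Iterating over $T_2=2(q-\Delta)n\log n/(q-2.5\Delta)$ steps yields $\mathbb{E}[\Psi_{T_2}]\le n\cdot e^{-2\log n}=1/n$, so by Markov $\Psi_{T_2}=0$ with probability $\ge 1-1/n$, i.e.\ full coalescence. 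Combining the three high-probability events by a union bound gives $\Pr{\Phi_F(U)\neq\bot}\ge 1-o(1)\ge 1/2$.

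\emph{Main obstacle.} The delicate step is calibrating the $\eta$ factor so that all three components fit together: the Phase~I contraction must produce \emph{all} size-$2$ lists on $S$ (any residual size-$3$ vertex would break the precondition of \Cref{lem:disjoint} at its neighbors in Phase~II), and the one-step decrement of $\Psi_t$ must remain a uniform $\gamma>0$ despite the $|\dlv|$-dependence in \Cref{lem:disjoint}. The choice $\eta=2\sqrt{(\log\Delta+1)/\Delta}$ is what simultaneously makes the seeding-expansion term $\xi$ negligible (so Phase~I succeeds by concentration over $T_1=O(|S|\log|S|)$ rounds) and keeps $\gamma/\eta=\Theta(1)$ (so $T_2$ remains $\tilde{O}(n\sqrt{\Delta})$). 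Verifying that the VertexPartition in \Cref{lemma:lll} guarantees enough structural control over $\Gamma(v_i)\setminus(S\cup\{v_1,\dots,v_{i-1}\})$ for the greedy \textsc{CleanUp} argument to go through is the most bookkeeping-heavy part, and is where I expect the analysis to require the most care.
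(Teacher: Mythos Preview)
Your overall decomposition (Phase~I initialization, Phase~I drift, Phase~II conversion, Phase~II drift) matches the paper's, and your runtime accounting and Phase~II drift potential argument are essentially the same. However, your Phase~I drift analysis has a genuine gap.

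You assert $\xi \le (|\slv|+\Delta-q)(|\slv|-1)/((q-\Delta)\Delta) = O(1/(\eta\Delta))$ under $q>(2.5+\eta)\Delta$. This is false: from your own CleanUp argument you only obtain $|\slv|\le 2\Delta$, which gives $|\slv|+\Delta-q \le (0.5-\eta)\Delta$ and hence $\xi$ is a constant of order roughly $2/3$, not $O(1/(\eta\Delta))$. With $\xi=\Theta(1)$ the recursion $\mathbb{E}[N_{t+1}\mid N_t]\le (1-c/|S|)N_t + O(\xi)$ plateaus at $\Theta(\xi|S|)$ rather than converging to~$0$, so you cannot conclude $N_{T_1}=0$. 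Your later remark that ``the choice $\eta=2\sqrt{(\log\Delta+1)/\Delta}$ makes the seeding-expansion term $\xi$ negligible'' is not supported by the bound you wrote down.

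The missing idea is that the slack $|\slv_t(v)|$ after \textsc{CleanUp} is \emph{state-dependent}: using the partition bound $|\Gamma(v)\cap S|\le \Delta/2$ from \Cref{lemma:lll} together with the invariant $|\mathcal{L}(u)|\in\{2,3\}$ on $S$, the paper shows
\[
|\slv_t(v)| \;\le\; (1.5+\eta)\Delta + |\Gamma(v)\cap V_t|,
\qquad V_t=\{u\in S:|\mathcal{L}_t(u)|=3\}.
\]
Summing over $v\in S$ and applying $|\Gamma(u)\cap S|\le\Delta/2$ once more gives $\sum_{v\in S}|\slv_t(v)|\le (1.5+\eta)|S|\Delta + \tfrac12\Phi_t\Delta$. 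When this is inserted into the drift computation, the $(1.5+\eta)\Delta$ term cancels exactly against $q-\Delta\ge(1.5+\eta)\Delta$, and the residual $\tfrac12\Phi_t\Delta$ term is what produces a \emph{purely multiplicative} contraction $\mathbb{E}[\Phi_{t+1}\mid\Phi_t]\le(1-\delta)\Phi_t$ with $\delta=(1-\Delta/(q-\Delta))/|S|$. Without this cancellation the Phase~I analysis does not close, and the parameter $\eta$ enters here only through the structural partition of \Cref{lemma:lll}, not through the per-vertex seeding bound $\xi$.

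A smaller issue: in your Phase~II conversion step you say you will choose $A$ ``so that $|\mathcal{L}(u)|\le 2$ for all $u\in\Gamma(v_i)$'', but \textsc{CleanUp} replaces the unprocessed neighbors' lists by $A\cup\{c_w\}$ of size $\Delta+1$, so this cannot be arranged. What actually needs to be verified (and what the paper checks via a case analysis on $|\slv_P\setminus\dlv_P|$ and $|\dlv_P|$) is the quantitative bound $|\slv|-\tfrac12|\dlv|\le(1.5+\eta)\Delta$ that drives the success probability in \Cref{lem:disjoint}.
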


The proof of \cref{lemma:construct-F} is deferred to the next subsections.

\begin{lemma}[Evaluation cost]
\label{lemma:evaluate-F}
Given the composition list $F^*$ returned by \Cref{alg:construct-F},
the mapping $F^*(\conf)=F(\conf,U)$ can be evaluated by
Algorithm~\ref{alg:evaluate-F} for any configuration $\conf\in\Omega$ in time $\tilde{O}(n\Delta^2)$.
\end{lemma}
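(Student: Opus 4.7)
The plan is to decompose the evaluation cost as (number of entries in the composition list $F^*$) $\times$ (per-entry decoding cost), bounding each factor separately, and then multiplying.

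First I would count the entries appended to $F^*$ by \Cref{alg:construct-F}. Each \textsc{Update} call appends exactly one entry, while each \textsc{CleanUp}$(v,P)$ call appends at most $|\Gamma(v)\setminus P|\le\Delta$ entries. Going through the four blocks: Phase~I's initialization loop over $S$ contributes $O(|S|\Delta)$ entries; the Phase~I drift loop runs $T_1=5|S|\log|S|$ times and contributes $O(T_1\Delta)=O(n\Delta\log n)$; Phase~II's converting loop contributes $O((n-|S|)\Delta)$; and Phase~II's drift adds $T_2$ disjoint entries without CleanUp. Plugging in $q>(2.5+\eta)\Delta$ with $\eta=2\sqrt{(\log\Delta+1)/\Delta}$ gives $q-2.5\Delta\ge\eta\Delta$ and $q-\Delta=O(\Delta)$, so $T_2=O(n\log n/\eta)=\tilde{O}(n\sqrt{\Delta/\log\Delta})$, which is dominated by the Phase~I contribution. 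Altogether, $|F^*|=\tilde{O}(n\Delta)$.

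Next I would argue that each recorded local update in $F^*$ can be decoded at a single concrete configuration in $O(\Delta\log\Delta)$ expected time. The efficiency clauses of \Cref{lem:seeding}, \Cref{lem:compress} and \Cref{lem:disjoint} already assert this bound for the strictly harder task of producing the full updated bounding list over all inputs $p\in\mathcal{P}_v$; evaluating $f_v(p_v^\omega,U_v)$ at a single neighborhood coloring $\omega_{\Gamma(v)}$ is no more expensive. Concretely, one reads the $\Delta$ neighbor colors in time $O(\Delta)$, forms $[q]\setminus\omega_{\Gamma(v)}$ using a hash structure, and then invokes the same seed-decoding procedure (the random prefix of a permutation of $\slv$ for \seeding{} and \compress{}, or the analogous paired decoder for \disjoint{}) to return the single output color, all within the $O(\Delta\log\Delta)$ budget.

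Multiplying these two estimates yields an expected evaluation cost of $\tilde{O}(n\Delta)\cdot O(\Delta\log\Delta)=\tilde{O}(n\Delta^2)$, matching the claimed bound. The only step that requires genuine attention is verifying that the Phase~II drift length $T_2$ stays dominated under the specified $\eta$ (so that $|F^*|$ does not blow up beyond $\tilde{O}(n\Delta)$), and noting that single-configuration decoding really inherits the bulk-update efficiency; both are routine once the efficiency lemmas and the definition of $\eta$ are unfolded, so I do not anticipate a substantive obstacle.
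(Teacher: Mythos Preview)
Your proposal is correct and follows essentially the same approach as the paper: bound the length of $F^*$ by $O(n\Delta\log n)$ and multiply by the $O(\Delta\log\Delta)$ per-entry cost from \Cref{lem:seeding,lem:compress,lem:disjoint}. You are more careful than the paper on two points it glosses over---the phase-by-phase entry count (including the $T_2$ contribution under the given $\eta$) and the observation that single-configuration decoding is no harder than the bounding-list update guaranteed by the efficiency lemmas---but the underlying argument is identical.
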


\begin{proof}
By \Cref{lem:compress,lem:seeding,lem:disjoint}, each local update costs 
$O(\Delta\log\Delta)$ in time. Since \cref{alg:construct-F} performs at most 
$O(n\Delta\log n)$ updates across both phases, the total time cost is 
$O(n\Delta^2\log n\log \Delta)$.
\end{proof}

Combining \cref{lemma:construct-F,lemma:evaluate-F} 
with \cref{lemma:blocked-cftp-sampler}, 
we conclude that the perfect sampling algorithm 
terminates in expected time $\tilde{O}(n\Delta^2)$ 
and outputs an exact sample from the uniform distribution $\mu_\Omega$.
This establishes \cref{thm:main}. 

\subsection{Vertex partition via algorithmic LLL}
\label{subsec:vertex-partition}

As a preprocessing step for the global coupling construction in \Cref{alg:construct-F}, we partition the vertex set $V$ into a \emph{seeding set} $S$ and its complement $V \setminus S$.
The goal is to ensure that each vertex has a well-balanced neighborhood across $S$ and $V \setminus S$. 
The seeding set $S$ is used in Phase~I of \Cref{alg:construct-F} to guarantee that the neighborhoods involved in local updates remain sufficiently sparse, which in turn enables efficient list reduction and coalescence guarantees in subsequent phases.


The following lemma shows that such a partition always exists and can be constructed efficiently.

\begin{lemma}
\label{lemma:lll}
    Let $G = (V, E)$ be a graph with maximum degree $\Delta$. 
    Then for $\eta = 2\sqrt{(\log \Delta + 1)/\Delta}$, there exists a subset $S \subseteq V$ such that for every vertex $v \in V$,
    $$
        |\Gamma(v) \cap S| \le \tfrac{1}{2}\Delta, 
        \qquad
        |\Gamma(v) \cap (V \setminus S)| \le \left( \tfrac{1}{2} + \eta \right)\Delta.
    $$
    Moreover, such a set $S$ can be computed in expected time $O(n \Delta)$.
\end{lemma}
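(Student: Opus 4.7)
The plan is to prove existence via the Lov\'asz Local Lemma (LLL) and construct $S$ algorithmically via Moser--Tardos resampling. I would include each vertex $v \in V$ in $S$ independently with probability $p = (1-\eta)/2$; this biased choice is dictated by the requirement that both targets $\Delta/2$ and $(\tfrac{1}{2}+\eta)\Delta$ lie a common margin $\tfrac{\eta}{2}\Delta$ above the means $p|\Gamma(v)|$ and $(1-p)|\Gamma(v)|$. (The edge case $\eta\ge 1$ is handled trivially by $S=\emptyset$: then $|\Gamma(v)\cap S|=0\le\Delta/2$ and $|\Gamma(v)\cap (V\setminus S)|\le\Delta\le(\tfrac12+\eta)\Delta$.) For each $v$ I define the bad event
\[
B_v \;=\; \bigl\{|\Gamma(v)\cap S| > \tfrac{\Delta}{2}\bigr\} \;\cup\; \bigl\{|\Gamma(v)\cap (V\setminus S)| > (\tfrac{1}{2}+\eta)\Delta\bigr\},
\]
so an $S$ avoiding all $B_v$ is exactly what the lemma requires.

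The next step is to bound $\Pr{B_v}$. By Hoeffding's inequality applied in the worst case $|\Gamma(v)|=\Delta$ (the probability is monotone non-decreasing in $|\Gamma(v)|$), each sub-event exceeds its mean by $\tfrac{\eta}{2}\Delta$, hence has probability at most $\exp\!\bigl(-2(\eta\Delta/2)^2/\Delta\bigr) = \exp(-\eta^2\Delta/2) = e^{-2}/\Delta^2$, using the identity $\eta^2\Delta = 4(\log\Delta+1)$. Thus $\Pr{B_v} \le 2e^{-2}/\Delta^2$. Since $B_v$ depends only on the coin flips of vertices in $\Gamma(v)$, two events $B_u, B_v$ are independent whenever $\Gamma(u)\cap\Gamma(v)=\emptyset$, giving dependency degree at most $\Delta(\Delta-1) \le \Delta^2 - 1$. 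The symmetric LLL condition $e\cdot\Pr{B_v}\cdot\Delta^2 \le 2/e < 1$ is therefore satisfied, producing the desired $S$.

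For the algorithmic part I would run Moser--Tardos on these events: repeatedly pick a currently violated $B_v$ and resample the coins of all vertices in $\Gamma(v)$. Using the symmetric LLL coefficients $x_{B_v}=1/\Delta^2$, the Moser--Tardos theorem bounds the expected number of resamplings by $O(n/\Delta^2)$. Maintaining a cached count $|\Gamma(u)\cap S|$ at each vertex (updatable in $O(\Delta)$ per flipped coin, and trivializing the check of whether $B_u$ is violated), each resampling and its bookkeeping cost $O(\Delta^2)$, so the total resampling work is $O(n)$; this is dominated by the $O(n\Delta)$ cost of the initial evaluation of all $B_v$, giving overall expected runtime $O(n\Delta)$. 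The argument is largely routine; the only subtle point I expect to be the main step is verifying that $\eta = 2\sqrt{(\log\Delta+1)/\Delta}$ is precisely the slack at which the Hoeffding tail closes up against the $\Delta^2$ dependency degree, which is the short calculation carried out in the second paragraph.
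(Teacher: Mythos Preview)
Your proposal is correct and follows essentially the same approach as the paper: include each vertex in $S$ independently with biased probability $(1-\eta)/2$, bound the bad-event probability via a Chernoff/Hoeffding tail, apply the symmetric LLL, and realize the partition algorithmically via Moser--Tardos resampling with cached neighborhood counts. Your dependency-degree computation ($\Delta(\Delta-1)$, since $B_u$ and $B_v$ interact iff $\Gamma(u)\cap\Gamma(v)\neq\emptyset$) is in fact cleaner than the paper's, which states $d=\Delta$; the LLL condition $e\cdot(2e^{-2}/\Delta^2)\cdot\Delta^2=2/e<1$ holds either way, and the overall $O(n\Delta)$ runtime conclusion matches.
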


\begin{proof}
    We construct $S$ by including each vertex $v \in V$ independently with
    probability $p_0 = \tfrac{1}{2} - \tfrac{\eta}{2}$.
    For each $u \in V$, let $I_u$ be the indicator for the event $u \in S$,
    and for any vertex $v$, define
    $$
        I(v) \defeq \sum\nolimits_{u \in \Gamma(v)} I_u,
        \qquad 
        \E{I(v)} = p_0\, \deg(v) = \left(\tfrac{1}{2} - \tfrac{\eta}{2}\right) \deg(v).
    $$

    We say that a vertex $v$ is \emph{violating} if either  
    $|\Gamma(v)\cap S| > \tfrac{1}{2}\Delta$ or 
    $|\Gamma(v)\cap (V\setminus S)| > (\tfrac{1}{2}+\eta)\Delta$. 
    Both violations correspond to $|I(v) - \E{I(v)}| > \frac{\eta}{2} \Delta$. By a Chernoff bound, for all $v$,
    $$
    \Pr{|I(v) - \E{I(v)}| > \tfrac{\eta}{2}\Delta}
      \le 2 \exp\left( -\frac{(\eta \Delta)^2}{2\deg(v)} \right)
      \le \frac{2}{e^2 \Delta^2}
    $$
    since $\deg(v) \le \Delta$.
    
    Let $p = 2/(e^2 \Delta^2)$ be the failure probability per vertex, and note that
    each bad event (i.e., vertex $v$ violating the bounds) depends only on the
    random choices of vertices in $\Gamma(v)$, and thus at most $d \defeq \Delta$
    other events. The symmetric Lovász Local Lemma (LLL) condition holds because
    $$
        e p (d + 1) \le \frac{2e(\Delta + 1)}{e^2 \Delta^2} < 1.
    $$
    Therefore, with positive probability, no vertex violates the condition, and such a partition exists.

    To obtain such a partition efficiently, we apply the Moser-Tardos
    resampling algorithm~\cite{moser2010constructive}. We associate with each vertex
    $v \in V$ a random bit $s_v \in \{0,1\}$ indicating its membership in $S$
    (initialized to $1$ independently with probability $p_0$), and a counter
    $c_v = |\Gamma(v) \cap S|$.

    We maintain a queue $Q$ of currently violating vertices. Initially, all vertices are placed into $Q$. While $Q$ is nonempty, we:

    \begin{itemize}
        \item Dequeue a vertex $v$ and check whether $c_v$ violates the desired bounds.
        \item If so, we resample the bits $\{s_u : u \in \Gamma(v)\}$.
        \item For each $w \in \Gamma(u)$ where $u \in \Gamma(v)$, update $c_w$, and enqueue $w$ into $Q$.
    \end{itemize}
    
    Each resampling step affects at most $O(\Delta^2)$ entries. By the analysis of
    Moser and Tardos, the expected number of resamplings is $O(n / \Delta)$, so the
    total expected runtime is $O(n\Delta)$.
\end{proof}

\subsection{Phase I: {Seeding}}
\label{subsec:seeding-phase}

Phase~I transforms the initial bounding lists into size-two lists on the seeding set $S$ obtained from \Cref{lemma:lll}. 
It consists of two stages:

\begin{enumerate}
    \item[i.] \emph{Initialization stage:} a deterministic pass over $S$ that ensures $|\mathcal{L}(v)|\in\{2,3\}$ for every $v\in S$;
    \item[ii.] \emph{Randomized drifting stage:} repeated applications of the
    \textsc{Seeding} coupling to uniformly random vertices in $S$, which drive
    all lists in $S$ to satisfy $|\mathcal{L}(v)|=2$ with constant probability.
\end{enumerate}

We first establish the deterministic initialization property and then analyze the convergence of the drifting stage.

\begin{lemma}
\label{lem:seeding-init}
Assume $q \ge (2.5+\eta)\Delta$, and let $S$ be the seeding set constructed in \Cref{lemma:lll}. 
After the initialization stage of Phase~I (Lines~\ref{line:seeding-init-start}--\ref{line:seeding-init-end} in \Cref{alg:construct-F}), 
every vertex $v \in S$ satisfies $|\mathcal{L}(v)|\in\{2,3\}$.
\end{lemma}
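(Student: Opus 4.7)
The plan is to proceed by induction on the iteration index $i$, establishing the invariant that after the $i$-th iteration of the outer loop, every processed vertex $v_j$ with $j \le i$ has $|\bl(v_j)| \in \{2,3\}$, and that these bounding lists remain untouched by any later iteration.

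The first step is to observe the non-interference structure of the outer loop: at iteration $i$, the call $\textsc{CleanUp}(v_i, P_i)$ with $P_i = \{v_1,\dots,v_{i-1}\}$ only invokes $\compress$ on $\Gamma(v_i)\setminus P_i$, so the lists of previously processed vertices are preserved, and $\update(v_i,\seeding)$ modifies only $\bl(v_i)$. Hence the inductive hypothesis on $v_1,\dots,v_{i-1}$ is automatically maintained, and it suffices to analyze the new list $\bl(v_i)$ produced by seeding.

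The core step is to verify the precondition $|\slv(v_i)|\le 2\Delta$ of \Cref{lem:seeding} at the moment seeding is applied. Let $k_i := |\Gamma(v_i)\cap P_i|$, which is at most $|\Gamma(v_i)\cap S|\le \Delta/2$ by \Cref{lemma:lll}, and let $B_i := \bigcup_{u\in\Gamma(v_i)\cap P_i}\bl(u)$, which satisfies $|B_i|\le 3k_i\le 1.5\Delta$ by the inductive hypothesis. I would specialize \textsc{Greedy} in the initialization phase to choose $A_i$ as a superset of $B_i$ when $|B_i|\le \Delta$, and as a $\Delta$-subset of $B_i$ otherwise. After $\textsc{CleanUp}$, each $u\in\Gamma(v_i)\setminus P_i$ has $\bl(u)=A_i\cup\{c_u\}$ by \Cref{lem:compress}, so
\[
|\slv(v_i)| \;\le\; |A_i\cup B_i| + |\Gamma(v_i)\setminus P_i|
\;\le\; \max(\Delta,|B_i|) + (\Delta - k_i).
\]
A short case split then yields $|\slv(v_i)|\le 2\Delta$ in both regimes: in the small-$|B_i|$ case the right-hand side is $2\Delta - k_i$, while in the large-$|B_i|$ case it is at most $3k_i + (\Delta-k_i)=\Delta+2k_i$, and both are bounded by $2\Delta$ thanks to $k_i\le\Delta/2$. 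Since $q\ge(2.5+\eta)\Delta \ge \tfrac{7}{3}\Delta$ for $\Delta\ge 3$, \Cref{lem:seeding} applies, and the subsequent $\seeding$ call delivers $|\bl(v_i)|\in\{2,3\}$, closing the induction.

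The main obstacle, conceptually, is making the greedy choice of $A_i$ explicit and showing it gives adequate coverage of $B_i$: once $k_i > \Delta/3$, the set $B_i$ no longer fits inside $A_i$, so the bound $|\slv(v_i)|\le 2\Delta$ must be rescued by the hard cap $k_i\le\Delta/2$ coming from the partition in \Cref{lemma:lll}. It is precisely this $\Delta/2$ threshold, rather than the trivial degree bound $\Delta$, that keeps the two cases balanced so that the precondition of $\seeding$ holds uniformly, and this is where the careful design of the seeding set $S$ becomes essential.
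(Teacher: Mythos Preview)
Your proposal is correct and follows essentially the same approach as the paper: induction on $i$, greedy choice of the reference set $A_i$ to cover $B_i = \bigcup_{u\in\Gamma(v_i)\cap P_i}\bl(u)$, the bound $|\slv(v_i)|\le \max(\Delta,3k_i)+(\Delta-k_i)\le 2\Delta$ using $k_i\le\Delta/2$ from \Cref{lemma:lll}, and invocation of \Cref{lem:seeding}. You are slightly more explicit than the paper about the non-interference of later iterations with previously processed lists and about verifying the precondition $q\ge \tfrac{7}{3}\Delta$, but the argument is the same.
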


\begin{proof}
    We prove by induction on the fixed update order of vertices in $S$.
    Assume that before processing $v_i$, every $v\in P=\{v_1,\ldots,v_{i-1}\}$ satisfies $|\bl(v)|\in \{2,3\}$.
    The procedure $\textsc{CleanUp}(v_i,P)$ chooses a $A\subseteq[q]$ of size $\Delta$ from
    $$
        \slv_P \defeq \bigcup_{u\in \Gamma(v_i)\cup P}\bl(u),
    $$
    greedily: include as many colors as possible from $\slv_P$, and if $|\slv_P|<\Delta$, fill the remaining slots arbitrarily from
    $[q]\setminus \slv_P$.

    Next, $\compress^{A}$ is applied to every $w\in\Gamma(v_i)\setminus P$.
    By \Cref{lem:compress}, after this step each such neighbor satisfies $\bl(w)=A\cup\{c_w\}$ for some $c_w\in[q]\setminus A$.
    Hence, at this moment
    $$
        \slv \defeq \bigcup_{u\in\Gamma(v_i)}\bl(u)
        = (A\cup \slv_P)\cup \{c_w:w\in\Gamma(v_i)\setminus P\}.
    $$
    Therefore,
    $$
        |\slv|
        \le \max(|A|,\,3|\Gamma(v_i)\cap P|)+|\Gamma(v_i)\setminus P|
        \le 2\Delta,
    $$
    where the first inequality uses $|A|=\Delta$ and the inductive assumption $|\bl(v)|\le 3$ for all $v\in P$, and the second inequality follows from $|\Gamma(v_i)|\le\Delta$ and $|\Gamma(v_i)\cap P|\le \tfrac{1}{2}\Delta$ by \Cref{lemma:lll}.

    Thus the precondition of \Cref{lem:seeding}, namely $|\slv|<2\Delta$, holds. Applying the \textsc{Seeding} grand coupling then gives $|\bl(v_i)|\in\{2,3\}$, preserving the induction hypothesis. The claim follows for all $v_i\in S$.
\end{proof}

This completes the deterministic initialization of Phase~I:
after processing all vertices in $S$, each $v \in S$ satisfies
$|\mathcal{L}(v)| \in \{2,3\}$.  
In the subsequent \emph{drifting stage}, we repeatedly apply the
\textsc{Seeding} coupling to uniformly random vertices of $S$,
which further reduces each list to size~$2$ with constant probability.

\begin{lemma}
\label{lem:drifting}
Assume $q \ge (2.5 + \eta)\Delta$, and let $S$ be the seeding set constructed in \Cref{lemma:lll}. 
Let $m = |S|$ and $T_1 = 5 m \log m$. 
Starting from a configuration in which $|\mathcal{L}(v)| \in \{2,3\}$ for all $v \in S$
(as ensured by \Cref{lem:seeding-init}),
after $T_1$ iterations of the drifting step in Phase~I (Lines~\ref{line:seeding-drift-start}--\ref{line:seeding-drift-end} in \Cref{alg:construct-F}), 
we have $|\mathcal{L}(v)| = 2$ for all $v \in S$ with probability at least $3/4$.
\end{lemma}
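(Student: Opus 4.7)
The plan is to introduce the potential $\Phi_t := |\{v\in S : |\mathcal{L}_t(v)|=3\}|$ counting size-$3$ bounding lists inside the seeding set, and to show that $\Phi_t$ contracts multiplicatively in expectation. By \Cref{lem:seeding-init}, $\Phi_0 \le m = |S|$ at the start of the drifting stage. The target is a per-step bound $\mathbb{E}[\Phi_{t+1}\mid\mathcal{L}_t] \le (1-c/m)\Phi_t$ for some constant $c>0$; iterating $T_1 = 5m\log m$ rounds then gives $\mathbb{E}[\Phi_{T_1}] \le m\cdot e^{-cT_1/m} = m^{1-5c}$, which is at most $1/4$ for $m$ sufficiently large whenever $c \ge 1/3$. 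Markov's inequality applied to the integer-valued $\Phi_{T_1}$ then yields $\Pr[\Phi_{T_1} = 0] \ge 3/4$, which is exactly the claim $|\mathcal{L}(v)| = 2$ for all $v\in S$.

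For the per-step contraction, the first observation is that $\textsc{CleanUp}(v,S)$ only modifies $\mathcal{L}(w)$ for $w\in\Gamma(v)\setminus S$, so $\Phi_t$ changes only through the subsequent $\textsc{Seeding}$ update at the uniformly chosen $v\in S$. By \Cref{lem:seeding}, $|\mathcal{L}'(v)|\in\{2,3\}$ and $\mathbb{E}[|\mathcal{L}'(v)|]\le 2+R(v)$ where $R(v)=\tfrac{(|\slv(v)|+\Delta-q)(|\slv(v)|-1)}{(q-\Delta)\Delta}$ and $\slv(v)=\bigcup_{u\in\Gamma(v)}\mathcal{L}(u)$ is evaluated \emph{after} cleanup. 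The central structural step is to bound $R(v)$ in terms of $k_v := |\{u\in\Gamma(v)\cap S : |\mathcal{L}_t(u)|=3\}|$. Writing $\slv_P := \bigcup_{u\in\Gamma(v)\cap S}\mathcal{L}_t(u)$ and using \Cref{lem:compress} (which guarantees $\mathcal{L}(w)=A\cup\{c_w\}$ with $|A|=\Delta$ after cleanup), a greedy choice of $A$ covering $\slv_P$ as much as possible yields $|\slv(v)|\le\max(\Delta,|\slv_P|) + |\Gamma(v)\setminus S|$. Combining $|\slv_P|\le 2|\Gamma(v)\cap S| + k_v \le \Delta + k_v$ (since $|\Gamma(v)\cap S|\le\tfrac{\Delta}{2}$) with $|\Gamma(v)\setminus S|\le(\tfrac12+\eta)\Delta$ from \Cref{lemma:lll} and the threshold $q\ge(\tfrac52+\eta)\Delta$, I get the clean bound $|\slv(v)|+\Delta-q \le k_v$, and hence $R(v) \le \tfrac{(2+\eta)k_v}{(1.5+\eta)\Delta}$.

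The contraction then follows from a double-counting argument: $\sum_{v\in S} k_v = \sum_{u\in S : |\mathcal{L}_t(u)|=3} |\Gamma(u)\cap S| \le \tfrac{\Delta}{2}\Phi_t$, so $\sum_{v\in S} R(v) \le \tfrac{2+\eta}{3+2\eta}\Phi_t = (\tfrac{2}{3}+o(1))\Phi_t$. Averaging over the uniform choice of $v\in S$ and using $\mathbb{E}[|\mathcal{L}'(v)|-|\mathcal{L}_t(v)|] \le R(v) - \mathbf{1}[|\mathcal{L}_t(v)|=3]$, I obtain $\mathbb{E}[\Phi_{t+1}-\Phi_t\mid\mathcal{L}_t] = \tfrac{1}{m}\bigl(\sum_{v\in S}R(v) - \Phi_t\bigr) \le -\tfrac{1-o(1)}{3m}\Phi_t$, completing the geometric contraction and hence the lemma. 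The main obstacle is the tight algebraic identity $|\slv(v)|+\Delta-q\le k_v$ in the second paragraph: the $\eta\Delta$ slack in $q\ge(\tfrac52+\eta)\Delta$ exactly absorbs the $\eta\Delta$ imbalance in $|\Gamma(v)\setminus S|$, and the doubled factor $2|\Gamma(v)\cap S|\le\Delta$ is precisely what cancels the $\Delta$ in $\max(\Delta,|\slv_P|)$; this tightness is the same phenomenon underlying the $2.5\Delta$ barrier identified in \Cref{thm:no-coupling-2}.
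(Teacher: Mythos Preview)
Your proposal is correct and follows essentially the same approach as the paper: define the potential $\Phi_t=|\{v\in S:|\mathcal{L}_t(v)|=3\}|$, bound $|\slv(v)|$ after \textsc{CleanUp} using the seeding-set degree constraints from \Cref{lemma:lll}, double-count $\sum_v k_v\le\tfrac{\Delta}{2}\Phi_t$, and conclude multiplicative contraction. The only cosmetic differences are that the paper bounds $R(v)$ via $p_{t,v}\le 2|\slv_t(v)|/(q-\Delta)-2$ (using $|\slv|-1<2\Delta$) and arrives at contraction rate $\tfrac{0.5+\eta}{1.5+\eta}$, whereas your route through the clean identity $|\slv(v)|+\Delta-q\le k_v$ gives the slightly smaller but equally sufficient rate $\tfrac{1+\eta}{3+2\eta}$; one minor slip is the equality sign in $\mathbb{E}[\Phi_{t+1}-\Phi_t\mid\mathcal{L}_t]=\tfrac{1}{m}(\sum_v R(v)-\Phi_t)$, which should be $\le$.
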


\begin{proof}
    Let $\mathcal{L}_t$ denote the bounding list configuration after $t$ randomized drifting step, and define
    $$
        V_t \defeq \{v \in S : |\mathcal{L}_t(v)| = 3\},
    $$
    be the set of vertices whose lists have not yet been reduced to size $2$. We introduce the potential function $\Phi_t = |V_t|$ which measures the number of unresolved vertices at time~$t$.

    For each vertex $v\in S$, let
    $$
        \slv_{t}(v) \defeq \bigcup_{u\in\Gamma(v)} \mathcal{L}_t(u)
    $$
    denote the \emph{slack color set} of $v$ at iteration~$t$, i.e., the union of bounding lists over its neighborhood at that time.By property~(\ref{seeding-size}) of \Cref{lem:seeding}, the probability that vertex $v$ remains in the $|\mathcal{L}_t(v)|=3$ state after one update is
    $$
        p_{t,v}
        = \frac{(\,|\slv_{t}(v)| + \Delta - q\,)\,(|\mathcal{S}_t(v)| - 1\,)}{(\,q - \Delta\,)\,\Delta}
        \le\frac{2|\slv_t(v)|}{q-\Delta}-2
    $$
    by assuming $|\slv_t(v)|\le 2\Delta$. The expected change in the potential at each iteration can be expressed as
    $$
        \E{\Phi_{t+1}-\Phi_t \mid V_t}
            \le \frac{1}{m}
                \left(
                    \sum_{v\in S\setminus V_t} p_{t,v}
                    - \sum_{v\in V_t}(1-p_{t,v})
                \right)
            \le \frac{1}{m}\left(
                 2\left(\frac{\sum_{v\in S}|\slv_t(v)|}{q-\Delta} -m\right)- \Phi_t
            \right),
    $$

    Next, we bound the neighborhood slack sizes $|\slv_t(v)|$ that appear in $p_{t,v}$.
    By the greedy construction of the reference set $A$ in \Cref{lem:seeding-init}, the $\textsc{CleanUp}$ operation ensures that
    $$
        |\slv_t(v)|
        \le
        \max\!\bigl(\Delta,\,
            2|\Gamma(v)\cap S|
            + |\Gamma(v)\cap V_t|
        \bigr)
        + |\Gamma(v)\setminus S|\le (1.5+\eta)\Delta+|\Gamma(v)\cap V_t|
    $$
    by applying the degree bounds from \Cref{lemma:lll}, namely $|\Gamma(v)\cap S|\le \tfrac{1}{2}\Delta$ and $|\Gamma(v)\setminus S|\le (\tfrac{1}{2}+\eta)\Delta$.

    Summing over all $v\in S$ yields the aggregate bound
    $$
        \sum_{v\in S} |\slv_t(v)|
            \le (1.5+\eta)m\Delta + \sum_{v\in S}|\Gamma(v)\cap V_t|\le(1.5+\eta)m\Delta+\tfrac{1}{2}\Phi_t\Delta.
    $$

    Plugging into the drift expression and applying $q\ge (2.5+\eta)\Delta$:
    $$
        \E{\Phi_{t+1} - \Phi_t \mid V_t}
        \le \frac{1}{m} \left( 2m\left(\frac{(1.5+\eta)\Delta}{(2.5+\eta)\Delta - \Delta} -1\right) + \frac{\Phi_t \Delta}{q - \Delta} - \Phi_t \right)
        = -\frac{\Phi_t}{m} \left( 1 - \frac{\Delta}{q - \Delta} \right).
    $$

    Let $\delta \defeq \frac{1 - \frac{\Delta}{q - \Delta}}{m}$. Then we have $\E{\Phi_{t+1} \mid \Phi_t} \le \Phi_t (1 - \delta)$,i.e., the potential decreases in expectation by a multiplicative factor. Since $ \delta T = \left(1 - \frac{\Delta}{q - \Delta} \right) \cdot 5\log m \ge \log(4m) $ for all $ q \ge (2.5 + \eta)\Delta $, applying the Multiplicative Drift Theorem and Markov's inequality yields:
    $$
        \Pr{\Phi_T \ge 1} \le \E{\Phi_T} \le m e^{-\delta T} \le \frac{1}{4}.
    $$
    Thus, with probability at least $3/4$, we have $\Phi_T = 0$, meaning all vertices in $S$ have label size 2.
\end{proof}

The two-stage analysis above completes Phase~I:
after the initialization pass (\cref{lem:seeding-init}) and the drifting stage
(\cref{lem:drifting}), all vertices in the seeding set $S$ are $2$-bounded with probability at least $3/4$.
This configuration that is $2$-bounded on seeding vertices serves as the starting point for Phase~II.

\subsection{Phase II: Converting and Drifting}
\label{subsec:phase-II}

Phase~II begins with the $|\bl(v)|=2$ for all vertex in seeding set $S$ by Phase~I and completes the coalescence of the bounding chain.
It consists of two stages:

\begin{enumerate}
    \item[i] \emph{Conversion stage:} a deterministic pass over $V \setminus S$ that ensures $|\mathcal{L}(v)| \in \{1,2\}$ for every $v \in V \setminus S$;
    \item[ii] \emph{Drifting stage:} randomized updates on all vertices
    that drive every list to size~$1$ with constant probability.
\end{enumerate}

We begin with the conversion stage.

\begin{lemma}
\label{lem:conversion}
Assume $q \ge (2.5+\eta)\Delta$, and let $S$ be the seeding set constructed in \Cref{lemma:lll}. 
After the conversion stage of Phase~II (Lines~\ref{line:convert-start}--\ref{line:convert-end} in \Cref{alg:construct-F}), every vertex $v \in V \setminus S$ satisfies $|\mathcal{L}(v)|\in\{1,2\}$.
\end{lemma}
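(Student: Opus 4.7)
The plan is to proceed by induction on the fixed processing order $v_1, v_2, \ldots$ of vertices in $V \setminus S$, in direct analogy with the argument for \Cref{lem:seeding-init}. The inductive invariant is: before processing $v_i$, every vertex $u \in P_i \defeq S \cup \{v_1, \ldots, v_{i-1}\}$ satisfies $|\bl(u)| \le 2$. The base case $i = 1$ is supplied by the end of Phase~I: conditioning on the success event of \Cref{lem:drifting}, $|\bl(u)| = 2$ for every $u \in S$ at the start of the conversion stage, giving the invariant at $P_1 = S$.

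For the inductive step, I would analyze the two atomic operations at $v_i$ separately. First, $\textsc{CleanUp}(v_i, P_i)$ selects (via the Phase~II greedy rule) a reference set $A \subseteq [q]$ of size $\Delta$ aimed at maximally covering the already-small lists on $P_i \cap \Gamma(v_i)$, and then applies $\compress^A$ to each $w \in \Gamma(v_i) \setminus P_i$; by \Cref{lem:compress}, each such compressed neighbor $w$ ends with $\bl(w) = A \cup \{c_w\}$ for some $c_w \in [q] \setminus A$ driven by the compress seed. Second, $\update(v_i, \disjoint)$ is invoked; if the precondition of \Cref{lem:disjoint} is met at this point, it outputs $|\bl(v_i)| \in \{1, 2\}$, preserving the invariant and closing the induction.

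The main technical obstacle I expect is certifying the precondition $|\bl(u)| \le 2$ for all $u \in \Gamma(v_i)$ required by \Cref{lem:disjoint}. Neighbors in $\Gamma(v_i) \cap P_i$ satisfy this by the inductive hypothesis, but the freshly-compressed neighbors $w \in \Gamma(v_i) \setminus P_i$ nominally carry lists $A \cup \{c_w\}$ of size $\Delta + 1$. To resolve this, I plan to exploit the rigid alignment imposed by $\compress^A$: conditioned on the compress seed, all such $w$ agree on the $A$-structure and differ only by a single ``free'' color $c_w$. Using the freedom of the bounding-chain framework to pass to a valid superset of the actual local update distribution family $\mathcal{P}_{v_i}$ at $v_i$, each compressed neighbor's contribution can be reduced to an effective size-$\le 2$ list; the disjoint-type parameters $\slv, \qlv, \dlv$ at $v_i$ can then be controlled using the degree split from \Cref{lemma:lll} together with the invariant on $P_i$. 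Combined with the threshold $q \ge (2.5 + \eta)\Delta$, this feeds directly into \Cref{lem:disjoint} to give $|\bl(v_i)| \in \{1, 2\}$, as required.
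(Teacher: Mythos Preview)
Your inductive framework and the identification of the key obstacle---that freshly compressed neighbors $w\in\Gamma(v_i)\setminus P_i$ carry lists $A\cup\{c_w\}$ of size $\Delta+1$, violating the literal hypothesis of \Cref{lem:disjoint}---both match the paper.

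Where your plan diverges is in the resolution. The paper does not attempt to manufacture ``effective size-$\le 2$'' surrogate lists for the compressed neighbors; I do not see a natural size-$2$ replacement for $A\cup\{c_w\}$ that would embed $\mathcal{P}_{v_i}$ inside the family arising from a genuine all-size-$\le 2$ instance, and you do not specify one. Instead, the paper bypasses the per-neighbor hypothesis and shows directly that the effective neighborhood quantity $|\slv|-\tfrac{1}{2}|\dlv|$ is at most $(1.5+\eta)\Delta$; combined with $q\ge(2.5+\eta)\Delta$, this is what the \disjoint{} coupling actually requires in order to return a list of size at most~$2$. Two concrete ingredients you are missing: (i) the Phase~II greedy rule for $A$ is chosen to preserve disjoint structure---it fills $A$ first from $\slv_P\setminus\dlv_P$ and only afterward from $\dlv_P$, so that as much of $\dlv_P$ as possible survives outside $A$ and hence contributes to $\dlv$; (ii) with $x,y,z$ counting respectively the preserved non-disjoint neighbors, preserved disjoint neighbors, and compressed neighbors, the paper writes explicit upper and lower bounds for $|\slv|$ and $|\dlv|$ after \textsc{CleanUp} and then performs a short piecewise-linear maximization of $|\slv|-\tfrac{1}{2}|\dlv|$ over the region $x+y+z\le\Delta$, $z\le(\tfrac{1}{2}+\eta)\Delta$ supplied by \Cref{lemma:lll}. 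That optimization, not a list-size reduction, is the heart of the argument.
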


\begin{proof}
    We prove by induction on the fixed update order of vertices in $V\setminus S$.
    Assume that before processing $v_i$, every $v\in P=S\cup\{v_1,\ldots,v_{i-1}\}$
    satisfies $|\bl(v)|\le2$. The procedure $\textsc{CleanUp}(v_i,P)$ constructs a reference set
    $A\subseteq[q]$ of size~$\Delta$ as follows.
    Define
    \[
        \slv_P \defeq \bigcup_{u\in \Gamma(v_i)\cup P}\bl(u),
        \qquad
        \dlv_P \defeq
        \bigcup_{\substack{
            u\in \Gamma(v_i)\cup P:\\
            \forall w\in\Gamma(v_i),\,w\neq u,
            \ \bl(u)\cup\bl(w)=\varnothing
        }}
        \bl(u),
    \]
    where $\slv_P$ collects all colors currently active in the neighborhood of
    $v_i$ and the processed vertices $P$, while $\dlv_P$ collects those colors
    belonging to \emph{disjoint} bounding lists in this neighborhood. \textsc{Greedy} try to build $A$ including as many colors as possible from $\slv_P\setminus\dlv_P$, and then $\dlv_P$.
    
    Let $x$ denote the number of bounding lists $\bl(u)$ for preserved neighbors $u\in P$ that are not contained in $\dlv_P$, and let $y$ be the number of those that are contained in $\dlv_P$.
    Further, let $z = |\Gamma(v_i)\setminus P|$ be the number of unprocessed neighbors of $v_i$. By definition, $x + y + z = |\Gamma(v_i)| \le \Delta$.
    From \Cref{lemma:lll}, we also have $z = |\Gamma(v_i)| - x - y\le \left(\tfrac{1}{2} + \eta\right)\Delta.$

    Next, $\compress^{A}$ is applied to each $w\in\Gamma(v_i)\setminus P$.
    By \Cref{lem:compress}, after this step every such neighbor satisfies
    $\bl(w) = A \cup \{c_w\}$ for some $c_w\in[q]\setminus A$.
    Accordingly, the disjoint and slack color sets at $v_i$ can be written as
    $$
        \dlv
            \defeq
            \bigcup_{\substack{
                u\in \Gamma(v_i):\\
                \forall w\in\Gamma(v_i),\,w\neq u,\
                \bl(u)\cup\bl(w)=\varnothing
            }}
            \bl(u)
            = \dlv_P \setminus A \setminus
                \{\,c_w : w\in\Gamma(v_i)\setminus P\,\},
    $$
    and
    $$
        \slv
            \defeq \bigcup_{u\in\Gamma(v_i)} \bl(u)
            = (A \cup (\slv_P\setminus\dlv_P)\cup \dlv_P)
                \cup (\{\,c_w : w\in\Gamma(v_i)\setminus P\,\}\setminus A\setminus\dlv_P).
    $$

    Let $u =
        \bigl|\{\,c_w : w\in\Gamma(v_i)\setminus P\,\}
            \cup (\dlv_P\setminus A)\bigr|$
    be the number of distinct colors newly introduced outside $A$
    interact to $\dlv$ during \textsc{CleanUp}.
    By definition of $x,y,z$ and the structure of $\slv$ and $\dlv$ above, we have
    $$
        |\slv|
            \le \max(\Delta,\,1.5x + 2y) + z - u,
            \qquad
        |\dlv|
            \ge \max\bigl(0,\,2y - \max(0,\,\Delta - 1.5x)\bigr) - u,
    $$
    since every bounding list $\bl(v)$ not contained in $\dlv$
    contributes at most $1.5$ colors to $\slv$ by \cite{jain2021perfectly}.
    
    Combining the two inequalities yields
    $$
        |\slv| - \tfrac{1}{2}|\dlv|
            \le
            \max(\Delta,\,1.5x + 2y)
            - \max\left(0,\,y - \max(0,\,\tfrac{1}{2}\Delta - 0.75x)\right)+z
            - \tfrac{1}{2}u.
    $$
    We now upper bound the right-hand side of the above inequality under the constraints
    $$
        x+y+z\le \Delta,\qquad
        z\le \bigl(\tfrac{1}{2}+\eta\bigr)\Delta,\qquad
        u\ge 0.
    $$
    Since the expression is decreasing in $u$, it suffices to consider $u=0$.
    For fixed $z$, the feasible region for $(x,y)$ is the triangle
    $\{\,x,y\ge 0,\; x+y\le \Delta-z\,\}$, on which the objective
    $$
    f(x,y,z)
     = \max(\Delta,\,1.5x+2y)
       - \max\left(0,\,y-\max(0,\tfrac{1}{2}\Delta-0.75x)\right)+z
    $$
    is piecewise linear. When $1.5x\ge\Delta$, $f(x,y,z)=1.5x+y+z\le 1.5\Delta$. When $1.5+2y\le\Delta$, $f(x,y,z)=\Delta+z\le(1.5+\eta)\Delta$. Otherwise $f(x,y,z)=2.25x+y+z-0.5\Delta\le 1.5\Delta$.
    
    Substituting this extremal configuration gives
    $$
        |\slv|-\tfrac{1}{2}|\dlv|
        \le (1.5+\eta)\Delta.
    $$
    This establishes the desired bound and verifies that the precondition of \Cref{lem:disjoint} holds for every vertex $v_i$ during the conversion stage.
\end{proof}

Having completed the deterministic conversion stage, we now analyze the subsequent randomized drifting process. 
Starting from the globally $2$-bounded configuration produced above, we repeatedly apply the $\textsc{Disjoint}$ coupling on uniformly random vertices until full coalescence.

\begin{lemma}[{\cite[Theorem~5.1]{jain2021perfectly}}]
\label{lem:disjoint-drifting}
Assume the process starts from a globally $2$-bounded configuration and that $q > (2.5+\eta)\Delta$, where $\eta=2\sqrt{(\log\Delta+1)/\Delta}$. 
Then, after  $T_2=\frac{2(q-\Delta)n\log n}{q-2.5\Delta}$ iterations of the \textsc{Disjoint} coupling, the bounding chain collapses to a singleton with probability at least $3/4$.
\end{lemma}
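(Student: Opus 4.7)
The approach is a multiplicative drift argument on the potential
\[
\Phi_t = \sum_{v \in V}\bigl(|\bl_t(v)| - 1\bigr) = \bigl|\{v \in V : |\bl_t(v)| = 2\}\bigr|.
\]
By \Cref{lem:conversion} we enter the drifting stage with every list of size in $\{1,2\}$, and by property~(\ref{disjoint-size}) of \Cref{lem:disjoint} each $\disjoint$ update preserves $|\bl_{t+1}(v)|\in\{1,2\}$, so the invariant $|\bl_t(v)|\le 2$ for every $v$ holds throughout. Thus $\Phi_t$ takes integer values in $[0,n]$, and coalescence is equivalent to $\Phi_t = 0$.

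The central step is to show a one-step multiplicative drift
$\E{\Phi_{t+1}\mid\Phi_t} \le (1-\delta)\Phi_t$ with $\delta = \Omega\bigl((q-2.5\Delta)/((q-\Delta)n)\bigr)$. Since a uniform $v \in V$ is updated at each step and only $\bl_t(v)$ can change, the change in $\Phi_t$ is driven by the collapse probability $\Pr{|\bl_{t+1}(v)|=1}$, which depends only on the neighborhood lists and is independent of the current value of $|\bl_t(v)|$. Property~(\ref{disjoint-size}) of \Cref{lem:disjoint}, combined with the combinatorial inequality
\[
|\slv(v)| - \tfrac{1}{2}|\dlv(v)| \ \le\ (1.5 + O(\eta))\,\Delta,
\]
valid for every vertex $v$ in a globally $2$-bounded configuration, yields a uniform lower bound on the per-vertex collapse probability of the form $(q - 2.5\Delta - O(\eta\Delta))/(q - \Delta)$. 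This structural inequality is the Jerrum-style pairing bound adapted to bounding chains, in the spirit of~\cite{jain2021perfectly}: every $2$-bounded neighbor of $v$ either lies in a disjoint pair contributing fully to $\dlv$, or else overlaps with another $2$-bounded neighbor, in which case the worst-case triangle configuration (\Cref{def:worst-case}) caps its effective color contribution at $1.5$.

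Combining these bounds and summing over the random choice of $v$ gives
\[
\E{\Phi_{t+1}\mid\Phi_t} \ \le\ \Phi_t\Bigl(1 \;-\; \tfrac{1}{n}\cdot\tfrac{q - 2.5\Delta - O(\eta\Delta)}{q-\Delta}\Bigr),
\]
so the Multiplicative Drift Theorem together with Markov's inequality yields $\Pr{\Phi_{T_2}\ge 1} \le n\cdot e^{-\delta T_2} \le 1/4$ for the stated $T_2$. The main obstacle I anticipate is the tight accounting in the structural inequality for $|\slv(v)| - |\dlv(v)|/2$: the constant $1.5\Delta$ is essentially optimal by our matching lower bound \Cref{thm:no-coupling-2}, so the pairing argument must be executed carefully for every neighborhood pattern and must correctly absorb the lower-order $O(\eta\Delta)$ slack coming from the \Cref{lemma:lll} partition. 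A secondary subtlety is controlling possible upward jumps at singleton vertices, where $\disjoint$ may take $|\bl(v)|$ from $1$ to $2$; however, since the per-vertex bound depends only on the neighborhood lists, the same uniform inequality applies, and the downward mass on $2$-bounded vertices dominates.
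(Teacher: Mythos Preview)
Your high-level plan matches the paper's remark (and the Phase~I analog): track $\Phi_t=|\{v:|\bl_t(v)|=2\}|$, use \Cref{lem:disjoint} to bound the one-step change, and apply multiplicative drift plus Markov. However, there is a genuine gap in how you obtain the drift inequality.

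The issue is your use of a \emph{uniform} per-vertex bound. From $|\slv(v)|-\tfrac12|\dlv(v)|\le 1.5\Delta$ you get $q_v\defeq\Pr{|\bl_U(v)|=2}\le \tfrac{1.5\Delta}{q-\Delta}$ for every $v$, and you then write down a multiplicative contraction in $\Phi_t$. But the exact one-step drift is
\[
\E{\Phi_{t+1}-\Phi_t\mid \bl_t}\;=\;\frac{1}{n}\Bigl(\sum_{v\in V}q_v-\Phi_t\Bigr),
\]
since a chosen $v\in V_t$ contributes $-(1-q_v)$ and a chosen $v\notin V_t$ contributes $+q_v$. A uniform bound $q_v\le \tfrac{1.5\Delta}{q-\Delta}$ only gives $\sum_v q_v\le \tfrac{1.5\Delta}{q-\Delta}\cdot n$, which is independent of $\Phi_t$; the resulting drift has an additive term that does not vanish as $\Phi_t\to 0$, so you do not get geometric decay. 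Your ``secondary subtlety'' paragraph identifies exactly the problematic upward jumps at singleton vertices, but the resolution you propose (apply the same uniform inequality and assert that downward mass dominates) is precisely what fails.

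What the argument in~\cite{jain2021perfectly} (and, by analogy, the Phase~I proof here) actually uses is the \emph{neighbor-sensitive} inequality
\[
|\slv(v)|-|\qlv(v)|-\tfrac12|\dlv(v)|\;\le\;1.5\,\bigl|\Gamma(v)\cap V_t\bigr|,
\]
coming from the same pairing idea (each $2$-bounded neighbor contributes at most $1.5$ effective colors). This gives $q_v\le \tfrac{1.5}{q-\Delta}\,|\Gamma(v)\cap V_t|$, and then handshaking yields
\[
\sum_{v\in V}q_v\;\le\;\frac{1.5}{q-\Delta}\sum_{v\in V}\bigl|\Gamma(v)\cap V_t\bigr|\;=\;\frac{1.5}{q-\Delta}\sum_{u\in V_t}\deg(u)\;\le\;\frac{1.5\Delta}{q-\Delta}\,\Phi_t,
\]
so $\E{\Phi_{t+1}-\Phi_t\mid\bl_t}\le -\tfrac{q-2.5\Delta}{(q-\Delta)n}\,\Phi_t$, which is the multiplicative drift matching the stated $T_2$. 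Note also that no $O(\eta)$ slack appears: the vertex partition from \Cref{lemma:lll} is not used in the Phase~II drifting stage, so the $\eta$ you carry around is a red herring here.
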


\begin{remark}
    The proof of \Cref{lem:disjoint-drifting} follows the same potential-based
    argument used in the drifting stage of Phase~I.
    The potential function $\Phi_t=|\{v:|\bl_t(v)|=2\}|$ decreases
    geometrically under the contraction bound provided by
    \Cref{lem:disjoint}, leading to exponential convergence in expectation.
\end{remark}

Combining \Cref{lem:conversion} and \Cref{lem:disjoint-drifting} establishes that, for $q \ge (2.5+\eta)\Delta$, Phase~II succeeds with a  probability of $3/4$, thereby completing the proof of coalescence.

\subsection{Wrapping up}
\label{subsec:put-together}

We now combine the results established in the preceding subsections.  Assume $q \ge (2.5+\eta)\Delta$.
\Cref{lem:seeding-init,lem:drifting} 
show that Phase~I (Seeding phase) produces a configuration that is $2$-bounded over the seeding set $S$ 
with probability at least $3/4$.  
Then, \Cref{lem:conversion,lem:disjoint-drifting} 
ensure that, starting from such a configuration, 
Phase~II (Converting and Drifting phase) reduces all bounding lists to singletons, also with probability at least $3/4$.  

By a union bound, a successful coalescence occurs with probability at least $1/2$. 
Together with \Cref{lemma:evaluate-F}, 
the total expected computation time is $O(n\Delta^2 \log \Delta)$, 
which completes the proof of \Cref{lemma:construct-F}.

Combining \Cref{lemma:construct-F,lemma:blocked-cftp-sampler}, 
we obtain the correctness and efficiency guarantees of the CFTP algorithm, thereby completing the proof of \Cref{thm:main}.

\section{Conclusion}
\label{sec:conclusion}

We developed a general framework for constructing and analyzing grand couplings based on bounding chains, and applied it to the problem of perfect sampling of proper $q$-colorings via the coupling-from-the-past (CFTP) method. This framework allows us to formulate the design of bounding chains as an optimization problem, unifying prior constructions and enabling systematic analysis.

Using this approach, we established that $q = 2.5\Delta$ is the tight threshold for bounding-chain–based CFTP algorithms. Our upper bound is achieved through an explicit grand coupling that minimizes the expected size of bounding lists, while the matching lower bound proves that no contractive bounding chain can succeed below this threshold. Together, these results reveal that the bounding chain paradigm is inherently limited by worst-case local configurations that block contraction.

This insight suggests that surpassing the $2.5\Delta$ barrier will likely require moving beyond the bounding-chain framework itself, or at least beyond worst-case contraction analysis.

\bibliographystyle{alpha}
\bibliography{refs} 
\clearpage


\end{document}